\numberwithin{equation}{section}
\newtheorem{thm}{Theorem}[section]
\newtheorem{lem}[thm]{Lemma}
\theoremstyle{definition}
\newtheorem*{exmp}{Example}
\newcommand{\Nat}{\mathds{N}}
\newcommand{\Real}{\mathds{R}}
\DeclareMathOperator{\EW}{\mathds{E}}
\DeclareMathOperator{\Var}{\mathds{V}ar}
\newcommand{\Edges}{\mathcal{E}}
\newcommand{\edge}{\relbar}
\DeclareMathOperator{\ad}{ad}
\DeclareMathOperator{\an}{an}
\DeclareMathOperator{\An}{An}
\DeclareMathOperator{\de}{de}
\DeclareMathOperator{\nd}{nd}
\DeclareMathOperator{\pa}{pa}
\newcommand{\dperp}{{\perp}\!\!\!{\perp}}
\newcommand{\ndperp}{\not{\hspace*{-.45em}{\dperp}}}
\newcommand{\ind}{\mathrel{\dperp}}
\newcommand{\nind}{\mathrel{\ndperp}}
\newcommand{\given}{\mathrel{|}}
\newcommand{\cind}[3]{#1 \ind #2 \given #3}
\newcommand{\ncind}[3]{#1 \nind #2 \given #3}
\newcommand{\sep}[3]{#1 \perp #2 \, [ #3 ]}
\newcommand{\csep}[4]{#1 \perp #2 \given #3 \, [ #4 ]}
\algrenewcommand\algorithmicrequire{\textbf{Input}}
\algrenewcommand\algorithmicensure{\textbf{Output}}
\algrenewcommand{\algorithmiccomment}[1]{\% #1}
\DeclareMathOperator{\diff}{d\!}
\DeclareMathOperator{\Norm}{N}
\DeclareMathOperator{\dash}{\text{---}}
\newcommand{\abs}[1]{\left\lvert #1 \right\rvert}
\title{Pair-copula Bayesian networks}
\author{Alexander Bauer\thanks{Department of Mathematics, Technische Universität München, Boltzmannstr.\ 3, 85748 Garching, Germany. E-mail: \href{mailto:alexander.bauer@tum.de}{\texttt{alexander.bauer@tum.de}}, \href{mailto:cczado@ma.tum.de}{\texttt{cczado@ma.tum.de}}.}~\thanks{Corresponding author.} \and Claudia Czado\footnotemark[1]}
\date{}
\begin{document}

\onehalfspacing
\parindent0em

\maketitle

\begin{abstract}
\noindent \textbf{Abstract.} Pair-copula Bayesian networks (PCBNs) are a novel class of multivariate statistical models, which combine the distributional flexibility of pair-copula constructions (PCCs) with the parsimony of conditional independence models associated with directed acyclic graphs (DAG). We are first to provide generic algorithms for random sampling and likelihood inference in arbitrary PCBNs as well as for selecting orderings of the parents of the vertices in the underlying graphs. Model selection of the DAG is facilitated using a version of the well-known PC algorithm which is based on a novel test for conditional independence of random variables tailored to the PCC framework. A simulation study shows the PC algorithm's high aptitude for structure estimation in non-Gaussian PCBNs. The proposed methods are finally applied to modelling financial return data.
	
\bigskip
	
\noindent \textbf{Key words:} Conditional independence test; copulas; directed acyclic graphs; graphical models; likelihood inference;
PC algorithm; regular vines; structure estimation.
\end{abstract}

\section{Introduction}

Graphical models provide a powerful tool in multivariate statistical analysis aimed at modelling the conditional independence structure of a family of random variables. The conditional independence restrictions observed by a graphical model can be conveniently summarised in a graph whose vertices represent the variables and whose edges indicate interrelations between these variables, see \citet{Lauritzen:1996}. We are particularly interested in the graphical models known as Bayesian networks, whose Markov properties can be represented by a directed acyclic graph (DAG). Areas of applications for these Bayesian networks range from artificial intelligence, decision support systems, and engineering to genetics, geology, medicine, and finance, see \citet{Pourret.Naim.Marcot:2008}. Despite the broad scope of applicability, however, graphical modelling of continuous random variables has mainly been limited to the multivariate normal distribution. Accordingly, available structure estimation algorithms for the DAG underlying a Bayesian network are mainly confined to discrete or Gaussian models. We address both the problems of constructing Bayesian networks with non-Gaussian continuous joint distributions, and of estimating the Markov structure underlying such a non-Gaussian Bayesian network.

\bigskip

Our solution to the first problem of deriving non-Gaussian distributions with pre-specified conditional independence properties is based on so-called pair-copula constructions (PCCs). By iterated application of Sklar's theorem on copulas \citep{Sklar:1959}, \citet{Kurowicka.Cooke:2005} and \citet{Bauer.Czado.Klein:2012} have shown that every continuous multivariate distribution associated with a DAG can be decomposed into a family of bivariate, potentially conditional distributions, which correspond to the edges of the underlying graph. An explicit representation of the respective probability density function (pdf) was, however, only derived in examples. We provide a novel algorithm for evaluating the pdf of an arbitrary Bayesian network PCC.

\bigskip

The flexibility of these pair-copula Bayesian networks (PCBNs) allows for the capturing of a wide range of distributional features to be modelled such as heavy-tailedness, tail dependence, and non-linear, asymmetric dependence. Further investigations on PCBNs include \citet{Hanea.Kurowicka.Cooke:2006, Hanea.Kurowicka.Cooke.Ababei:2010} and \citet{Hanea.Kurowicka:2008}. While these authors concentrate on non-parametric statistical inference and elicited expert knowledge, we focus attention to parametric likelihood inference and data-driven structure estimation. We also provide routines for copula selection and enumeration of the parents of the vertices of the underlying DAG.

\bigskip

When expert knowledge on the underlying Markov structure is unavailable, data-driven structure estimation algorithms are frequently used. Two approaches are predominantly found in the literature: the constraint-based and the score-and-search-based approach \citep[Chapter $18$]{Koller.Friedman:2009}. In the former, the DAG is inferred from a series of conditional independence tests, while in the latter, the DAG is found by optimising a given scoring function. We concentrate on the popular constraint-based PC algorithm by \citet{Spirtes.Glymour:1991}, and demonstrate its aptitude for structure estimation in non-Gaussian PCBNs in an extensive simulation study. In particular, we introduce a novel test for conditional independence of continuous random variables which is based on the closely related regular-vine copula models \citep{Bedford.Cooke:2001, Bedford.Cooke:2002}, and which is of interest on its own merits. This novel test will prove to outperform a standard test for zero partial correlation used in the Gaussian setting.

\bigskip

With their focus on conditional independence, PCBNs are generally more parsimonious than regular-vine copula models. Another copula decomposition of a joint distribution associated with a DAG which uses generally higher-variate copulas---and therefore lacks the flexibility of the pair-copula approach---was investigated by \citet{Elidan:2010a, Elidan:2012b}.

\bigskip

The paper is organised as follows. In Section \ref{sec:bn}, we give a short review of Bayesian networks, followed by a review of vine copula models in Section \ref{sec:vine}. In Section \ref{sec:pcbn}, we provide an algorithm for evaluating the pdf of a PCC associated with a DAG as well as routines for simulation, model selection, and likelihood inference in PCBNs. We review the PC algorithm in Section \ref{sec:pc} and introduce a novel test for conditional independence of continuous random variables. The PC algorithm's aptitude for structure estimation in non-Gaussian PCBNs is explored in a simulation study in Section \ref{sec:simstudy}. Section \ref{sec:finance} presents an application of PCBNs to financial return data, and the paper concludes with a brief discussion in Section \ref{sec:conclusion}. The paper is designed to be self-contained and to unify the various non-standard notations on Bayesian networks found in the literature.

\section{Bayesian networks}\label{sec:bn}

We begin by fixing some graph theoretical terminology. Let $V \neq \emptyset$ be a finite set and let $E \subseteq \Edges \coloneqq \bigl\lbrace (v, w) \in V \times V \, \big\vert \, v \neq w \bigr\rbrace$. Then $\mathcal{G} = (V, E)$ denotes a \emph{graph} with \emph{vertex set} $V$ and \emph{edge set} $E$. We say that $\mathcal{G}$ contains the \emph{undirected edge} $v \edge w$ if $(v, w) \in E$ and $(w, v) \in E$. Similarly, we say that $V$ contains the \emph{directed edge} $v \rightarrow w$ if $(v, w) \in E$ but $(w, v) \notin E$. A graph containing only undirected edges is called an \emph{undirected graph} (UG). If $E \equiv \Edges$, we call $\mathcal{G}$ the \emph{complete UG} on $V$. A graph containing only directed edges is called a \emph{directed graph}. By replacing all directed edges of $\mathcal{G}$ with undirected edges, we obtain the \emph{skeleton} $\mathcal{G}^{s}$ of $\mathcal{G}$. We write $v \multimap w$ whenever $(v, w) \in E$, that is $\mathcal{G}$ contains either the directed edge $v \rightarrow w$ or the undirected edge $v \edge w$. A sequence of distinct vertices $v_{1}, \ldots, v_{k} \in V$, $k \ge 2$, is called a \emph{path} from $v_{1}$ to $v_{k}$ if $\mathcal{G}$ contains $v_{i} \multimap v_{i + 1}$ for all $i \in \lbrace 1, \ldots, k - 1 \rbrace$. A path from $v_{1}$ to $v_{k}$ is called \emph{directed} if at least one of the connecting edges is directed. We call a path from $v_{1}$ to $v_{k}$ a \emph{cycle} if $v_{1} = v_{k}$. In particular, we call a directed path from $v_{1}$ to $v_{k}$ a \emph{directed cycle} if $v_{1} = v_{k}$. A graph without directed cycles is called a \emph{chain graph} (CG). A CG containing only directed edges is known as a \emph{directed acyclic graph} (DAG). We define the \emph{adjacency set} of a vertex $v \in V$ as $\ad(v) \coloneqq \bigl\lbrace w \in V \, \big\vert \, (v, w) \in E \text{ or } (w, v) \in E \bigr\rbrace$. If $w \notin \ad(v)$, we say that $v$ and $w$ are \emph{non-adjacent}. A triple of vertices $(u, v, w)$ is called a \emph{v-structure} if $\mathcal{G}$ contains $u \rightarrow v \leftarrow w$ and if $u$ and $w$ are non-adjacent.

\bigskip

Now let $\mathcal{G}$ be a DAG. The \emph{moral graph} $\mathcal{G}^{m}$ of $\mathcal{G}$ is defined as the skeleton of the graph obtained from $\mathcal{G}$ by introducing an undirected edge $u \edge w$ whenever $\mathcal{G}$ contains a v-structure $(u, v, w)$ for $u, v, w \in V$. Since all edges of $\mathcal{G}$ are directed, we can speak of paths instead of directed paths. For $v \in V$, we call $\pa(v) \coloneqq \bigl\lbrace w \in V \, \big\vert \, \mathcal{G} \text{ contains } w \rightarrow v \bigr\rbrace$ the \emph{parents} of $v$, $\an(v) \coloneqq \bigl\lbrace w \in V \, \big\vert \, \mathcal{G} \text{ contains a path from } w \text{ to } v \bigr\rbrace$ the \emph{ancestors} of $v$, $\de(v) \coloneqq \bigl\lbrace w \in V \, \big\vert \, \mathcal{G} \text{ contains a path from } v \text{ to } w \bigr\rbrace$ the \emph{descendants} of $v$, and $\nd(v) \coloneqq V \setminus \bigl( \lbrace v \rbrace \cup \de(v) \bigl)$ the \emph{non-descendants} of $v$. A set $I \subseteq V$ is called \emph{ancestral} if $\pa(v) \subseteq I$ for all $v \in I$. The smallest ancestral set containing $I$ is denoted by $\An(I)$. As is readily verified, $\An(I) = I \cup \bigcup_{v \in I} \an(v)$. The graph $\mathcal{G}_{I} = \bigl( I, E \cap (I \times I) \bigr)$ is called the \emph{subgraph} of $\mathcal{G}$ induced by $I$. A bijection $v_{\bullet} \colon \bigl\lbrace 1, \ldots, \abs{V} \bigr\rbrace \rightarrow V$, $i \mapsto v_{i}$, satisfying $i < j$ whenever $\mathcal{G}$ contains $v_{i} \rightarrow v_{j}$ for some $i, j \le \abs{V}$ is called a \emph{well-ordering} of $\mathcal{G}$. Note that in a well-ordered DAG the set $\lbrace v_{1}, \ldots, v_{k} \rbrace$ is ancestral for all $k \le \abs{V}$.

\bigskip

Finally, let $\mathcal{G}$ be a UG and let $I, J, K \subseteq V$ be pairwise disjoint. A path from $I$ to $J$ is a path from a vertex $v \in I$ to a vertex $w \in J$. We say that $K$ \emph{separates} $I$ from $J$ in $\mathcal{G}$, and write $\csep{I}{J}{K}{\mathcal{G}}$, if every path from $I$ to $J$ contains a vertex in $K$. In particular, we write $\csep{I}{J}{\emptyset}{\mathcal{G}}$, or shortly $\sep{I}{J}{\mathcal{G}}$, if there exists no path between $I$ and $J$. We call $\mathcal{G}$ \emph{connected} if for every distinct $v, w \in V$ there is a path from $v$ to $w$. A connected UG without cycles is a \emph{tree}. If there is a vertex $w \in V$ such that $\ad(w) = V \setminus \lbrace w \rbrace$ and $\ad(v) = \lbrace w \rbrace$ for all $v \in V \setminus \lbrace w \rbrace$, that is all vertices are solely adjacent to $w$, then $\mathcal{G}$ is called a \emph{star} and $w$ is called its \emph{root vertex}. Note that above terminology is not used consistently throughout the literature. 

\subsubsection*{Markovian probability measures}

In graphical probability modelling, graphs are used to represent conditional independence properties of corresponding families of probability measures. Let $\mathcal{D} = (V, E)$ be a DAG on $d \coloneqq \abs{V}$ vertices and let $P$ be a probability measure on $\Real^{d}$. Moreover, let $\boldsymbol{X}$ be an $\Real^{d}$-valued random variable distributed as $P$. For $I \subseteq V$, we write $\boldsymbol{X}_{I} \coloneqq (X_{v})_{v \in I}$ and denote the corresponding $I$-margin of $P$ by $P_{I}$. If $I = \lbrace v \rbrace$ for some $v \in V$, we write $X_{v}$ and $P_{v}$ instead of $X_{\lbrace v \rbrace}$ and $P_{\lbrace v \rbrace}$. Furthermore, we write $\cind{\boldsymbol{X}_{I}}{\boldsymbol{X}_{J}}{\boldsymbol{X}_{K}}$ whenever $\boldsymbol{X}_{I}$ and $\boldsymbol{X}_{J}$ are \emph{conditionally independent} given $\boldsymbol{X}_{K}$ for pairwise disjoint sets $I, J, K \subseteq V$. By convention, $\cind{\boldsymbol{X}_{I}}{\boldsymbol{X}_{J}}{\boldsymbol{X}_{\emptyset}}$ is understood as $\boldsymbol{X}_{I} \ind \boldsymbol{X}_{J}$. $P$ is said to possess the \emph{local $\mathcal{D}$-Markov} property if
\begin{equation}\label{eq:localmarkov}
	\cind{X_{v}}{\boldsymbol{X}_{\nd(v) \setminus \pa(v)}}{\boldsymbol{X}_{\pa(v)}} \quad \text{for all } v \in V.
\end{equation}
Correspondingly, $P$ is said to possess the \emph{global $\mathcal{D}$-Markov} property if
\begin{equation}\label{eq:globalmarkov}
	\csep{I}{J}{K}{(\mathcal{D}_{\An(I \cup J \cup K)})^{m}} \ \Rightarrow \ \cind{\boldsymbol{X}_{I}}{\boldsymbol{X}_{J}}{\boldsymbol{X}_{K}} \quad \text{for all pairwise disjoint } I, J, K \subseteq V.
\end{equation}
Equations \eqref{eq:localmarkov} and \eqref{eq:globalmarkov} relate (conditional) independence properties of $P$ to graph separation properties of $\mathcal{D}$. Since $\ad(v) \cap \bigl( \nd(v) \setminus \pa(v) \bigr) = \emptyset$ for every $v \in V$, it can be easily seen that the conditional independence restrictions obtained from Equation \eqref{eq:localmarkov} correspond to missing edges in $\mathcal{D}$. One can show that $P$ has the local $\mathcal{D}$-Markov property if and only if $P$ has the global $\mathcal{D}$-Markov property, see \citet[p.\ $51$]{Lauritzen:1996}. A probability measure satisfying Equations \eqref{eq:localmarkov} and \eqref{eq:globalmarkov} is thus simply called \emph{$\mathcal{D}$-Markovian}. Despite the aforementioned equivalence, the lists of \emph{explicit} conditional independence restrictions obtained from Equations \eqref{eq:localmarkov} and \eqref{eq:globalmarkov} may, however, be of different lengths. Note that a $\mathcal{D}$-Markovian probability measure can exhibit further conditional independence properties apart from those represented by $\mathcal{D}$. If, however, $P$ exhibits no conditional independence properties other than those represented by $\mathcal{D}$, then $P$ is called \emph{faithful} to $\mathcal{D}$. Now let $P$ have Lebesgue-density $f$. One can show that $P$ is $\mathcal{D}$-Markovian if and only if $f$ has a so-called \emph{$\mathcal{D}$-recursive factorisation}, that is
\begin{equation*}
	f(\boldsymbol{x}) = \prod_{v \in V} f_{v \vert \! \pa(v)} \bigl( x_{v} \, \big\vert \, \boldsymbol{x}_{\pa(v)} \bigr) \quad \text{for all } \boldsymbol{x} = (x_{1}, \ldots, x_{d}) \in \Real^{d},
\end{equation*}
where $f_{v \vert \! \pa(v)}( \, \cdot \given \boldsymbol{x}_{\pa(v)})$ denotes the conditional probability density function (pdf) of $X_{v}$ given $\boldsymbol{X}_{\pa(v)} = \boldsymbol{x}_{\pa(v)}$, see again \citet[p.\ $51$]{Lauritzen:1996}. Note that there may be more than one DAG representing the same set of conditional independence restrictions. We call the set of DAGs representing the same conditional independence restrictions as $\mathcal{D}$ the \emph{Markov-equivalence class} of $\mathcal{D}$, and denote it by $[\mathcal{D}]$. Two DAGs $\mathcal{D}_{1} = (V, E_{1})$ and $\mathcal{D}_{2} = (V, E_{2})$ are called \emph{Markov equivalent} if $[\mathcal{D}_{1}] = [\mathcal{D}_{2}]$. By \citet{Verma.Pearl:1991}, $\mathcal{D}_{1}$ and $\mathcal{D}_{2}$ are Markov equivalent if and only if they have the same skeleton and the same v-structures. The Markov-equivalence class of $\mathcal{D}$ can be represented by a CG, the so-called \emph{essential graph} $\mathcal{D}^{e}$ associated with $[\mathcal{D}]$, which has the same skeleton as $\mathcal{D}$ and contains a directed edge $v \rightarrow w$ if and only if all members of $[\mathcal{D}]$ contain $v \rightarrow w$, see \citet{Andersson.Madigan.Perlman:1997}. A DAG in $[\mathcal{D}]$ can be obtained from $\mathcal{D}^{e}$ by directing all undirected edges of $\mathcal{D}^{e}$ such that no new v-structures and no directed cycles are introduced. Figure \ref{fig:dagessgraph} gives an example of a DAG on four vertices together with the essential graph associated with the corresponding Markov-equivalence class.

\begin{figure}[htb]
	\centering
		\begin{subfigure}[t]{.45\textwidth}
			\centering \includegraphics[width=.5\textwidth]{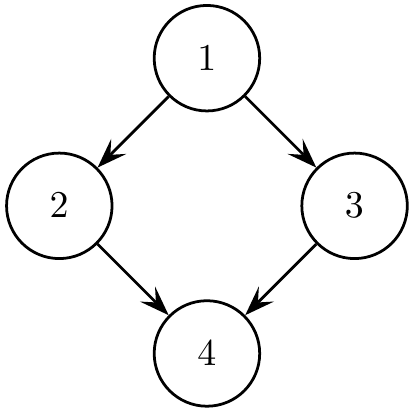}
		\end{subfigure}
		\begin{subfigure}[t]{.45\textwidth}
			\centering \includegraphics[width=.5\textwidth]{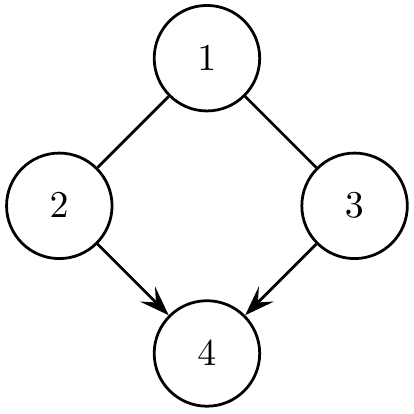}
		\end{subfigure}
	\caption{A DAG $\mathcal{D}$ (left) specifying the conditional independence restrictions $\cind{X_{1}}{X_{4}}{\boldsymbol{X}_{23}}$ and $\cind{X_{2}}{X_{3}}{X_{1}}$, and the essential graph $\mathcal{D}^{e}$ (right) associated with the corresponding Markov-equivalence class $[\mathcal{D}]$.}
	\label{fig:dagessgraph}
\end{figure}

\subsubsection*{Graphical models}

A \emph{Bayesian network} or \emph{(directed) graphical model} based on $\mathcal{D}$ is a family of $\mathcal{D}$-Markovian probability measures. A comprehensive introduction to graphical models, and Bayesian networks in particular, is found in \citet{Lauritzen:1996} and \citet{Cowell.Dawid.Lauritzen.Spiegelhalter:2003}, see also \citet{Pourret.Naim.Marcot:2008} for examples of applications. For lack of tractable continuous probability measures, statistical modelling with Bayesian networks has mostly been limited to multivariate discrete or normal distributions. \citet{Kurowicka.Cooke:2005} therefore used copulas to derive a rich and tractable class of continuous Bayesian networks, which we will investigate in Section \ref{sec:pcbn}.

\section{Vine copula models}\label{sec:vine}

A $d$-variate \emph{copula}, $d \in \Nat$, is a cumulative distribution function (cdf) on $[0, 1]^{d}$ such that all univariate marginals are uniform on the interval $[0, 1]$. By \emph{Sklar's theorem} \citep{Sklar:1959}, every cdf $F$ on $\Real^{d}$ with marginals $F_{1}, \ldots, F_{d}$ can be written as
\begin{equation*}
	F(\boldsymbol{x}) = C \bigl( F_{1}(x_{1}), \ldots, F_{d}(x_{d}) \bigr), \quad \boldsymbol{x} = (x_{1}, \ldots, x_{d}) \in \Real^{d},
\end{equation*}
for some suitable copula $C$. If $F$ is absolutely continuous and $F_{1}, \ldots, F_{d}$ are strictly increasing, a similar relationship holds for the pdf $f$ of $F$, namely
\begin{equation*}
	f(\boldsymbol{x}) = c \bigl( F_{1}(x_{1}), \ldots, F_{d}(x_{d}) \bigr) \prod_{i = 1}^{d} f_{i}(x_{i}), \quad \boldsymbol{x} = (x_{1}, \ldots, x_{d}) \in \Real^{d},
\end{equation*}
where the \emph{copula density} $c$ is uniquely determined. A comprehensive introduction to copulas is found in \citet{Joe:1997} and \citet{Nelsen:2006}.

\subsection{Pair-copula constructions and regular vines}

While in recent years a vast catalogue of bivariate copula families (also known as \emph{pair-copula} families) has accumulated in the literature, many of these bivariate families have no straightforward multivariate extension. Based on \citet{Joe:1996}, \citet{Bedford.Cooke:2001, Bedford.Cooke:2002} introduced a rich and flexible class of multivariate copulas that uses bivariate (conditional) copulas as building blocks only. The corresponding decomposition of a multivariate copula into bivariate copulas is called a \emph{pair-copula construction} (PCC). The most widely researched copulas arising from PCCs are the \emph{vine copulas}. These vine copulas admit a graphical representation called a \emph{regular vine} (R-vine), which essentially consists of a sequence of trees, each edge of which is associated with a certain pair copula in the corresponding PCC. More precisely, let $V \neq \emptyset$ be a finite set and let $d \coloneqq \abs{V}$. An R-vine on $V$ is a sequence $\mathcal{V} \coloneqq (T_{1}, \ldots, T_{d - 1})$ of trees $T_{1} = (V_{1}, E_{1}), \ldots, T_{d - 1} = (V_{d - 1}, E_{d  - 1})$ such that $V_{1} = V$ and $V_{i} = E_{i - 1}$ for $i \ge 2$, that is the vertices of tree $T_{i}$ are the edges of tree $T_{i - 1}$. We here represent an edge $v \edge w$ in tree $T_{i}$, $i \in \lbrace 1, \ldots, d - 1 \rbrace$, by the doubleton $\lbrace v, w \rbrace$ instead of by the pairs $(v, w)$ and $(w, v)$, that is $E_{i} \subseteq \bigl\lbrace \lbrace v, w \rbrace \, \big\vert \, v \neq w \in V_{i} \bigr\rbrace$. Moreover, every tree $T_{i}$, $i \ge 2$, of $\mathcal{V}$ has to satisfy a \emph{proximity condition} requiring that $\abs{v \vartriangle w} = 2$ for every edge $\lbrace v, w \rbrace \in E_{i}$, where $u \vartriangle v = (u \cup v) \setminus (u \cap v)$. Two vertices in tree $T_{i}$, $i \ge 2$, can hence only be adjacent if the corresponding edges in tree $T_{i - 1}$ share a common vertex. Last, every edge $\lbrace v, w \rbrace \in E \coloneqq E_{1} \cup \cdots \cup E_{d - 1}$ carries a label $v \vartriangle w \, \vert \, v \cap w$ representing the (conditional) pair copula $C_{v \vartriangle w \vert v \cap w}$, where $v \vartriangle w \, \vert \,\emptyset$ is conveniently replaced by $v \vartriangle w$. Instead of $C_{v \vartriangle w \vert v \cap w}$ we also write $C_{v_{\vartriangle}, w_{\vartriangle} \vert v \cap w}$, where $v_{\vartriangle} \coloneqq v \setminus (v \cap w)$ and $w_{\vartriangle} \coloneqq w \setminus (v \cap w)$. The pdf $f$ of a $d$-variate probability measure with univariate marginals $F_{v}$, $v \in V$, and copula $C_{\mathcal{V}}$ corresponding to $\mathcal{V}$ then takes the form
\begin{equation}\label{eq:rvinepdf}
	f(\boldsymbol{x}) = \prod_{\lbrace v, w \rbrace \in E} c_{v_{\vartriangle}, w_{\vartriangle} \vert v \cap w} \bigl( F_{v_{\vartriangle} \vert v \cap w}(x_{v_{\vartriangle}} \, \vert \, \boldsymbol{x}_{v \cap w}), F_{w_{\vartriangle} \vert v \cap w}(x_{w_{\vartriangle}} \, \vert \, \boldsymbol{x}_{v \cap w}) \, \big\vert \, \boldsymbol{x}_{v \cap w} \bigr) \prod_{v \in V} f_{v}(x_{v}),
\end{equation}
where $\boldsymbol{x} = (x_{v})_{v \in V} \in \Real^{d}$. Note that---similar to DAGs---the vertices in the first tree of $\mathcal{V}$ represent the univariate margins of $C_{\mathcal{V}}$. In contrast to DAGs, however, $\mathcal{V}$ does not have an interpretation in terms of Markov properties of $C_{\mathcal{V}}$. An example of an R-vine representing a five-variate vine copula is given in Figure \ref{fig:rvine}.

\bigskip

\begin{figure}[htb]
	\centering
	\includegraphics[width=.75\textwidth]{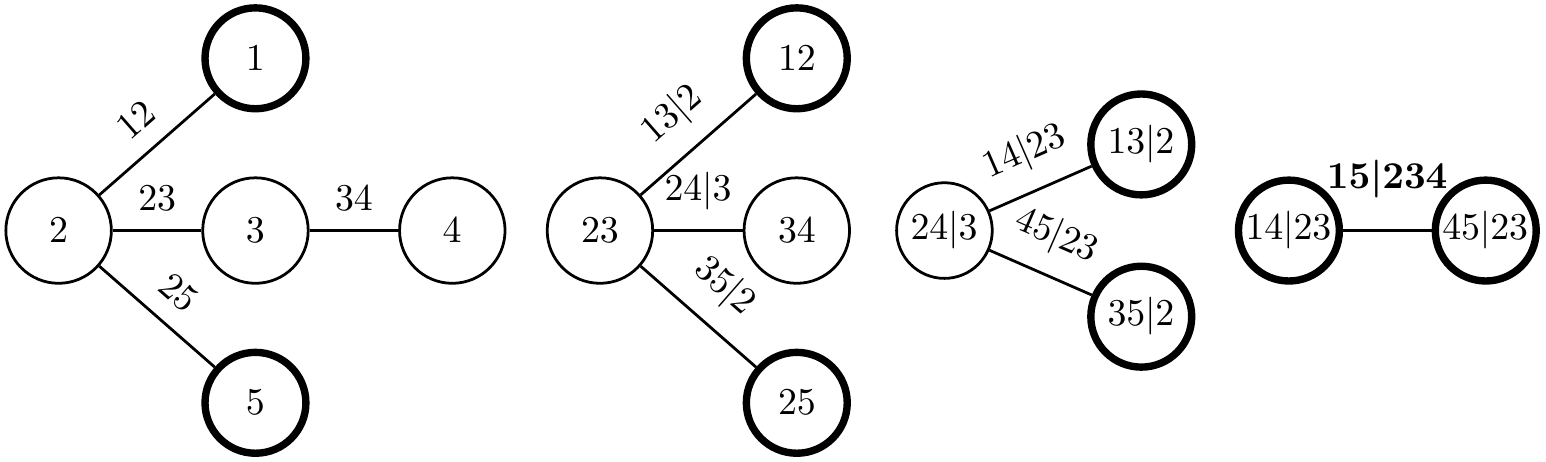}
	\caption{An R-vine specifying the pair copulas $C_{12}$, $C_{23}$, $C_{25}$, $C_{34}$, $C_{13 \vert 2}$, $C_{24 \vert 3}$, $C_{35 \vert 2}$, $C_{14 \vert 23}$, $C_{45 \vert 23}$, and $C_{15 \vert 234}$ (edge labels). Boundaries of vertices including either $1$ or $5$ appear in bold, see Section \ref{sec:pc}.}
	\label{fig:rvine}
\end{figure}

For every $K \subseteq V$ and $v \in K$, we define $K_{-v} \coloneqq K \setminus \lbrace v \rbrace$.  The conditional cdfs in Equation \eqref{eq:rvinepdf} can be evaluated tree-by-tree using a recursive formula derived in \citet{Joe:1996}, which says that for every $v \in V$, every $K \subseteq V_{-v}$, and an arbitrary $w \in K$
\begin{equation}\label{eq:condcdf}
	F_{v \vert K}(x_{v} \, \vert \, \boldsymbol{x}_{K}) = \frac{\partial C_{v, w \vert K_{-w}}{\bigl( F_{v \vert K_{-w}}(x_{v} \, \vert \, \boldsymbol{x}_{K_{-w}}), F_{w \vert K_{-w}}(x_{w} \, \vert \, \boldsymbol{x}_{K_{-w}}) \, \big| \, \boldsymbol{x}_{K_{-w}} \bigr)}}{\partial F_{w \vert K_{-w}}(x_{w} \, \vert \, \boldsymbol{x}_{K_{-w}})}.
\end{equation}
An iterative algorithm for evaluating the pdf in Equation \eqref{eq:rvinepdf} under a simplifying assumption of constant conditional copulas introduced below is given in \citet{Dissmann.Brechmann.Czado.Kurowicka:2012}. The first partial derivatives of a pair copula $C_{v, w}$ are also known as \emph{h-functions}. We write
\begin{equation*}
	h_{\underline{v}, w}(u_{v}, u_{w}) \coloneqq \frac{\partial C_{v, w}(u_{v}, u_{w})}{\partial u_{w}} \quad \text{and} \quad h_{v, \underline{w}}(u_{v}, u_{w}) \coloneqq \frac{\partial C_{v, w}(u_{v}, u_{w})}{\partial u_{v}}, \quad (u_{v}, u_{w}) \in [0, 1]^{2}.
\end{equation*}
Many popular pair-copula families exhibit closed-form expressions for these h-functions, see for instance \citet{Aas.Czado.Frigessi.Bakken:2009}. Note that by Equation \eqref{eq:condcdf} we have
\begin{equation*}
	h_{\underline{v}, w} \bigl( F_{v}(x_{v}), F_{w}(x_{w}) \bigr) = F_{v \vert w}(x_{v} \, \vert \, x_{w}) \quad \text{and} \quad h_{v, \underline{w}} \bigl( F_{v}(x_{v}), F_{w}(x_{w}) \bigr) = F_{w \vert v}(x_{w} \, \vert \, x_{v}) ,
\end{equation*}
where $(x_{v}, x_{w}) \in \Real^{2}$. Hence, we can extend the notion of h-functions to conditional pair copulas, and express the right hand side of Equation \eqref{eq:condcdf} by
\begin{equation*}
	h_{\underline{v}, w \vert K_{-w}} \! \bigl( F_{v \vert K_{-w}}(x_{v} \, \vert \, \boldsymbol{x}_{K_{-w}}), F_{w \vert K_{-w}}(x_{w} \, \vert \, \boldsymbol{x}_{K_{-w}}) \, \big| \, \boldsymbol{x}_{K_{-w}} \bigr).
\end{equation*}
Assume $p \coloneqq \abs{K} \ge 1$, and write $K = \lbrace w_{1}, \ldots, w_{p} \rbrace$ such that $w_{i} \neq w_{j}$ for $i \neq j$. We define $K_{-i} \coloneqq \lbrace w_{i + 1}, \ldots, w_{p} \rbrace$ for every $i \in \lbrace 1, \ldots, p \rbrace$. Observing that $f_{v \vert K}(x_{v} \, \vert \, \boldsymbol{x}_{K}) = \frac{\diff{}}{\diff x_{v}} F_{v \vert K}(x_{v} \, \vert \, \boldsymbol{x}_{K})$, we obtain by the chain rule of differentiation
\begin{equation}\label{eq:chainrule}
	f_{v \vert K}(x_{v} \, \vert \, \boldsymbol{x}_{K}) = f_{v}(x_{v}) \prod_{i = 1}^{p} c_{v, w_{i} \vert K_{-i}}{\bigl( F_{v \vert K_{-i}}(x_{v} \, \vert \, \boldsymbol{x}_{K_{-i}}), F_{w_{i} \vert K_{-i}}(x_{w_{i}} \, \vert \, \boldsymbol{x}_{K_{-i}}) \, \big| \, \boldsymbol{x}_{K_{-i}} \bigr)}.
\end{equation}

\subsection{ML estimation and model selection in vine copula models}\label{sec:vinestat}

A \emph{vine copula model} is a family of vine copulas together with families of univariate marginals. Maximum likelihood (ML) estimation in vine copula models was first considered in \citet{Aas.Czado.Frigessi.Bakken:2009}. The findings therein were, however, restricted to vine copula models represented by C- and D-vines. A \emph{C-vine} is an R-vine whose trees are all stars. Conversely, an R-vine is called a \emph{D-vine} if all vertices in tree $T_{1}$ are adjacent to at most two other vertices. ML estimation in vine copula models based on general R-vines was considered in \citet{Dissmann.Brechmann.Czado.Kurowicka:2012}.

\bigskip

Let $\mathcal{V}$ be an R-vine on $V$ with edge set $E$, and let $C_{v_{\vartriangle}, w_{\vartriangle} \vert v \cap w}(\, \cdot \, , \cdot \, ; \boldsymbol{\theta}_{v_{\vartriangle}, w_{\vartriangle} \vert v \cap w})$, $\lbrace v, w \rbrace \in E$, be given (conditional) pair copulas with joint parameter vector $\boldsymbol{\theta} \coloneqq (\boldsymbol{\theta}_{v_{\vartriangle}, w_{\vartriangle} \vert v \cap w})_{\lbrace v, w \rbrace \in E} \in \boldsymbol{\Theta}$. We denote the corresponding vine copula family by $\lbrace C_{\mathcal{V}, \boldsymbol{\theta}} \, \vert \, \boldsymbol{\theta} \in \boldsymbol{\Theta} \rbrace$. Note that we dropped the values $\boldsymbol{x}_{u \cap v}$ of the conditioning variables from the pair copulas $C_{v_{\vartriangle}, w_{\vartriangle} \vert v \cap w}$, thus assuming that the corresponding copula family and parameter vector $\boldsymbol{\theta}_{v_{\vartriangle}, w_{\vartriangle} \vert v \cap w}$ remain constant for all $\boldsymbol{x}_{u \cap v} \in \Real^{\abs{u \cap v}}$. This \emph{simplifying assumption} is made for computational convenience and has become common practice in likelihood inference for vine copula models, see \citet{Haff.Aas.Frigessi:2010} and \citet{Acar.Genest.Neslehova:2012} for a critical assessment. Furthermore, let $\boldsymbol{u} = \left( \boldsymbol{u}^{1}, \ldots, \boldsymbol{u}^{n} \right)$, $n \in \Nat$, be a realisation of a sample of i.i.d.\ observations $\boldsymbol{U}^{1}, \ldots, \boldsymbol{U}^{n}$ from a random variable $\boldsymbol{U}$ on $[0, 1]^{d}$ with copula family $\lbrace C_{\mathcal{V}, \boldsymbol{\theta}} \, \vert \, \boldsymbol{\theta} \in \boldsymbol{\Theta} \rbrace$ and uniform univariate margins. Equation \eqref{eq:rvinepdf} yields the log-likelihood function
\begin{equation}\label{eq:rvinell}
	l(\boldsymbol{\theta}; \boldsymbol{u}) = \sum_{k = 1}^{n} \sum_{\lbrace v, w \rbrace \in E} \!\! \log c_{v_{\vartriangle}, w_{\vartriangle} \vert v \cap w} \Bigl( F_{v_{\vartriangle} \vert v \cap w} \bigl( u_{v_{\vartriangle}}^{k} \, \big\vert \, \boldsymbol{u}_{v \cap w}^{k}; \boldsymbol{\theta} \bigr), F_{w_{\vartriangle} \vert v \cap w} \bigl( u_{w_{\vartriangle}}^{k} \, \big\vert \, \boldsymbol{u}_{v \cap w}^{k}; \boldsymbol{\theta} \bigr); \boldsymbol{\theta} \Bigr).
\end{equation}
The restriction to uniform univariate margins is made for computational convenience, see~below.

\subsubsection*{ML estimation} \enlargethispage{\baselineskip}

Since a joint estimation of the parameters of the univariate marginal distributions and the copula can become computationally demanding in high dimensions, a two-step estimation approach known as the \emph{inference functions for margins} method \citep{Joe.Xu:1996} is frequently applied. First, the marginal parameters are estimated and second, given the estimates of the marginal parameters, the copula parameters are inferred. In a similar vein, \citet{Genest.Ghoudi.Rivest:1995} proposed a semiparametric approach in which the empirical cdf is used to transform the univariate marginals to uniform $[0, 1]$ distributions before estimating the parameters of the copula model, see \citet{Kim.Silvapulle.Silvapulle:2007} for a comparison. ML estimation of the parameters in Equation \eqref{eq:rvinell} is frequently performed using a stepwise approach as first described in \citet{Aas.Czado.Frigessi.Bakken:2009}. In a first step, ML estimates of the parameters of each pair-copula family are computed separately. Due to the recursive structure of the log-likelihood function outlined above, this estimation step is carried out tree-by-tree. We refer to the obtained parameter estimates as \emph{sequential ML estimates}. In a second step, the full log-likelihood function is maximised jointly using the sequential ML estimates as starting values, yielding the so-called \emph{joint ML estimates} $\boldsymbol{\widehat{\theta}}_{v_{\vartriangle}, w_{\vartriangle} \vert v \cap w}$, $\lbrace v, w \rbrace \in E$. Large and small sample applications of the stepwise estimation procedure have shown that the sequential ML estimates also provide a good approximation of their joint counterparts, see \citet{Haff:2012a, Haff:2012b} for consistency results and a simulation study. One might hence consider omitting the second estimation step in a given situation to reduce computational complexity.

\subsubsection*{Model selection} \enlargethispage{\baselineskip}

Model selection for vine copula models comprises an estimation of the R-vine $\mathcal{V}$ and a selection of the pair-copula families for $C_{v_{\vartriangle}, w_{\vartriangle} \vert v \cap w}$, $\lbrace v, w \rbrace \in E$. Given $\mathcal{V}$, the latter task of selecting pair-copula families can be performed tree-by-tree, choosing for each edge $\lbrace v, w \rbrace \in E$ the one pair-copula family among a given set of candidate families that optimises a given selection criterion like Akaike's information criterion (AIC) or the Bayesian information criterion (BIC). \citet{Dissmann.Brechmann.Czado.Kurowicka:2012} presented a greedy-type algorithm for the estimation of $\mathcal{V}$, which estimates the trees $T_{1}, \ldots, T_{d - 2}$ sequentially, that is again tree-by-tree. Note that estimating tree $T_{d - 2}$ also fixes tree $T_{d - 1}$. Structure estimation for tree $T_{i} = (V_{i}, E_{i})$, $i \in \lbrace 1, \ldots, d - 2 \rbrace$, is carried out in three steps. In a first step, a weight $\omega_{v, w}$ is assigned to every pair of vertices $v, w \in V_{i}$ with $\abs{v \vartriangle w} = 2$. Suitable weights given the data are, for instance, the absolute values of estimates of Kendall's $\tau$, or AIC or BIC values of selected pair-copula families with estimated parameters. In a second step, $T_{i}$ is set to be a tree on $V_{i}$ optimising the sum of edge weights $\sum_{\lbrace v, w \rbrace \in E_{i}} \omega_{v, w}$, where $\abs{v \vartriangle w} = 2$ for all $\lbrace v, w \rbrace \in E_{i}$ to ensure the proximity condition. Such an \emph{optimal spanning tree} can be found using the algorithms by \citet{Kruskal:1956} or \citet{Prim:1957}. In a last step, a pair-copula family is assigned to each edge $\lbrace v, w \rbrace \in E_{i}$, as described above, and an ML estimate of the corresponding parameter(s) is computed. This last step may have already been performed when computing the edge weights $\omega_{v, w}$. Note that due to the greedy nature of the algorithm, the resulting R-vine need not optimise the sum of all edge weights $\sum_{\lbrace v, w \rbrace \in E} \omega_{v, w}$. The search for optimal spanning trees reduces to a search for root vertices when only considering C-vines instead of the more general R-vines, cf.\ \citet{Czado.Schepsmeier.Min:2012}. Since a D-vine is completely determined by tree $T_{1}$, only one tree has to be specified when restricting the class of R-vines to D-vines. Due to the particular structure of D-vines, however, finding tree $T_{1}$ by the above method leads to a \emph{travelling salesman problem} (TSP) \citep{Applegate.Bixby.Chvatal.Cook:2007}, which is NP-hard. \citet{Kurowicka:2011} proposed an alternative structure selection algorithm, in which $\mathcal{V}$ is built in reverse order from tree $T_{d - 1}$ to tree $T_{1}$ using partial correlation estimates as weights. Bayesian approaches to structure estimation have been considered in \citet{Smith.Min.Almeida.Czado:2010}, \citet{Min.Czado:2011}, and \citet{Gruber.Czado.Stoeber:2012}. A more detailed exposition of vine copula models is found in \citet{Kurowicka.Joe:2011}. Implementations of model selection and ML estimation procedures for vine copula models are available in the R package \texttt{VineCopula} \nolinebreak[4]
\citep{Schepsmeier.Stoeber.Brechmann:2012}. \nolinebreak[4]

\bigskip

The construction of a $d$-variate vine copula model requires the specification of $\binom{d}{2}$ pair-copula families, a number growing quadratically in $d$. The actual number of decisions to make in practical applications may, however, be lower if we happen to discover (conditional) independences in the analysed data. In that case, the corresponding pair copulas are set to be independence copulas. Since above structure estimation algorithm is based on the idea of modelling strongest dependences in the first trees, \citet{Brechmann.Czado.Aas:2012} proposed to set all pair copulas in the later trees to independence copulas, which leads to so-called \emph{truncated R-vines}. Instead of leaving the detection of (conditional) independences to chance, one may, however, consider modelling these independences in the first place to obtain more parsimonious models. Unfortunately, the construction of vine copula models satisfying pre-specified conditional independence restrictions is a hard problem in general. A class of models suited for this task are the Bayesian networks discussed in Section \ref{sec:bn}. \citet{Kurowicka.Cooke:2005} hence joined graphical and copula modelling to introduce PCCs for Bayesian networks, which we will investigate in the next section.

\section{Pair-copula Bayesian networks (PCBNs)}\label{sec:pcbn}

Let $\mathcal{D} = (V, E)$ be a DAG, and let $P$ be an absolutely continuous $\mathcal{D}$-Markovian probability measure on $\Real^{d}$, $d \coloneqq \abs{V}$, with strictly increasing univariate marginal cdfs. Moreover, let $w_{v} \colon \bigl\lbrace 1, \ldots, \abs{\pa(v)} \bigr\rbrace \rightarrow \pa(v)$, $i \mapsto w_{i} \coloneqq w_{v}(i)$, be a bijection for every $v \in V$ with $\abs{\pa(v)} \ge 1$. We introduce a total order $<_{v}$ on $\pa(v)$ for every $v \in V$ such that whenever $\abs{\pa(v)} \ge 1$ we have $w_{i} <_{v} w_{j}$ if and only if $i < j$ for all $i, j \in \bigl\lbrace 1, \ldots, \abs{\pa(v)} \bigr\rbrace$. Note that there are $\abs{\pa(v)} !$ permutations of $\pa(v)$ (up to isomorphism). We call $\mathcal{O} \coloneqq \lbrace <_{v} \! \vert \, v \in V \rbrace$ a set of \emph{parent orderings} for $\mathcal{D}$. For every $v \in V$ and $w \in \pa(v)$, we set
\begin{equation*}
	\pa(v; w) \coloneqq \bigl\lbrace u \in \pa(v) \, \big\vert \, u <_{v} w \bigr\rbrace = \bigl\lbrace w_{i} \in \pa(v) \, \big\vert \, i < w_{v}^{-1}(w) \bigr\rbrace.
\end{equation*}
By Sklar's theorem, we know that the cdf of $P$ can be uniquely decomposed into the univariate marginals $F_{1}, \ldots, F_{d}$ and a copula $C$. \citet{Bauer.Czado.Klein:2012} have shown that $C$ can be further decomposed into the (conditional) pair copulas $C_{v, w \vert \! \pa(v; w)}$, $v \in V$, $w \in \pa(v)$, which yields a PCC for $C$ in which each (conditional) pair copula corresponds to exactly one edge $w \rightarrow v$ in $\mathcal{D}$. The pdf $f$ of $P$ can hence be written as
\begin{equation}\label{eq:dagpdf}
	f(\boldsymbol{x}) = \prod_{v \in V} \, f_{v}(x_{v}) \!\! \prod_{w \in \pa(v)} \!\! c_{v, w \vert \! \pa(v; w)}{\bigl( F_{v \vert \! \pa(v; w)}(x_{v} \, \vert \, \boldsymbol{x}_{\pa(v; w)}), F_{w \vert \! \pa(v; w)}(x_{w} \, \vert \, \boldsymbol{x}_{\pa(v; w)}) \, \big\vert \, \boldsymbol{x}_{\pa(v; w)} \bigr)},
\end{equation}
where $\boldsymbol{x} = (x_{v})_{v \in V} \in \Real^{d}$. As an example consider the DAG in Figure \ref{fig:dagessgraph} with ordering $2 <_{4} 3$ of $\pa(4) = \lbrace 2, 3 \rbrace$. Equation \eqref{eq:dagpdf} yields
\begin{align*}
		f(\boldsymbol{x}) & = f_{1}(x_{1}) \cdots f_{4}(x_{4}) \cdot c_{21} \bigl( F_{2}(x_{2}), F_{1}(x_{1}) \bigr) \cdot c_{31} \bigl( F_{3}(x_{3}), F_{1}(x_{1}) \bigr) \cdot c_{42} \bigl( F_{4}(x_{4}), F_{2}(x_{2}) \bigr)\\
		& \quad \cdot c_{43 \vert 2} \bigl( F_{4 \vert 2}(x_{4} \, \vert \, x_{2}), F_{3 \vert 2}(x_{3} \, \vert \, x_{2}) \, \big\vert \, x_{2} \bigr), \quad \boldsymbol{x} = (x_{1}, \ldots, x_{4}) \in \Real^{4},
\end{align*}
where $F_{4 \vert 2}(x_{4} \, \vert \, x_{2}) = h_{\underline{4}2} \bigl( F_{4}(x_{4}), F_{2}(x_{2}) \bigr)$ by Equation \eqref{eq:condcdf} and
\begin{equation*}
	F_{3 \vert 2}(x_{3} \, \vert \, x_{2}) = \int_{0}^{1} c_{21} \bigl( F_{2}(x_{2}), u_{1} \bigr) \, h_{\underline{3}1} \bigl( F_{3}(x_{3}), u_{1} \bigr) \diff u_{1} ,
\end{equation*}
see \citet{Bauer.Czado.Klein:2012} for details. If we instead choose the ordering $3 <_{4} 2$ for $\pa(4) = \lbrace 2, 3 \rbrace$, we obtain the same decomposition as above with the roles of vertices $2$ and $3$ interchanged. Due to the appearing integral, the pair-copula decomposition in the example cannot be represented by an R-vine. There are, however, DAG PCCs representable by R-vines, see for instance \citet{Bauer.Czado.Klein:2012} for a four-variate DAG PCC which coincides with a D-vine PCC.

\subsection{Evaluating conditional cdfs in PCBNs}

Similar to vine copulas, the challenge in Equation \eqref{eq:dagpdf} lies in the evaluation of the conditional cdfs. Assume without loss of generality that $\mathcal{D}$ is well-ordered. Let $v \in V$ and let $J \subseteq V \setminus \lbrace v \rbrace$ be non-empty. We will now derive a pair-copula decomposition for the conditional cdf $F_{v \vert J}(\, \cdot \, \vert \, \boldsymbol{x}_{J})$. We begin by exploiting the (conditional) independence restrictions represented by $\mathcal{D}$. To this end, consider the moral graph $\mathcal{G} \coloneqq (\mathcal{D}_{\An(\lbrace v \rbrace \cup J)})^{m}$. If $\csep{\lbrace v \rbrace}{I}{(J \setminus I)}{\mathcal{G}}$ for some non-empty $I \subseteq J$, then the global $\mathcal{D}$-Markov property in Equation \eqref{eq:globalmarkov} yields with $K \coloneqq J \setminus I$
\begin{equation*}
	f_{v \vert J}(x_{v} \, \vert \, \boldsymbol{x}_{J}) = \frac{f_{\lbrace v \rbrace \cup J}(\boldsymbol{x}_{\lbrace v \rbrace \cup J})}{f_{J}(\boldsymbol{x}_{J})} = \frac{f_{v \vert K}(x_{v} \, \vert \, \boldsymbol{x}_{K}) \, f_{I \vert K}(\boldsymbol{x}_{I} \, \vert \, \boldsymbol{x}_{K}) \, f_{K}(\boldsymbol{x}_{K})}{f_{I \vert K}(\boldsymbol{x}_{I} \, \vert \, \boldsymbol{x}_{K}) \, f_{K}(\boldsymbol{x}_{K})} = f_{v \vert K}(x_{v} \, \vert \, \boldsymbol{x}_{K}),
\end{equation*}
where by convention $f_{W \vert \emptyset}(\boldsymbol{x}_{W} \, \vert \, \boldsymbol{x}_{\emptyset}) \coloneqq f_{W}(\boldsymbol{x}_{W})$ for every $W \subseteq V$, and $f_{\emptyset}(\boldsymbol{x}_{\emptyset}) \coloneqq 1$. Thus, $F_{v \vert J}(\, \cdot \, \vert \, \boldsymbol{x}_{ J}) = F_{v \vert K}(\, \cdot \, \vert \, \boldsymbol{x}_{K})$, and we can continue with the conditioning set $K$. The case $K = \emptyset$ is trivial. Assume $K \neq \emptyset$. Observing that
\begin{equation}\label{eq:dagcondcdf}
	F_{v \vert K}(y \, \vert \, \boldsymbol{x}_{K}) = \frac{\int_{-\infty}^{y} f_{\lbrace v \rbrace \cup K}(\boldsymbol{x}_{\lbrace v \rbrace \cup K}) \diff x_{v}}{f_{K}(\boldsymbol{x}_{K})},
\end{equation}
we next need to find pair-copula decompositions for $f_{\lbrace v \rbrace \cup K}$ and $f_{K}$.

\subsubsection*{Pair-copula decompositions for marginal pdfs}

More generally, let $I \subseteq V$ be non-empty and consider the (marginal) pdf $f_{I}$. For every $v \in I$, we set $I_{-v} \coloneqq I \setminus \lbrace v \rbrace$ and obtain the following lemma.

\begin{lem}\label{lem:dagpdf}
Let $\mathcal{D} = (V, E)$ be a well-ordered DAG on $d \coloneqq \abs{V}$ vertices, and let $P$ be an absolutely continuous $\mathcal{D}$-Markovian probability measure on $\Real^{d}$ with pdf $f$. Let $I \subseteq V$ be non-empty and let $v$ denote the maximal vertex in $I$ by the well-ordering of $\mathcal{D}$. Moreover, define $S_{v} \coloneqq \bigl\lbrace u \in \pa(v) \, \big\vert \, \sep{\lbrace u \rbrace}{I_{-v}}{(\mathcal{D}_{\An(\lbrace u \rbrace \cup I_{-v})})^{m}} \bigr\rbrace$ and
\begin{equation*}
	W_{v} \coloneqq 
		\begin{cases}
			\emptyset & \text{if } I_{-v} = \emptyset \text{ or } S_{v} = \pa(v),\\
			\lbrace w_{1} \rbrace \cup \pa(v; w_{1}) & \text{if } I_{-v} \subseteq \pa(v) \text{ and } I_{-v} \neq \emptyset,\\
			\lbrace w_{2} \rbrace \cup \pa(v; w_{2}) & \text{else},
		\end{cases}
\end{equation*}
where $w_{1}$ and $w_{2}$ denote the maximal vertices in $I_{-v}$ and $\pa(v) \setminus S_{v}$, respectively, by a given parent ordering $<_{v}$. Then for all $\boldsymbol{x}_{I} = (x_{u})_{u \in I} \in R^{\abs{I}}$,
\begin{equation}\label{eq:dagpdflem}
	f_{I}(\boldsymbol{x}_{I}) = \int_{\Real^{\abs{W_{v} \setminus I}}} f_{v \vert W_{v}}(x_{v} \, \vert \, \boldsymbol{x}_{W_{v}}) \, f_{W_{v} \cup I_{-v}}(\boldsymbol{x}_{W_{v} \cup I_{-v}}) \diff \boldsymbol{x}_{W_{v} \setminus I} .
\end{equation}
\end{lem}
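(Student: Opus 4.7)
The plan is to derive Equation \eqref{eq:dagpdflem} by combining the $\mathcal{D}$-recursive factorisation of $f$ with the global $\mathcal{D}$-Markov property to progressively reduce the set of integration variables. A first observation is that $v$ being maximal in $I$ under the well-ordering implies that no descendant of $v$ lies in $\An(I)$, so $v$ is also maximal in $\An(I)$ and $\An(I) \setminus \lbrace v \rbrace$ remains an ancestral set; this is what allows me to cleanly separate out the contribution of $v$.

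I first reduce the claim to an intermediate identity involving the full parent set of $v$. Writing $f_I(\boldsymbol{x}_I) = \int f_{\An(I)}(\boldsymbol{x}_{\An(I)}) \diff \boldsymbol{x}_{\An(I) \setminus I}$ and applying the $\mathcal{D}$-recursive factorisation to $f_{\An(I)}$, I would pull out the single factor corresponding to $v$ (whose parents all lie in $\An(I) \setminus \lbrace v \rbrace$) and recognise the remaining product as $f_{\An(I) \setminus \lbrace v \rbrace}$. Splitting $\An(I) \setminus I$ into $\pa(v) \setminus I$ and $\An(I) \setminus (I \cup \pa(v))$ and integrating the latter first (which affects only the second factor and produces the marginal $f_{\pa(v) \cup I_{-v}}$), I obtain
\begin{equation*}
f_I(\boldsymbol{x}_I) = \int f_{v \vert \pa(v)}(x_v \vert \boldsymbol{x}_{\pa(v)}) \, f_{\pa(v) \cup I_{-v}}(\boldsymbol{x}_{\pa(v) \cup I_{-v}}) \diff \boldsymbol{x}_{\pa(v) \setminus I},
\end{equation*}
which is exactly the claim for the choice $W_v = \pa(v)$.

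The remainder is a case analysis showing that the integration variables can be contracted further to $W_v$. When $I_{-v} = \emptyset$ or $S_v = \pa(v)$, so that $W_v = \emptyset$, the target is $X_v \ind \boldsymbol{X}_{I_{-v}}$: the former case is trivial, while in the latter every parent of $v$ is by definition disconnected from $I_{-v}$ in its own ancestral moral graph, and I would invoke the global Markov property together with the fact that $v$'s neighbourhood in $(\mathcal{D}_{\An(\lbrace v \rbrace \cup I_{-v})})^m$ is exactly $\pa(v)$ (since $v$ has no descendants in this ancestral set) to promote the pairwise separation to the joint separation needed. When $I_{-v} \subseteq \pa(v)$ is non-empty, the choice $W_v = \lbrace w_1 \rbrace \cup \pa(v; w_1)$ trivially contains $I_{-v}$, so $f_{W_v \cup I_{-v}} = f_{W_v}$ and the right-hand side of Equation \eqref{eq:dagpdflem} collapses to $\int f_{\lbrace v \rbrace \cup W_v} \diff \boldsymbol{x}_{W_v \setminus I}$, which recovers $f_I$ by plain marginalisation. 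In the generic case, every parent of $v$ lying outside $W_v$ has index exceeding $w_2$ and hence belongs to $S_v$ by maximality of $w_2$, and I would invoke the global Markov property once more to establish $\cind{X_v}{\boldsymbol{X}_{I_{-v} \setminus W_v}}{\boldsymbol{X}_{W_v}}$, whence $f_{v \vert W_v \cup I_{-v}} = f_{v \vert W_v}$ and the claim follows by combining this with the chain rule and the intermediate identity.

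The main obstacle will be verifying the d-separation $\csep{\lbrace v \rbrace}{I_{-v} \setminus W_v}{W_v}{(\mathcal{D}_{\An(\lbrace v \rbrace \cup I_{-v})})^m}$ in the generic case: any path from $v$ avoiding $W_v$ begins via some $u \in \pa(v) \setminus W_v \subseteq S_v$, and while $u$ is known to be disconnected from $I_{-v}$ in the smaller moral graph used to define $S_v$, the larger ancestral moral graph may introduce additional moralisation edges---in particular those joining co-parents of $v$---that could in principle open new paths to $I_{-v} \setminus W_v$. The verification must show that any such newly created path either stays inside the blocking set $W_v$ or can be re-routed back to a path already existing in the smaller graph; this is precisely where the specific choice $W_v = \lbrace w_2 \rbrace \cup \pa(v; w_2)$, which contains all parents of $v$ up to and including $w_2$, becomes essential.
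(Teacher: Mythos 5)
Your plan follows the paper's proof quite closely: the trivial case $I_{-v}=\emptyset$, the case $I_{-v}\subseteq\pa(v)$ handled by plain marginalisation of $f_{\lbrace v\rbrace\cup W_{v}}$, and the remaining cases reduced to a conditional independence obtained from the global $\mathcal{D}$-Markov property; your preliminary identity with $W_{v}=\pa(v)$ (via the recursive factorisation and ancestrality of $\An(I)\setminus\lbrace v\rbrace$) is correct, though not needed once the conditional independence is in hand. The genuine gap is the step you yourself defer: you never establish the separation $\csep{\lbrace v\rbrace}{I_{-v}\setminus W_{v}}{W_{v}}{(\mathcal{D}_{\An(\lbrace v\rbrace\cup I_{-v}\cup W_{v})})^{m}}$ in the generic case (and your ``promotion of pairwise to joint separation'' in the case $S_{v}=\pa(v)$, $W_{v}=\emptyset$ is likewise asserted rather than argued, although there a trek argument does work because $v$ has no descendants in $\An(I)$ and its moral neighbours are exactly $\pa(v)$). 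That separation is the entire substance of the lemma in the generic case---it is exactly where the paper invokes graph separation to get \eqref{eq:lemglobalmarkov}---so a proposal that ends with ``the verification must show that newly created paths can be re-routed'' has not proved the statement.

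Moreover, the re-routing you envisage cannot be carried out from $u\in S_{v}$ for all $u\in\pa(v)\setminus W_{v}$ and the maximality of $w_{2}$ alone, so the gap is not a routine verification. Take $V=\lbrace u,c,z,v\rbrace$ with edges $u\rightarrow z$, $c\rightarrow z$, $u\rightarrow v$, $z\rightarrow v$, let $I=\lbrace v,c\rbrace$ (so $v$ is maximal in $I$), and choose the parent ordering $z<_{v}u$. Then $S_{v}=\lbrace u\rbrace$, $w_{2}=z$ and $W_{v}=\lbrace z\rbrace$, but in $(\mathcal{D}_{\An(\lbrace v,c,z\rbrace)})^{m}$ the moralisation edge $u\edge c$ (co-parents of $z$) together with $u\edge v$ yields the path $v\edge u\edge c$ avoiding $W_{v}$: the separation fails, $\cind{X_{v}}{X_{c}}{X_{z}}$ is not implied by the Markov property, and for a linear Gaussian structural equation model on this DAG the two sides of \eqref{eq:dagpdflem} even have different covariances. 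So the argument only closes for favourable parent orderings (here $u<_{v}z$, which gives $W_{v}=\pa(v)$, the always-valid choice from your intermediate identity); note that this is also precisely the point at which the paper's own proof consists of a bare assertion of the separation. As it stands, your proposal correctly disposes of the easy cases but leaves the decisive step open, and the strategy you sketch for it cannot succeed without additional input on the choice of $W_{v}$ or the parent ordering.
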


Note that by convention, $\int_{\Real^{0}} g(\boldsymbol{x}) \diff \boldsymbol{x}_{\emptyset} \coloneqq g(\boldsymbol{x})$ for every integrable function $g \colon \Real^{k} \rightarrow \Real$, $k \in \Nat$. Also note that the parent ordering $<_{v}$ need not concur with the well-ordering of $\mathcal{D}$.

\begin{proof}
As can be seen from the definition of $W_{v}$, the decomposition of $f_{I}$ in the lemma's claim depends on the relation between the sets $I_{-v}$ and $\pa(v)$. Assume first that $I_{-v} = \emptyset$. Then $f_{I} = f_{v}$ and the claim is trivial.

\bigskip

Next, assume $I_{-v} \neq \emptyset$ but $\pa(v) = \emptyset$. Then $S_{v} = \emptyset$. Since $v$ is maximal in $I$ by the well-ordering of $\mathcal{D}$, $v$ has no descendants in $I_{-v}$, and we have $\sep{\lbrace v \rbrace}{I_{-v}}{(\mathcal{D}_{\An(I)})^{m}}$. The global $\mathcal{D}$-Markov property thus yields $f_{I}(\boldsymbol{x}_{I}) = f_{v}(x_{v}) \, f_{I_{-v}}(\boldsymbol{x}_{I_{-v}})$, that is Equation \eqref{eq:dagpdflem} for $W_{v} \coloneqq \emptyset$.

\bigskip

From now on assume $I_{-v} \neq \emptyset$ and $\pa(v) \neq \emptyset$. The possible relations between $I_{-v}$ and $\pa(v)$ are illustrated in Figure \ref{fig:venn}. If $I_{-v} \subseteq \pa(v)$ (Figure \ref{fig:venn1}), we extend $I_{-v}$ to $W_{v} \coloneqq \lbrace w_{1} \rbrace \cup \pa(v; w_{1}) \supseteq I_{-v}$ and obtain as claimed
\begin{equation*}
	f_{I}(\boldsymbol{x}_{I}) = \int_{\Real^{\abs{W_{v} \setminus I}}} f_{\lbrace v \rbrace \cup W_{v}}(\boldsymbol{x}_{\lbrace v \rbrace \cup W_{v}}) \diff \boldsymbol{x}_{W_{v} \setminus I} = \int_{\Real^{\abs{W_{v} \setminus I}}} f_{v \vert W_{v}}(x_{v} \, \vert \, \boldsymbol{x}_{W_{v}}) \, f_{W_{v}}(\boldsymbol{x}_{W_{v}}) \diff \boldsymbol{x}_{W_{v} \setminus I} .
\end{equation*}
Note that in case $I_{-v} = \lbrace w_{1} \rbrace \cup \pa(v; w_{1})$, no integration is required since then $W_{v} \setminus I = \emptyset$.

\bigskip

\begin{figure}[htb]
	\centering
		\begin{subfigure}[htb]{.205\textwidth}
			\captionsetup{font=scriptsize}
			\centering \includegraphics[scale=.75]{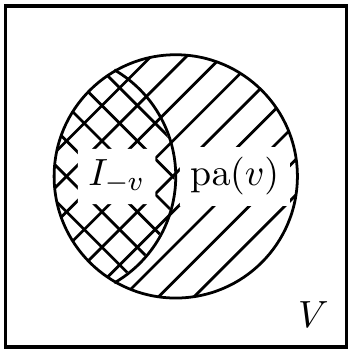}
			\caption{$W_{v} \! = \! \lbrace w_{1} \rbrace \cup \pa(v; w_{1})$}
			\label{fig:venn1}
		\end{subfigure}
		\begin{subfigure}[htb]{.32\textwidth}
			\captionsetup{font=scriptsize}
			\centering \includegraphics[scale=.75]{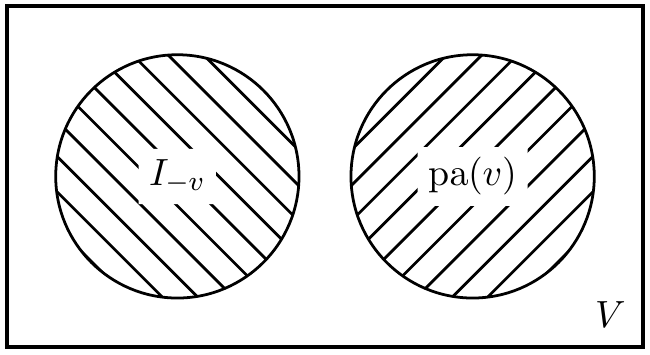}
			\caption{$W_{v} = \emptyset$ or $W_{v} = \lbrace w_{2} \rbrace \cup \pa(v; w_{2})$}
			\label{fig:venn2}
		\end{subfigure}%
		\begin{subfigure}[htb]{.25\textwidth}
			\captionsetup{font=scriptsize}
			\centering \includegraphics[scale=.75]{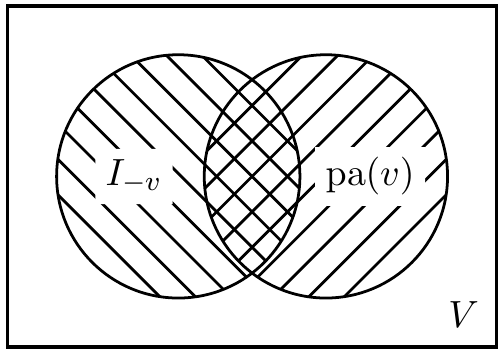}
			\caption{$W_{v} = \lbrace w_{2} \rbrace \cup \pa(v; w_{2})$}
			\label{fig:venn3}
		\end{subfigure}
		\begin{subfigure}[htb]{.205\textwidth}
			\captionsetup{font=scriptsize}
			\centering \includegraphics[scale=.75]{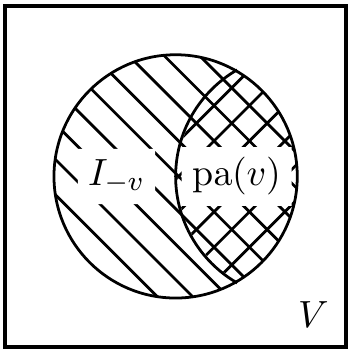}
			\caption{$W_{v} = \pa(v)$}
			\label{fig:venn4}
		\end{subfigure}
	\caption{Venn diagrams of the sets $I_{-v} \neq \emptyset$ and $\pa(v) \neq \emptyset$, and corresponding definitions of $W_{v}$ (see lower captions).}
	\label{fig:venn}
\end{figure}

Next, let $I_{-v} \cap \pa(v) = \emptyset$ (Figure \ref{fig:venn2}). If $S_{v} = \pa(v)$, then $\sep{\lbrace v \rbrace}{I_{-v}}{(\mathcal{D}_{\An(I)})^{m}}$ since $v$ has no descendants in $I_{-v}$. Hence, we again have $f_{I}(\boldsymbol{x}_{I}) = f_{v}(x_{v}) \, f_{I_{-v}}(\boldsymbol{x}_{I_{-v}})$, that is Equation \eqref{eq:dagpdflem} for $W_{v} \coloneqq \emptyset$. If, however, $S_{v} \neq \pa(v)$, then $\csep{\lbrace v \rbrace}{I_{-v}}{\pa(v; w_{2})}{(\mathcal{D}_{\An(\pa(v; w_{2}) \cup I)})^{m}}$, and with $W_{v} \coloneqq \lbrace w_{2} \rbrace \cup \pa(v; w_{2})$ the global $\mathcal{D}$-Markov property yields
\begin{equation}\label{eq:lemglobalmarkov}
	f_{I \cup W_{v} }(\boldsymbol{x}_{I \cup W_{v}}) = f_{v \vert W_{v}}(x_{v} \, \vert \, \boldsymbol{x}_{W_{v}}) \, f_{I_{-v} \vert W_{v}}(\boldsymbol{x}_{I_{-v}} \, \vert \, \boldsymbol{x}_{W_{v}}) \, f_{W_{v}}(\boldsymbol{x}_{W_{v}}) .
\end{equation}
Since $I_{-v} \cap W_{v} = \emptyset$, we thus get
\begin{equation*}
	f_{I}(\boldsymbol{x}_{I}) = \int_{\Real^{\abs{W_{v}}}} f_{v \vert W_{v}}(x_{v} \, \vert \, \boldsymbol{x}_{W_{v}}) \, f_{W_{v} \cup I_{-v}}(\boldsymbol{x}_{W_{v} \cup I_{-v}}) \diff \boldsymbol{x}_{W_{v}} .
\end{equation*}
Note that in case $S_{v} = \emptyset$, we have $W_{v} = \pa(v)$.

\bigskip

Finally, assume $I_{-v} \cap \pa(v) \neq \emptyset$ such that $I_{-v} \nsubseteq \pa(v)$ (Figure \ref{fig:venn3}). Similarly to Equation \eqref{eq:lemglobalmarkov}, we obtain with $W_{v} \coloneqq \lbrace w_{2} \rbrace \cup \pa(v; w_{2})$
\begin{equation*}
	f_{I \cup W_{v} }(\boldsymbol{x}_{I \cup W_{v}}) = f_{v \vert W_{v}}(x_{v} \, \vert \, \boldsymbol{x}_{W_{v}}) \, f_{(I_{-v} \setminus W_{v}) \vert W_{v}}(\boldsymbol{x}_{I_{-v} \setminus W_{v}} \, \vert \, \boldsymbol{x}_{W_{v}}) \, f_{W_{v}}(\boldsymbol{x}_{W_{v}})
\end{equation*}
by the global $\mathcal{D}$-Markov property, and hence
\begin{equation*}
	f_{I}(\boldsymbol{x}_{I}) = \int_{\Real^{\abs{W_{v} \setminus I}}} f_{v \vert W_{v}}(x_{v} \, \vert \, \boldsymbol{x}_{W_{v}}) \, f_{W_{v} \cup I_{-v}}(\boldsymbol{x}_{W_{v} \cup I_{-v}}) \diff \boldsymbol{x}_{W_{v} \setminus I} .
\end{equation*}
Note again that in case $S_{v} = \emptyset$, we have $W_{v} = \pa(v)$. Also, note that in case $\pa(v) \subseteq I_{-v}$ (Figure \ref{fig:venn4}), no integration is required. This establishes the claim.
\end{proof}

The set $W_{v}$ in Lemma \ref{lem:dagpdf} is either empty or of the form $\lbrace w \rbrace \cup \pa(v; w)$ for some $w \in \pa(v)$. In the latter case, we can express the conditional pdf $f_{v \vert \lbrace w \rbrace \cup \pa(v; w)}(\, \cdot \, \vert \, \boldsymbol{x}_{\lbrace w \rbrace \cup \pa(v; w)})$ on the right hand side of Equation \eqref{eq:dagpdflem} in terms of the univariate marginals $F_{u}$, $u \in V$, and the (conditional) pair copulas $C_{v, u \vert \pa(v; u)}$, $u \in \pa(v)$, as follows.

\begin{lem}\label{lem:dagcondpdf}
Let the notation be as in Lemma \ref{lem:dagpdf} and let $P$ have strictly increasing univariate marginal cdfs. Let $I_{-v} \neq \emptyset$ and let $S_{v} \neq \pa(v)$. Then
\begin{equation*}
	f_{v \vert W_{v}}(x_{v} \hspace{.2ex} \vert \hspace{.2ex} \boldsymbol{x}_{W_{v}}) = f_{v}(x_{v}) \!\! \prod_{w \in W_{v}} \!\! c_{v, w \vert \! \pa(v; w)} \bigl( F_{v \vert\! \pa(v; w)}(x_{v} \hspace{.2ex} \vert \hspace{.2ex} \boldsymbol{x}_{\pa(v; w)}), F_{w \vert\! \pa(v; w)}(x_{w} \hspace{.2ex} \vert \hspace{.2ex} \boldsymbol{x}_{\pa(v; w)}) \hspace{.2ex} \big\vert \hspace{.2ex} \boldsymbol{x}_{\pa(v; w)} \bigr)
\end{equation*}
for all $x_{v} \in \Real$ and $\boldsymbol{x}_{W_{v}} = (x_{w})_{w \in W_{v}} \in \Real^{\abs{W_{v}}}$. 
\end{lem}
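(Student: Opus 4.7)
The plan is to reduce the claim to the generic chain-rule identity in Equation \eqref{eq:chainrule}, choosing the enumeration of the conditioning set so that the nested conditioning sets $K_{-i}$ match precisely the sets $\pa(v;w_i)$ appearing as subscripts of the DAG pair copulas.

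First I would isolate the structural observation that, under the assumptions $I_{-v} \neq \emptyset$ and $S_v \neq \pa(v)$, the definition of $W_v$ from Lemma \ref{lem:dagpdf} forces $W_v$ to have the form $\lbrace w^{\ast} \rbrace \cup \pa(v; w^{\ast})$ for some $w^{\ast} \in \pa(v)$ (namely $w^{\ast} = w_1$ in the second case and $w^{\ast} = w_2$ in the third and fourth cases of the definition). In particular $W_v \subseteq \pa(v)$, and $W_v$ is downward closed in $\pa(v)$ with respect to the parent ordering $<_v$: every $u \in \pa(v)$ with $u \le_v w^{\ast}$ belongs to $W_v$.

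Next, set $p \coloneqq \abs{W_v}$ and enumerate the elements of $W_v$ in \emph{decreasing} order with respect to $<_v$, that is write $W_v = \lbrace w_1, \ldots, w_p \rbrace$ with $w_1 = w^{\ast}$ and $w_p <_v w_{p-1} <_v \cdots <_v w_1$. Using this enumeration in Equation \eqref{eq:chainrule} yields
\begin{equation*}
	f_{v \vert W_v}(x_v \, \vert \, \boldsymbol{x}_{W_v}) = f_v(x_v) \prod_{i=1}^{p} c_{v, w_i \vert K_{-i}} \bigl( F_{v \vert K_{-i}}(x_v \, \vert \, \boldsymbol{x}_{K_{-i}}), F_{w_i \vert K_{-i}}(x_{w_i} \, \vert \, \boldsymbol{x}_{K_{-i}}) \, \big\vert \, \boldsymbol{x}_{K_{-i}} \bigr),
\end{equation*}
where $K_{-i} = \lbrace w_{i+1}, \ldots, w_p \rbrace$. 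The key combinatorial check is to verify that $K_{-i} = \pa(v; w_i)$ for every $i \in \lbrace 1, \ldots, p \rbrace$: by construction $K_{-i}$ consists of exactly those elements of $W_v$ that are strictly smaller than $w_i$ in $<_v$, and by the downward-closedness of $W_v$ in $\pa(v)$ these are precisely all parents of $v$ strictly smaller than $w_i$, i.e.\ $\pa(v; w_i)$. Substituting this identification into the display above turns the product over $i$ into the product over $w \in W_v$ appearing in the claim.

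The only point that requires care is the combinatorial bookkeeping just described; once $K_{-i} = \pa(v; w_i)$ is established, the conclusion is immediate from Equation \eqref{eq:chainrule}. No assumption on $\mathcal{D}$-Markovianity is actually used in this lemma beyond the existence of a jointly absolutely continuous density with strictly increasing univariate margins (needed so that Sklar's theorem gives unique pair-copula densities), since Equation \eqref{eq:chainrule} was derived as a pure consequence of Sklar's theorem and the chain rule of differentiation.
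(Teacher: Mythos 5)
Your proof is correct and follows essentially the same route as the paper: observe that under the stated hypotheses $W_v = \lbrace u \rbrace \cup \pa(v;u)$ for some $u \in \pa(v)$ and then apply Equation \eqref{eq:chainrule}. The only difference is that you spell out the enumeration bookkeeping ($K_{-i} = \pa(v;w_i)$ via downward closedness of $W_v$ in $<_v$), which the paper leaves implicit.
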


\begin{proof}
Since $I_{-v} \neq \emptyset$ and $S_{v} \neq \pa(v)$, $W_{v}$ is non-empty and thus $W_{v} = \lbrace u \rbrace \cup \pa(v; u)$ for some $u \in \pa(v)$. By Equation \eqref{eq:chainrule}, we can hence write
\begin{equation*}
	f_{v \vert W_{v}}(x_{v} \hspace{.15ex} \vert \hspace{.15ex} \boldsymbol{x}_{W_{v}}) = f_{v}(x_{v}) \!\! \prod_{w \in W_{v}} \!\! c_{v, w \vert \! \pa(v; w)} \bigl( F_{v \vert\! \pa(v; w)}(x_{v} \hspace{.15ex} \vert \hspace{.15ex} \boldsymbol{x}_{\pa(v; w)}), F_{w \vert\! \pa(v; w)}(x_{w} \hspace{.15ex} \vert \hspace{.15ex} \boldsymbol{x}_{\pa(v; w)}) \hspace{.15ex} \big\vert \hspace{.15ex} \boldsymbol{x}_{\pa(v; w)} \bigr),
\end{equation*}
and the claim is proven.
\end{proof}

\begin{algorithm}[!htb]
	\begin{algorithmic}[1]
		\Require Well-ordered DAG $\mathcal{D}$; set of parent orderings $\mathcal{O}$; non-empty vertex set $I \subseteq V$.
		\Ensure Factorisation $f$. \Comment{(marginal) pdf $f_{I}(\boldsymbol{x}_{I})$}
		\State $f \leftarrow 1$;
		\State $J \leftarrow \emptyset$; \Comment{indices of integration variables}
		\While{$\abs{I} \ge 1$}
			\State \Comment{Select maximal vertex:}
			\State $v \leftarrow$ maximal vertex in $I$ by the well-ordering of $\mathcal{D}$;
			\State $f \leftarrow f \cdot f_{v}(x_{v})$;
			\State $I \leftarrow I_{-v}$;
			\State \Comment{Determine the set $W_{v}$:}
			\State $W \leftarrow \emptyset$;
			\State $S \leftarrow \bigl\lbrace w \in \pa(v) \, \big\vert \, \sep{\lbrace w \rbrace}{I}{(\mathcal{D}_{\An(\lbrace w \rbrace \cup I)})^{m}} \bigr\rbrace$;
			\If{$I \neq \emptyset$ \textbf{and} $S \neq pa(v)$}
				\If{$I \subseteq \pa(v)$}
					\State $w \leftarrow$ maximal vertex in $I$ by the parent ordering $<_{v}$;
					\State $W \leftarrow \lbrace w \rbrace \cup \pa(v; w)$;
				\Else
					\State $w \leftarrow$ maximal vertex in $\pa(v) \setminus S$ by the parent ordering $<_{v}$;
					\State $W \leftarrow \lbrace w \rbrace \cup \pa(v; w)$;
				\EndIf
			\EndIf
			\State \Comment{Introduce corresponding pair copulas and integration variables:}
			\For{$w \in W$}
				\State $f \leftarrow f \cdot c_{v, w \vert \! \pa(v; w)} \bigl( F_{v \vert\! \pa(v; w)}(x_{v} \hspace{.1ex} \vert \hspace{.1ex} \boldsymbol{x}_{\pa(v; w)}), F_{w \vert\! \pa(v; w)}(x_{w} \hspace{.1ex} \vert \hspace{.1ex} \boldsymbol{x}_{\pa(v; w)}) \hspace{.1ex} \big\vert \hspace{.1ex} \boldsymbol{x}_{\pa(v; w)} \bigr)$;
				\If{$w \notin I$}
					\State $I \leftarrow I \cup \lbrace w \rbrace$;
					\State $J \leftarrow J \cup \lbrace w \rbrace$;
				\EndIf
			\EndFor
		\EndWhile
		\State $f \leftarrow \int_{\Real^{\abs{J}}} f \diff \boldsymbol{x}_{J}$;
	\end{algorithmic}
	\caption{Pair-copula decomposition of a (marginal) pdf.}
	\label{alg:dagpdf}
\end{algorithm}

Since all vertices in $W_{v} \cup I_{-v}$ are smaller than $v$ by the well-ordering of $\mathcal{D}$ and since $V$ is finite, we can inductively apply Lemmas \ref{lem:dagpdf} and \ref{lem:dagcondpdf} to the pdf $f_{W_{v} \cup I_{-v}}$ in Equation \eqref{eq:dagpdflem} until no unconditional pdfs of dimension higher than one remain. Let $J$ denote the set of vertices corresponding to the integration variables added during this iterative procedure (and including $W_{v} \setminus I$). Given a set $\mathcal{O}$ of parent orderings for $\mathcal{D}$, Lemma \ref{lem:dagpdf} yields a set $W_{u}$ for every $u \in I \cup J$. We have hence established the following theorem.

\begin{thm}\label{thm:dagpdf}
Let the notation be as in Lemma \ref{lem:dagcondpdf}. Then $f_{I}(\boldsymbol{x}_{I})$ takes the form
\begin{equation*}
	\int_{\Real^{\abs{J}}} \prod_{v \in (I \cup J)} \hspace{-1.5ex} f_{v}(x_{v}) \hspace{-.75ex} \prod_{w \in W_{v}} \hspace{-.75ex} c_{v, w \vert \! \pa(v; w)} \bigl( F_{v \vert\! \pa(v; w)}(x_{v} \hspace{.2ex} \vert \hspace{.2ex} \boldsymbol{x}_{\pa(v; w)}), F_{w \vert\! \pa(v; w)}(x_{w} \hspace{.2ex} \vert \hspace{.2ex} \boldsymbol{x}_{\pa(v; w)}) \hspace{.2ex} \big\vert \hspace{.2ex} \boldsymbol{x}_{\pa(v; w)} \bigr) \diff \boldsymbol{x}_{J}
\end{equation*}
for all $\boldsymbol{x}_{I} = (x_{v})_{v \in I} \in R^{\abs{I}}$. \hfill\qed
\end{thm}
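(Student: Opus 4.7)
The plan is to prove Theorem \ref{thm:dagpdf} by induction, driven by the well-ordering of $\mathcal{D}$. The base case is $\abs{I} = 1$: then $I = \lbrace v \rbrace$, $I_{-v} = \emptyset$, so Lemma \ref{lem:dagpdf} gives $W_{v} = \emptyset$ and $f_{I}(\boldsymbol{x}_{I}) = f_{v}(x_{v})$, with the inner product over $W_{v}$ and the integral over $\boldsymbol{x}_{J} = \boldsymbol{x}_{\emptyset}$ both empty. This agrees with the claimed formula.

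For the inductive step, let $v$ be the maximal vertex in $I$ by the well-ordering of $\mathcal{D}$, and apply Lemma \ref{lem:dagpdf} to obtain
\begin{equation*}
	f_{I}(\boldsymbol{x}_{I}) = \int_{\Real^{\abs{W_{v} \setminus I}}} f_{v \vert W_{v}}(x_{v} \, \vert \, \boldsymbol{x}_{W_{v}}) \, f_{W_{v} \cup I_{-v}}(\boldsymbol{x}_{W_{v} \cup I_{-v}}) \diff \boldsymbol{x}_{W_{v} \setminus I}.
\end{equation*}
If $W_{v} \neq \emptyset$ (equivalently, $I_{-v} \neq \emptyset$ and $S_{v} \neq \pa(v)$), Lemma \ref{lem:dagcondpdf} immediately expresses $f_{v \vert W_{v}}$ as $f_{v}(x_{v})$ times the product over $w \in W_{v}$ of the pair-copula densities $c_{v, w \vert \pa(v; w)}$ evaluated at the required conditional marginals; if $W_{v} = \emptyset$, then $f_{v \vert W_{v}} = f_{v}(x_{v})$ and the product over $w \in W_{v}$ is empty, so the formula still holds trivially. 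In either case, the factor associated with $v$ already matches the $v$-th factor in the claimed product over $I \cup J$.

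Now apply the induction hypothesis to the remaining pdf $f_{W_{v} \cup I_{-v}}$. This is legitimate because every vertex in $W_{v} \cup I_{-v}$ is strictly smaller than $v$ in the well-ordering of $\mathcal{D}$: vertices in $I_{-v}$ are smaller by the choice of $v$, and vertices in $W_{v} \subseteq \lbrace w \rbrace \cup \pa(v; w) \subseteq \pa(v)$ are smaller because parents precede their children in any well-ordering. Hence by the inductive hypothesis,
\begin{equation*}
	f_{W_{v} \cup I_{-v}}(\boldsymbol{x}_{W_{v} \cup I_{-v}}) = \int_{\Real^{\abs{J'}}} \prod_{u \in (W_{v} \cup I_{-v}) \cup J'} \hspace{-3ex} f_{u}(x_{u}) \hspace{-.75ex} \prod_{w \in W_{u}} \hspace{-.75ex} c_{u, w \vert \pa(u; w)}(\cdots) \diff \boldsymbol{x}_{J'}
\end{equation*}
for some set $J'$ of integration indices introduced during the recursion. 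Setting $J \coloneqq (W_{v} \setminus I) \cup J'$, combining the two integrals via Fubini, and pulling the $v$-factor inside (the $v$-factor is $\boldsymbol{x}_{J}$-independent insofar as the conditioning marginals involve only $\boldsymbol{x}_{\pa(v;w)}$), we obtain the desired product over $(I \cup J)$.

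The only delicate step is bookkeeping: one must check that the index set $I \cup J$ built up recursively is exactly the union of the original $I$ and all integration variables produced, and that Fubini applies (which follows since all integrands are non-negative and the marginal pdf exists by hypothesis). Termination of the recursion is guaranteed because $V$ is finite and the maximal index strictly decreases at each step. This matches precisely the output of Algorithm \ref{alg:dagpdf}, which is merely an iterative rendition of the above induction.
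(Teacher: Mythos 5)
Your argument is essentially the paper's own: the theorem is established there precisely by inductively applying Lemmas \ref{lem:dagpdf} and \ref{lem:dagcondpdf} along the well-ordering of $\mathcal{D}$ until only univariate marginals and pair copulas remain, which is why it carries only a \qed; your induction on the maximal vertex, with the base case $\abs{I}=1$ and the Fubini/Tonelli combination of integrals, is a correct formalisation of that. One small imprecision: the factor attached to $v$ is not independent of all of $\boldsymbol{x}_{J}$ (it depends on $\boldsymbol{x}_{W_{v}\setminus I}\subseteq\boldsymbol{x}_{J}$), but only of the new integration variables $J'$ introduced for $f_{W_{v}\cup I_{-v}}$, which is all that is needed to merge the integrals.
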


Note that in the special case $I = V$, Theorem \ref{thm:dagpdf} yields Equation \eqref{eq:dagpdf}. Above procedure for deriving a pair-copula decomposition of $f_{I}$ as given in Theorem \ref{thm:dagpdf} is summarised in Algorithm~\ref{alg:dagpdf}.

\smallskip


\begin{exmp}
We consider the well-ordered DAG $\mathcal{D}$ in Figure \ref{fig:dagcdf}. The edges and parent orderings of $\mathcal{D}$ can be summarised in a matrix $A_{\mathcal{D}} = (a_{i j})_{1 \le i, j \le 7}$ whose elements satisfy $a_{i j} = k$, $k  \le \abs{\pa(j)}$, if $\mathcal{D}$ contains the edge $i \rightarrow j$ and if $i$ is the $k$-th smallest parent of $j$ by $<_{j}$, and $a_{i j} = 0$ otherwise, see Figure \ref{fig:dagcdf}. For the reader's convenience we will omit function arguments.
Equation \eqref{eq:dagpdf} yields
\begin{align*}
	f & = f_{1} \cdots f_{7} \, c_{21}(F_{2}, F_{1}) \, c_{31}(F_{3}, F_{1}) \, c_{42}(F_{4}, F_{2}) \, c_{41 \vert 2}(F_{4 \vert 2}, F_{1 \vert 2}) \, c_{54}(F_{5}, F_{4}) \, c_{53 \vert 4}(F_{5 \vert 4}, F_{3 \vert 4})\\
	& \quad \cdot c_{65}(F_{6}, F_{5}) \, c_{64 \vert 5}(F_{6 \vert 5}, F_{4 \vert 5}) \, c_{63 \vert 54}(F_{6 \vert 54}, F_{3 \vert 54}) \, c_{62 \vert 543}(F_{6 \vert 543}, F_{2 \vert 543}) \, c_{75}(F_{7}, F_{5})\\
	& \quad \cdot c_{76 \vert 5}(F_{7 \vert 5}, F_{6 \vert 5}) \, c_{73 \vert 56}(F_{7 \vert 56}, F_{3 \vert 56}) .
\end{align*}
We will later derive a pair-copula decomposition for $F_{3 \vert 56}$. In preparation, we now use Algorithm~\ref{alg:dagpdf} to derive pair-copula decompositions for $f_{356}$ and $f_{56}$.

\bigskip

\begin{figure}[htb]
	\centering
		\begin{subfigure}[c]{.45\textwidth}
			\centering \includegraphics[width=.75\textwidth]{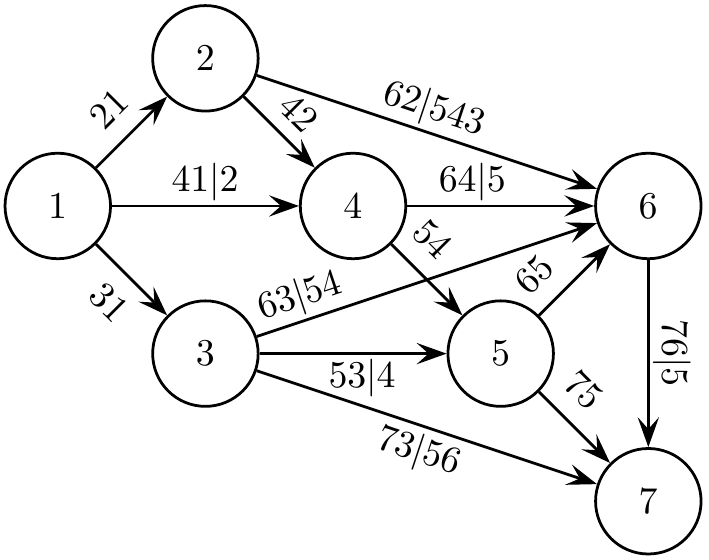}
		\end{subfigure}
		\begin{subfigure}[c]{.45\textwidth}
			\centering \includegraphics[width=.7\textwidth]{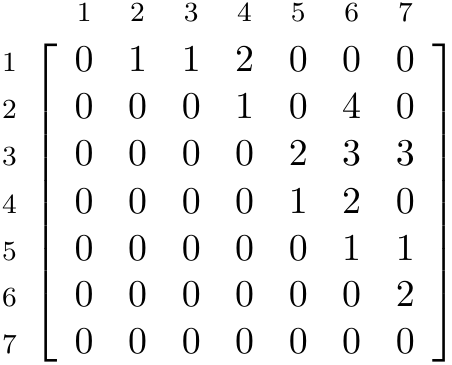}
		\end{subfigure}
	\caption{A well-ordered (vertex labels) DAG $\mathcal{D}$ (left) with parent orderings $2 <_{4} 1$, $4 <_{5} 3$, $5 <_{6} 4 <_{6} 3 <_{6} 2$, $5 <_{7} 6 <_{7} 3$ specifying the pair copulas $C_{21}$, $C_{31}$, $C_{42}$, $C_{41 \vert 2}$, $C_{54}$, $C_{53 \vert 4}$, $C_{65}$, $C_{64 \vert 5}$, $C_{63 \vert 54}$, $C_{62 \vert 542}$, $C_{75}$, $C_{76 \vert 5}$, $C_{73 \vert 56}$ (edge labels), and corresponding representation matrix $A_{\mathcal{D}}$ (right).}
	\label{fig:dagcdf}
\end{figure}

As a result of applying Algorithm \ref{alg:dagpdf} to $f_{356}$ and $f_{56}$, we obtain
\begin{align}\label{eq:dagexample1}
	f_{356} & = \int_{\Real^{3}}  f_{6 \vert 543} \, f_{5 \vert 43} \, f_{4 \vert 21} \, f_{3 \vert 1} \, f_{2 \vert 1} \, f_{1} \diff \boldsymbol{x}_{124}\notag\\
	& = \int_{\Real^{3}} f_{6} \, c_{63 \vert 54}(F_{6 \vert 54}, F_{3 \vert 54}) \, c_{64 \vert 5}(F_{6 \vert 5}, F_{4 \vert 5}) \, c_{65}(F_{6}, F_{5}) \, f_{5} \, c_{53 \vert 4}(F_{5 \vert 4}, F_{3 \vert 4}) \, c_{54}(F_{5}, F_{4})\\
	& \quad \cdot f_{4} \, c_{41 \vert 2}(F_{4 \vert 2}, F_{1 \vert 2}) \, c_{42}(F_{4}, F_{2}) \, f_{3} \, c_{31}(F_{3}, F_{1}) \, f_{2} \, c_{21}(F_{2}, F_{1}) \, f_{1} \diff \boldsymbol{x}_{124}\notag
\end{align}
and $f_{56} = f_{6 \vert 5} \, f_{5} = f_{6} \, c_{65}(F_{6}, F_{5}) \, f_{5}$, respectively, see Table \ref{tab:dagexample}.

\smallskip

\begin{table}[htb]
	\centering
	\begin{tabular*}{\textwidth}{@{\extracolsep{\fill}}lcccccccc}
		\toprule
		$f_{356}$ & $I$ & $v$ & $S$ & $I_{-v} \subseteq \pa(v)$? & $w$ & $W$ & $J$\\
		\midrule
		& $\lbrace 3, 5, 6 \rbrace$ & $6$ & $\emptyset$ & $\checkmark$ & $3$ & $\lbrace 3, 4, 5 \rbrace$ & $\lbrace 4 \rbrace$\\
		& $\lbrace 3, 4, 5 \rbrace$ & $5$ & $\emptyset$ & $\checkmark$ & $3$ & $\lbrace 3, 4 \rbrace$ & $\lbrace 4 \rbrace$\\
		& $\lbrace 3, 4 \rbrace$ & $4$ & $\emptyset$ & \textsf{X} & $1$ & $\lbrace 1, 2 \rbrace$ & $\lbrace 1, 2, 4 \rbrace$\\
		& $\lbrace 1, 2, 3 \rbrace$ & $3$ & $\emptyset$ & \textsf{X} & $1$ & $\lbrace 1 \rbrace$ & $\lbrace 1, 2, 4 \rbrace$\\
		& $\lbrace 1, 2 \rbrace$ & $2$ & $\emptyset$ & $\checkmark$ & $1$ & $\lbrace 1 \rbrace$ & $\lbrace 1, 2, 4 \rbrace$\\
		& $\lbrace 1 \rbrace$ & $1$ & $\emptyset$ & $-$ & $-$ & $\emptyset$ & $\lbrace 1, 2, 4 \rbrace$\\
		\midrule
		$f_{56}$ & $I$ & $v$ & $S$ & $I_{-v} \subseteq \pa(v)$? & $w$ & $W$ & $J$\\
		\midrule
		& $\lbrace 5, 6 \rbrace$ & $6$ & $\emptyset$ & $\checkmark$ & $5$ & $\lbrace 5 \rbrace$ & $\emptyset$\\
		& $\lbrace 5 \rbrace$ & $5$ & $\lbrace 3, 4 \rbrace$ & $-$ & $-$ & $\emptyset$ & $\emptyset$\\
		\bottomrule
	\end{tabular*}
	\caption{Vertices and vertex sets obtained during the application of Algorithm \ref{alg:dagpdf} to the pdfs $f_{356}$ and $f_{56}$ corresponding to the DAG $\mathcal{D}$ in Figure \ref{fig:dagcdf}.} 
	\label{tab:dagexample}
\end{table}
\end{exmp}

\subsubsection*{When can $\int_{-\infty}^{y} f_{\lbrace v \rbrace \cup K}(\boldsymbol{x}_{\lbrace v \rbrace \cup K}) \diff x_{v}$ in Equation \eqref{eq:dagcondcdf} be further simplified?}

Let us now return to the conditional cdf in Equation \eqref{eq:dagcondcdf}. Setting $I \coloneqq \lbrace v \rbrace \cup K$, the numerator on the right hand side of Equation \eqref{eq:dagcondcdf} takes the form $\int_{-\infty}^{y} f_{I}(\boldsymbol{x}_{I}) \diff x_{v}$. Decompose $f_{I}(\boldsymbol{x}_{I})$ according to Theorem \ref{thm:dagpdf}, and let $J$ denote the set of vertices corresponding to the newly added integration variables. Clearly, $J \subseteq \An(I) \setminus I$. If the old integration variable $x_{v}$ does not appear as a conditioning variable in one of the pair copulas $C_{v, w \vert \! \pa(v; w)}$, $v \in I \cup J$, $w \in W_{v}$, in the decomposition of $f_{I}$, it may be possible to solve the integral with respect to $x_{v}$ analytically. More precisely, let $J^{\prime} \subseteq J$ and let $W^{\prime} \subseteq W \coloneqq I_{-v} \cup J^{\prime}$ be non-empty. Let $k \coloneqq \abs{W^{\prime}}$ and write $W^{\prime} = \lbrace w_{1}, \ldots, w_{k} \rbrace$. Moreover, set $W_{-i}^{\prime} \coloneqq \lbrace w_{1}, \ldots, w_{i - 1} \rbrace$ for all $i \in \lbrace 1, \ldots k \rbrace$. Assume that the pair copula $C_{v, w_{k} \vert W_{-k}^{\prime}}$ is available in the pair-copula decomposition of $f$, that is $W_{-k}^{\prime} = \pa(v; w_{k})$ or $W_{-k}^{\prime} = \pa(w_{k}; v)$, and that (after possible algebraic manipulation) $f_{I}$ takes the form
\begin{equation}\label{eq:dagchainrulec}
	\int_{\Real^{\abs{J^{\prime}}}} f_{v}(x_{v}) \prod_{i = 1}^{k} c_{v, w_{i} \vert W_{-i}^{\prime}} \bigl( F_{v \vert W_{-i}^{\prime}}(x_{v} \, \vert \, \boldsymbol{x}_{W_{-i}^{\prime}}), F_{w_{i} \vert W_{-i}^{\prime}}(x_{w_{i}} \, \vert \, \boldsymbol{x}_{W_{-i}^{\prime}}) \, \big\vert \, \boldsymbol{x}_{W_{-i}^{\prime}} \bigr) \, f_{W}(\boldsymbol{x}_{W}) \diff \boldsymbol{x}_{J^{\prime}} .
\end{equation}
Then Fubini's theorem and Equation \eqref{eq:chainrule} yield that $\int_{-\infty}^{y} f_{I}(\boldsymbol{x}_{I}) \diff x_{v}$ takes the form
\begin{equation}\label{eq:dagchainruleh}
	\int_{\Real^{\abs{J^{\prime}}}} h_{\underline{v}, w_{k} \vert W_{-k}^{\prime}} \bigl( F_{v \vert W_{-k}^{\prime}}(y \, \vert \, \boldsymbol{x}_{W_{-k}^{\prime}}), F_{w_{k} \vert W_{-k}^{\prime}}(x_{w_{k}} \, \vert \, \boldsymbol{x}_{W_{-k}^{\prime}}) \, \big\vert \, \boldsymbol{x}_{W_{-k}^{\prime}} \bigr) \, f_{W}(\boldsymbol{x}_{W}) \diff \boldsymbol{x}_{J^{\prime}} ,
\end{equation}
where the integral with respect to $x_{v}$ was replaced by an h-function which, by assumption, is available in the pair-copula decomposition of $f$. Note that some of the copula pdfs $c_{v, w_{i} \vert W_{-i}^{\prime}}$, $i \in \lbrace 1, \ldots, k - 1 \rbrace$, in Equation \eqref{eq:dagchainrulec} may not correspond to an edge in $\mathcal{D}$, but may instead be given implicitly by an integral over further variables, or may be equal to $1$ due to a related Markov property of $P$, see also the example below. We need to take these special cases into account when checking the applicability of the inverse chain rule algorithmically.

\bigskip

It may sometimes also be useful to substitute $u_{w} \coloneqq F_{w}(x_{w})$, that is $\diff u_{w} = f_{w}(x_{w}) \diff x_{w}$, for all $w \in J$ in the pair-copula decomposition of $f_{I}$, and thus to write
\begin{equation*}
	f_{I}(\boldsymbol{x}_{I}) = \int_{[0, 1]^{\abs{J}}} c_{I \cup J} \Bigl( \bigl( F_{w}(x_{w}) \bigr)_{w \in I}, \boldsymbol{u}_{J} \Bigr) \! \prod_{w \in I} \! f_{w}(x_{w}) \diff \boldsymbol{u}_{J} .
\end{equation*}
A similar transformation can be applied to the denominator in Equation \eqref{eq:dagcondcdf} if integration variables are present.

\begin{exmp}[continued]
Consider the integral $\int_{-\infty}^{\cdot} f_{356} \diff x_{3}$ associated to the DAG $\mathcal{D}$ in Figure~\ref{fig:dagcdf}, which will later appear when deriving a pair-copula decomposition for $F_{3 \vert 56}$. Observing that
\begin{align*}
	\int_{\Real^{2}} f_{4} \, c_{41 \vert 2}(F_{4 \vert 2}, F_{1 \vert 2}) \, c_{42}(F_{4}, F_{2}) \, f_{3} \, c_{31}(F_{3}, F_{1}) \, f_{2} \, c_{21}(F_{2}, F_{1}) \, f_{1} \diff \boldsymbol{x}_{12} & = \int_{\Real^{2}} f_{1234} \diff \boldsymbol{x}_{12} = f_{34}\\
	& = f_{4} \, f_{3} \, c_{43}(F_{4}, F_{3}) ,
\end{align*}
Equation \eqref{eq:dagexample1} yields
\begin{align*}
	\int_{-\infty}^{\cdot} f_{356} \diff x_{3} & = \int_{-\infty}^{\cdot} \int_{\Real} f_{3} \, c_{63 \vert 54}(F_{6 \vert 54}, F_{3 \vert 54}) \, c_{53 \vert 4}(F_{5 \vert 4}, F_{3 \vert 4}) \, c_{43}(F_{4}, F_{3})\\
	& \quad \cdot f_{6} \, c_{64 \vert 5}(F_{6 \vert 5}, F_{4 \vert 5}) \, c_{65}(F_{6}, F_{5}) \, f_{5} \, c_{54}(F_{5}, F_{4}) \, f_{4} \diff x_{4} \diff x_{3} .
\end{align*}
Note that $c_{43}$ is not available in the pair-copula decomposition of $f$. Since by Equation~\eqref{eq:chainrule}
\begin{equation*}
	\int_{-\infty}^{\cdot} f_{3} \, c_{63 \vert 54}(F_{6 \vert 54}, F_{3 \vert 54}) \, c_{53 \vert 4}(F_{5 \vert 4}, F_{3 \vert 4}) \, c_{43}(F_{4}, F_{3}) \diff x_{3} = h_{6\underline{3} \vert 54}(F_{6 \vert 54}, F_{3 \vert 54}) ,
\end{equation*}
we can, however, simplify the integral with respect to $x_{3}$, and $c_{43}$ vanishes. We obtain
\begin{equation}\label{eq:dagexample2}
	\int_{-\infty}^{\cdot} f_{356} \diff x_{3} = \int_{\Real} f_{6} \, h_{6\underline{3} \vert 54}(F_{6 \vert 54}, F_{3 \vert 54}) \, c_{64 \vert 5}(F_{6 \vert 5}, F_{4 \vert 5}) \, c_{65}(F_{6}, F_{5}) \, f_{5} \, c_{54}(F_{5}, F_{4}) \, f_{4} \diff x_{4}.
\end{equation}
\end{exmp}

\subsubsection*{Pair-copula decompositions for conditional cdfs}

Summing up, a pair-copula decomposition for the conditional cdf $F_{v \vert K}(\, \cdot \, \vert \, \boldsymbol{x}_{K})$ in Equation \eqref{eq:dagcondcdf} is obtained in three steps. First, we apply Theorem \ref{thm:dagpdf} to $f_{\lbrace v \rbrace \cup K}$ and $f_{K}$. Second, we possibly apply the inverse chain rule to the integral with respect to $x_{v}$ in the numerator. Last, we cancel common factors like $\prod_{w \in K} f_{w}(x_{w})$ in the numerator and the denominator. The procedure is summarised in Algorithm \ref{alg:dagcondcdf}.

\smallskip

\begin{algorithm}[htb]
	\begin{algorithmic}[1]
		\Require Well-ordered DAG $\mathcal{D}$; set of parent orderings $\mathcal{O}$; vertex $v \in V$ (conditioned variable), vertex set $K \subseteq V_{-v}$ (conditioning variables).
		\Ensure Factorisation $F$. \Comment{conditional cdf $F_{v \vert K}(y \, \vert \, \boldsymbol{x}_{K})$}
		\State \Comment{Exploit global $\mathcal{D}$-Markov property:}
		\While{$\exists \, w \in K \colon \csep{\lbrace v \rbrace}{\lbrace w \rbrace}{K_{-w}}{(\mathcal{D}_{\An(\lbrace v \rbrace \cup K)})^{m}}$}
			\State $K \leftarrow K_{-w}$;
		\EndWhile
		\State \Comment{Numerator:}
		\State $n \leftarrow \text{Algorithm\_\ref{alg:dagpdf}} \bigl( \mathcal{D}, \mathcal{O}, \lbrace v \rbrace \cup K \bigr)$;
		\State $n \leftarrow \int_{-\infty}^{y} n \diff x_{v}$; 
		\State simplify $n$ with inverse chain rule for variable $x_{v}$ if possible;
		\State \Comment{Denominator:}
		\State $d \leftarrow \text{Algorithm\_\ref{alg:dagpdf}} \bigl( \mathcal{D}, \mathcal{O}, K \bigr)$;
		\State \Comment{Conditional cdf $F_{v \vert K}(y \, \vert \, \boldsymbol{x}_{K})$:}
		\State cancel common factors in $n$ and $d$;
		\State $F \leftarrow \frac{n}{d}$;
	\end{algorithmic}
	\caption{Pair-copula decomposition of a conditional cdf.}
	\label{alg:dagcondcdf}
\end{algorithm}

As can be seen from Theorem \ref{thm:dagpdf} and Equation \eqref{eq:dagchainruleh}, the factorisation for $F_{v \vert K}(\, \cdot \, \vert \, \boldsymbol{x}_{K})$ obtained from Algorithm \ref{alg:dagcondcdf} may contain some new conditional cdfs. This problem can, however, be solved inductively. Let $w$ denote the maximal vertex in $\lbrace v \rbrace \cup K$ by the well-ordering of $\mathcal{D}$. Since Algorithm \ref{alg:dagcondcdf} only adds ancestors of $\lbrace v \rbrace \cup K$ as integration variables, all vertices involved in the new conditional cdfs are smaller than or equal to $w$ by the well-ordering of $\mathcal{D}$. In particular, those conditional cdfs involving $w$ are of the special form $F_{w \vert \! \pa(w; u)}(\, \cdot \, \vert \, \boldsymbol{x}_{\pa(w; u)})$ for some $u \in \pa(w)$, and can by Equation \eqref{eq:condcdf} iteratively be expressed as
\begin{equation*}
	F_{v \vert \! \pa(w; u)}(x_{v} \hspace{.2ex} \vert \hspace{.2ex} \boldsymbol{x}_{\pa(w; u)}) = h_{\underline{v}, u \vert \! \pa(v; u)} \bigl( F_{v \vert\! \pa(v; u)}(x_{v} \hspace{.2ex} \vert \hspace{.2ex} \boldsymbol{x}_{\pa(v; u)}), F_{u \vert\! \pa(v; u)}(x_{u} \hspace{.2ex} \vert \hspace{.2ex} \boldsymbol{x}_{\pa(v; u)}) \hspace{.2ex} \big\vert \hspace{.2ex} \boldsymbol{x}_{\pa(v; u)} \bigr).
\end{equation*}
Hence, all vertices involved in the algorithmically more demanding new conditional cdfs are strictly smaller than $w$ by the well-ordering of $\mathcal{D}$. Corresponding pair-copula decompositions for the new conditional cdfs can thus be computed inductively by again applying Algorithm \ref{alg:dagcondcdf}. Since $V$ is finite, the whole procedure terminates after finitely many steps, and the desired decomposition in terms of only univariate marginals and (conditional) pair copulas is obtained.

\bigskip

Overall, we observed that the problems of deriving pair-copula decompositions for a conditional cdf and a marginal pdf are deeply intertwined and can be solved by alternating iteration. Note that it is sufficient for our purposes to exploit only those conditional independence properties of $P$ which follow directly from graph separation in $\mathcal{D}$ via the global $\mathcal{D}$-Markov property. Once a complete decomposition for $f$ is obtained, the evaluation at $\boldsymbol{x} \in \Real^{d}$ can be performed vertex-by-vertex and parent-by-parent along the well-ordering of $\mathcal{D}$. That is, given $v^{\ast} \in V$ and $w^{\ast} \in \pa(v^{\ast})$, we first evaluate all terms corresponding to the marginals $F_{v}$ and the pair copulas $C_{v, w \vert \! \pa(v; w)}$ for $v$ smaller than $v^{\ast}$ by the well-ordering of $\mathcal{D}$ and $w <_{v} w^{\ast}$ if $w^{\ast} \in \pa(v)$, before evaluating the terms corresponding to $F_{v^{\ast}}$ and $C_{v^{\ast}, w^{\ast} \vert \! \pa(v^{\ast}; w^{\ast})}$.

\begin{exmp}[continued]
For the DAG $\mathcal{D}$ in Figure \ref{fig:dagcdf}, we sketch how to apply Algorithm~\ref{alg:dagcondcdf} to obtain a pair-copula decomposition for $F_{3 \vert 56}$. Note that $\mathcal{D}$ contains the edges $3 \rightarrow 5$ and $3 \rightarrow 6$, which is why neither $5$ nor $6$ can be removed from the conditioning set. We get $F_{3 \vert 56} = \int_{-\infty}^{\cdot} \frac{f_{356}}{f_{56}} \diff x_{3}$. Applying our previous results for $f_{356}$ and $f_{56}$, respectively, we further have
\begin{equation*}
	F_{3 \vert 56} = \frac{\int_{\Real} f_{6} \, h_{6\underline{3} \vert 54}(F_{6 \vert 54}, F_{3 \vert 54}) \, c_{64 \vert 5}(F_{6 \vert 5}, F_{4 \vert 5}) \, c_{65}(F_{6}, F_{5}) \, f_{5} \, c_{54}(F_{5}, F_{4}) \, f_{4} \diff x_{4}}{f_{6} \, c_{65}(F_{6}, F_{5}) \, f_{5}},
\end{equation*}
see Equation \eqref{eq:dagexample2}. Thus, by cancelling common factors, we finally obtain
\begin{align*}
	F_{3 \vert 56} & = \int_{\Real} h_{6\underline{3} \vert 54}(F_{6 \vert 54}, F_{3 \vert 54}) \, c_{64 \vert 5}(F_{6 \vert 5}, F_{4 \vert 5}) \, c_{54}(F_{5}, F_{4}) \, f_{4} \diff x_{4} .
\end{align*}
\end{exmp}

\subsection{Simulation, ML estimation, and model selection in PCBNs}\label{sec:pcbnstat}

Given (conditional) pair copulas $C_{v, w \vert \! \pa(v; w)}(\, \cdot \, , \cdot \, ; \boldsymbol{\theta}_{v, w \vert \! \pa(v; w)})$, $v \in V$, $w \in \pa(v)$, with joint parameter vector $\boldsymbol{\theta} \coloneqq (\boldsymbol{\theta}_{v, w \vert \! \pa(v; w)})_{v \in V, w \in \pa(v)} \in \boldsymbol{\Theta}$, above construction yields a $d$-variate copula model, which we will denote by $\lbrace C_{\mathcal{D}, \mathcal{O}, \boldsymbol{\theta}} \, \vert \, \boldsymbol{\theta} \in \boldsymbol{\Theta} \rbrace$. Note that for computational convenience, we again make the \emph{simplifying assumption} of constant conditional copulas described in Section \ref{sec:vinestat}. Together with families of univariate marginals, $\lbrace C_{\mathcal{D}, \mathcal{O}, \boldsymbol{\theta}} \, \vert \, \boldsymbol{\theta} \in \boldsymbol{\Theta} \rbrace$ constitutes a statistical model which merges the advantages of graphical Markov modelling with the distributional flexibility of the pair-copula approach. We will refer to such a model as a \emph{pair-copula Bayesian network} (PCBN). We want to mention that PCBNs were first introduced in \citet{Kurowicka.Cooke:2005}. The analyses therein were, however, restricted to pair-copula families with the property that zero rank correlation implies independence. 

\subsubsection*{Simulation}

Write $V = \lbrace v_{1}, \ldots, v_{d} \rbrace$ according to the well-ordering of $\mathcal{D}$ and set $V_{-i} \coloneqq \lbrace v_{1}, \ldots, v_{i - 1} \rbrace$ for all $i \in \lbrace 1, \ldots, d \rbrace$. A sample $\boldsymbol{u} = (u_{v_{1}}, \ldots, u_{v_{d}}) \in [0, 1]^{d}$ from a fully specified PCBN with uniform $[0, 1]$ univariate margins is obtained by simulating $d$ independent uniform $[0, 1]$ variables $x_{1}, \ldots, x_{d}$ and applying the quantile transformations
\begin{align*}
	u_{v_{1}} & \coloneqq x_{1},\\
	u_{v_{2}} & \coloneqq F_{v_{2} \vert v_{1}}^{-1}(x_{2} \, \vert \, u_{v_{1}}; \boldsymbol{\theta}),\\
	u_{v_{3}} & \coloneqq F_{v_{3} \vert V_{-3}}^{-1}(x_{3} \, \vert \, \boldsymbol{u}_{V_{-3}}; \boldsymbol{\theta}),\\
	& \vdots\\
	u_{v_{d}} & \coloneqq F_{v_{d} \vert V_{-d}}^{-1}(x_{d} \, \vert \, \boldsymbol{u}_{V_{-d}}; \boldsymbol{\theta}) .
\end{align*}
The order in which the components of $\boldsymbol{u}$ are generated is given by the well-ordering of $\mathcal{D}$. Solving transformation equation $i$ for $x_{i}$, we have by the local $\mathcal{D}$-Markov property in Equation \eqref{eq:localmarkov}
\begin{equation}\label{eq:dagsiminv}
	x_{i} = F_{v_{i} \vert V_{-i}}(u_{v_{i}} \, \vert \, \boldsymbol{u}_{V_{-i}}; \boldsymbol{\theta}) = F_{v_{i} \vert \! \pa(v_{i})}(u_{v_{i}} \, \vert \, \boldsymbol{u}_{\pa(v_{i})}; \boldsymbol{\theta}), \quad i \in \lbrace 1, \ldots, d \rbrace.
\end{equation}
Now assume that $\pa(v_{i}) \neq \emptyset$, and let $w$ denote the largest vertex in $\pa(v_{i})$ by the parent ordering $<_{v_{i}}$. Then Equations \eqref{eq:dagsiminv} and \eqref{eq:condcdf} yield
\begin{equation}\label{eq:dagsim}
	x_{i} = h_{\underline{v_{i}}, w \vert \! \pa(v_{i}; w)} \bigl( F_{v_{i} \vert \! \pa(v_{i}; w)}(u_{v_{i}} \, \vert \, \boldsymbol{u}_{\pa(v_{i}; w)}; \boldsymbol{\theta}), F_{w \vert \! \pa(v_{i}; w)}(u_{w} \, \vert \, \boldsymbol{u}_{\pa(v_{i}; w)}; \boldsymbol{\theta}); \boldsymbol{\theta} \bigr).
\end{equation}
Since $u_{v_{i}}$ is only contained in the first h-function argument $F_{v_{i} \vert \! \pa(v_{i}; w)}(u_{v_{i}} \, \vert \, \boldsymbol{u}_{\pa(v_{i}; w)}; \boldsymbol{\theta})$ on the right hand side of Equation \eqref{eq:dagsim}, we obtain by induction that the only inverse functions needed in the computation of $u_{v_{i}}$ are the inverse h-functions $h_{\underline{v_{i}}, w^{\ast} \vert \! \pa(v_{i}, w^{\ast})}^{-1}$, $w^{\ast} \in \pa(v_{i})$.

\subsubsection*{ML estimation}

ML estimation for PCBNs was first considered in \citet{Bauer.Czado.Klein:2012}. Let $\boldsymbol{u} = \left( \boldsymbol{u}^{1}, \ldots, \boldsymbol{u}^{n} \right)$, $n \in \Nat$, be a realisation of a sample of i.i.d.\ observations $\boldsymbol{U}^{1}, \ldots, \boldsymbol{U}^{n}$ from a random variable $\boldsymbol{U}$ on $[0, 1]^{d}$ with copula family $\lbrace C_{\mathcal{D}, \mathcal{O}, \boldsymbol{\theta}} \, \vert \, \boldsymbol{\theta} \in \boldsymbol{\Theta} \rbrace$ and uniform univariate margins. The restriction to uniform univariate margins is made along the same lines as in Section \ref{sec:vinestat} for vine copula models. Equation \eqref{eq:dagpdf} yields the log-likelihood function
\begin{equation}\label{eq:dagll}
	l(\boldsymbol{\theta}; \boldsymbol{u}) = \sum_{k = 1}^{n} \sum_{v \in V} \sum_{w \in \pa(v)} \!\! \log c_{v, w \vert \! \pa(v; w)} \Bigr( F_{v \vert \! \pa(v; w)} \bigl( u_{v}^{k} \, \big\vert \, \boldsymbol{u}_{\pa(v; w)}^{k}; \boldsymbol{\theta} \bigr), F_{w \vert \! \pa(v; w)} \bigl( u_{w}^{k} \, \big\vert \, \boldsymbol{u}_{\pa(v; w)}^{k}; \boldsymbol{\theta} \bigr); \boldsymbol{\theta} \Bigr) .
\end{equation}
ML estimation of the parameters in Equation \eqref{eq:dagll} can be performed using a stepwise approach similar to the one discussed in Section \ref{sec:vinestat} for vine copula models. The only difference to vine copula models is that we iterate over the vertices of $\mathcal{D}$ and their respective parents instead of over the trees of an R-vine. Hence again, in a first step, \emph{sequential ML estimates} are computed and in a second step, using the sequential ML estimates as starting values, \emph{joint ML estimates} $\boldsymbol{\widehat{\theta}}_{v, w \vert \! \pa(v; w)}$, $v \in V$, $w \in \pa(v)$, are inferred.

\subsubsection*{Model selection}

Model selection for PCBNs involves estimation of the DAG $\mathcal{D}$, selection of the set $\mathcal{O}$ of parent orderings, and selection of the pair-copula families for $C_{v, w \vert \! \pa(v; w)}$, $v \in V$, $w \in \pa(v)$. Estimation of $\mathcal{D}$ will be the subject of Section \ref{sec:pc}. Given $\mathcal{D}$ and $\mathcal{O}$, the selection of pair-copula families can be performed in a similar way as in Section \ref{sec:vinestat} for vine copula models, again with the difference that the iteration is vertex-by-vertex and parent-by-parent instead of tree-by-tree.

\bigskip

For the selection of $\mathcal{O}$ we propose a greedy-type procedure inspired by the structure selection algorithm for vine copula models outlined in Section \ref{sec:vinestat}. Clearly, an ordering of the parents of a vertex $v \in V$ is only required if $\pa(v) \neq \emptyset$. We assume $\mathcal{D}$ is well-ordered. Let $v \in V$ and assume that $k \coloneqq \abs{\pa(v)} \ge 1$. Moreover, let $i \in \lbrace 1, \ldots, k \rbrace$ and assume that we have already selected the $i - 1$ smallest parents of $v$, denoted by $w_{1} <_{v} \cdots <_{v} w_{i - 1}$. This implies that we have already selected pair-copula families for $C_{v^{\ast}, w \vert \! \pa(v^{\ast}; w)}$, $v^{\ast}$ smaller than $v$ by the well-ordering of $\mathcal{D}$, $w \in \pa(v^{\ast})$, and $C_{v, w_{j} \vert \! \pa(v; w_{j})}$, $j < i$. Also, this implies that we have inferred corresponding ML parameter estimates, which we summarise in the vector $\boldsymbol{\widehat{\theta}}$. Let $W_{-i} \coloneqq \lbrace w_{1}, \ldots, w_{i - 1} \rbrace$. The selection of $w_{i}$ is performed in three steps. First, we compute the pseudo-observations $F_{v \vert W_{-i}} \bigl( u_{v}^{k} \, \big\vert \, \boldsymbol{u}_{W_{-i}}^{k}; \boldsymbol{\widehat{\theta}} \bigr)$ and $F_{w \vert W_{-i}} \bigl( u_{w}^{k} \, \big\vert \, \boldsymbol{u}_{W_{-i}}^{k}; \boldsymbol{\widehat{\theta}} \bigr)$, $k \in \lbrace 1, \ldots, n \rbrace$, for all $w \in \pa(v) \setminus W_{-i}$. Note that for $i = 1$, nothing needs to be done since all univariate marginals are uniform on $[0, 1]$. Second, we assign a weight $\omega_{v, w}$ to every edge $w \rightarrow v$, $w \in \pa(v) \setminus W_{-i}$, based on the previously calculated pseudo-observations, and choose $w_{i}$ such that $w_{i} \rightarrow v$ has optimal edge weight. Suitable weights are, for instance, the absolute values of estimates of Kendall's $\tau$, or AIC or BIC values of selected pair-copula families with estimated parameters. Last, we select a pair-copula family for $C_{v, w_{i} \vert \! \pa(v; w_{i})}$ and compute an ML estimate of the corresponding parameter(s). Again, this last step may have already been performed when computing the edge weights $\omega_{v, w}$.

\section{Structure estimation in Bayesian networks using the PC algorithm}\label{sec:pc}

The first task of modelling the joint distribution of a given set of variables with a Bayesian network is to identify the DAG $\mathcal{D} = (V, E)$ specifying the Markov structure of the variables. A convenient approach to defining $\mathcal{D}$ is the use of \emph{expert knowledge}. However, the scope of this approach is rather limited since expert knowledge is often incomplete or unavailable. Data-driven \emph{structure estimation algorithms} provide a computer-based alternative to elicited expert knowledge. \citet{Robinson:1973} has shown that the number $n_{d}$ of DAGs on $d \coloneqq \abs{V}$ labelled vertices is given by the recurrence equation
\begin{equation*}
	n_{0} = 1, \qquad n_{d} = \sum_{k = 1}^{d} (-1)^{k - 1} \dbinom{d}{k} \, 2^{k (d - k)} \, n_{d - k}.
\end{equation*}
Since $n_{d}$ grows super-exponentially in $d$, a systematic trial of all possible DAGs on $V$ is infeasible, and thus efficient searching algorithms are required. A considerable number of structure estimation algorithms has been proposed over the last two decades, see \citet[Chapters $8$ -- $11$]{Neapolitan:2003} and \citet[Chapter $18$]{Koller.Friedman:2009} for an overview. The majority of these algorithms follow one of the two estimation approaches predominant in the literature: the \emph{constraint-based} and the \emph{score-and-search-based} approach. In the constraint-based approach, $\mathcal{D}$ is inferred from a series of conditional independence tests. In the score-and-search-based approach, $\mathcal{D}$ is found by optimising a given scoring function---like AIC or BIC---over a suitable search space, for instance the space of all DAGs or the space of all Markov-equivalence classes. Besides, there exist hybrid algorithms which combine both approaches. Unfortunately, available implementations of aforementioned algorithms are mainly confined to discrete or Gaussian models and are hence not suited for our non-Gaussian continuous Bayesian networks.

\subsection{The PC algorithm}

We will provide a structure estimation algorithm that is particularly suited to finding the DAG $\mathcal{D} = (V, E)$ underlying a non-Gaussian continuous Bayesian network. Our algorithm is a version of one of the most popular constraint-based estimation algorithms, the \emph{PC algorithm} (named after its inventors \emph{P}eter Spirtes and \emph{C}lark Glymour), see \citet{Spirtes.Glymour:1991} and \citet[Section $5.4.2$]{Spirtes.Glymour.Scheines:2000}. To fix notation and for the reader's convenience, we will now recall the PC algorithm. Let $P$ be an absolutely continuous $\mathcal{D}$-Markovian probability measure on $[0, 1]^{d}$ with uniform univariate margins. The restriction to uniform univariate margins is made along the same lines as in Section \ref{sec:pcbnstat}. Moreover, let $\boldsymbol{u} = \left( \boldsymbol{u}^{1}, \ldots, \boldsymbol{u}^{n} \right)$, $n \in \Nat$, be a realisation of a sample of i.i.d.\ observations $\boldsymbol{U}^{1}, \ldots, \boldsymbol{U}^{n}$ from a random variable $\boldsymbol{U}$ distributed as $P$. The PC algorithm for estimating $\mathcal{D}$ from $\boldsymbol{u}$ involves three major steps in which the complete UG $\mathcal{G}$ on $V$ is gradually transformed into a CG $\mathcal{G}^{\ast}$ on $V$, which is supposed to be the essential graph $\mathcal{D}^{e}$ corresponding to the Markov-equivalence class $[\mathcal{D}]$ of $\mathcal{D}$. The resulting CG $\mathcal{G}^{\ast}$ can then be extended to a DAG as outlined in Section \ref{sec:bn}.

\bigskip

In the first step of the PC algorithm, a series of tests for conditional independence is performed on $\boldsymbol{u}$. More precisely, for all distinct vertices $i, j \in V$ and chosen vertex sets $K \subseteq V \setminus \lbrace i, j \rbrace$, the null hypothesis $\text{H}_{0} \colon \cind{U_{i}}{U_{j}}{\boldsymbol{U}_{K}}$ is tested against the general alternative $\text{H}_{1} \colon \ncind{U_{i}}{U_{j}}{\boldsymbol{U}_{K}}$ of conditional dependence. Given a suitable independence test of choice, we denote the test decision at significance level $\alpha \in (0,1)$ by $T_{\alpha}(\boldsymbol{u}_{i}, \boldsymbol{u}_{j}; \boldsymbol{u}_{K}) \in \lbrace \text{H}_{0}, \text{H}_{1} \rbrace$. We will later introduce a novel class of conditional independence tests that is particularly tailored to the algorithm and applicable to non-Gaussian continuous data. If $T_{\alpha}(\boldsymbol{u}_{i}, \boldsymbol{u}_{j}; \boldsymbol{u}_{K}) = \text{H}_{0}$, the edge $i \edge j$ is removed from $\mathcal{G}$ and the conditioning set $K$ is stored in two variables $S_{i j}$ and $S_{j i}$ for later use. As a result of the first step, $\mathcal{G}$ is turned into the skeleton of $\mathcal{G}^{\ast}$. Step one is given in Algorithm \ref{alg:pcskeleton}.

\bigskip

\begin{algorithm}[htb]
	\begin{algorithmic}[1]
		\Require Data set $\boldsymbol{u}$; significance level $\alpha \in (0,1)$; conditional independence test with test decision $T_{\alpha}(\boldsymbol{u}_{i}, \boldsymbol{u}_{j}; \boldsymbol{u}_{K})$ for the null hypothesis $\text{H}_{0} \colon \cind{U_{i}}{U_{j}}{\boldsymbol{U}_{K}}$, $i \neq j \in V$, $K \subseteq V \setminus \lbrace i, j \rbrace$.
		\Ensure Skeleton $\mathcal{G} = (V, E_{\mathcal{G}})$; separation sets $S_{i j}$, $i \neq j \in V$, $(i, j) \notin E_{\mathcal{G}}, (j, i) \notin E_{\mathcal{G}}$.
		\State $\mathcal{G} \leftarrow$ complete UG on $V$;
		\State $k \leftarrow 0$;
		\Repeat
			\For{$i \in V$ \textbf{and} $j \in \ad(i)$} \Comment{$i$ and $j$ are adjacent in $\mathcal{G}$}
				\If{$T_{\alpha}(\boldsymbol{u}_{i}, \boldsymbol{u}_{j}; \boldsymbol{u}_{K}) = \text{H}_{0}$ for any $K \subseteq \ad(i) \setminus \lbrace j \rbrace$ with $\abs{K} = k$}
					\State delete $i \edge j$ from $\mathcal{G}$;
					\State $S_{i j} \leftarrow K$;
					\State $S_{j i} \leftarrow K$;
				\EndIf
			\EndFor
			\State $k \leftarrow k + 1$.
		\Until{$\abs{\ad(i)} \le k$ for all $i \in V$.}
	\end{algorithmic}
	\caption{PC algorithm: finding the skeleton.}
	\label{alg:pcskeleton}
\end{algorithm}

In the second step, $\mathcal{G}$ is transformed into a CG by introducing a v-structure $i \rightarrow k \leftarrow j$ whenever $i$ and $j$ are non-adjacent, $k \in \ad(i) \cap \ad(j)$, and $k \notin S_{i j}$. In the last step, $\mathcal{G}$ is transformed into $\mathcal{G}^{\ast}$ by directing further edges of $\mathcal{G}$ to prevent new v-structures and directed cycles, until no more edges need direction. Steps two and three are given in Algorithm \ref{alg:pcedgedir}, where the third step was taken from \citet[Section $2.5$]{Pearl:2009}. If $P$ is faithful to $\mathcal{D}$ and if all statistical test decisions made in Algorithm \ref{alg:pcskeleton} are correct, then Algorithm \ref{alg:pcedgedir} will return the correct graph $\mathcal{D}^{e}$, see \citet{Meek:1995}. Due to the finite sample size or the existence of hidden variables, the application of Algorithm \ref{alg:pcskeleton} to empirical data may sometimes, however, lead to conflicting information about edge directions. That is, it may be possible in a given situation that Algorithm \ref{alg:pcedgedir}, while introducing v-structures, first orients an undirected edge $i \edge j$ into $i \rightarrow j$, and later tries to introduce $i \leftarrow j$. In such a situation, we keep $i \rightarrow j$ and skip the new v-structure including $i \leftarrow j$. We can test whether the resulting CG can still be extended to a DAG without introducing new v-structures or directed cycles using the algorithm by \citet{Dor.Tarsi:1992}. The PC algorithm can also be adapted to incorporate existing expert knowledge, see \citet{Meek:1995} and \citet{Moole.Valtorta:2004}. We will henceforth assume that $P$ is faithful to $\mathcal{D}$ and that there are no hidden variables.

\begin{algorithm}[htb]
	\begin{algorithmic}[1]
		\Require Skeleton $\mathcal{G} = (V, E_{\mathcal{G}})$; separation sets $S_{i j}$, $i \neq j \in V$, $(i, j) \notin E_{\mathcal{G}}, (j, i) \notin E_{\mathcal{G}}$.
		\Ensure Chain graph $\mathcal{G}$.
		\State\Comment{Introduce v-structures:}
		\For{$i \in V$ \textbf{and} $j \notin \ad(i)$ \textbf{and} $k \in \ad(i) \cap \ad(j)$}
			\If{$k \notin S_{i j}$}
				\State replace $i \edge k \edge j$ by $i \rightarrow k \leftarrow j$ in $\mathcal{G}$;
			\EndIf
		\EndFor 
		\State\Comment{Orient as many undirected edges as possible by repeated application of the following rules:}
		\Repeat
			\State \textbf{R1} orient $j \edge k$ into $j \rightarrow k$ whenever $\mathcal{G}$ contains $i \rightarrow j$ and $k \notin \ad(i)$;
			\State \textbf{R2} orient $i \edge j$ into $i \rightarrow j$ whenever $\mathcal{G}$ contains $i \rightarrow k \rightarrow j$;
			\State \textbf{R3} orient $i \edge j$ into $i \rightarrow j$ whenever $\mathcal{G}$ contains $i \edge k \rightarrow j$ and $i \edge l \rightarrow j$, and $l \notin \ad(k)$;
		\Until{no more edges can be directed;}
	\end{algorithmic}
	\caption{PC algorithm: introducing edge directions}
	\label{alg:pcedgedir}
\end{algorithm}

\subsubsection*{Testing conditional independence using partial correlations}

The centrepiece of the PC algorithm---as of any constraint-based estimation algorithm---is the test for conditional independence. In a Gaussian framework, the test of choice is usually a test for zero partial correlation $\rho_{i j \cdot K}$, see, for instance, \citet[Section $4.3$]{Anderson:2003}. The null hypothesis then translates into $\text{H}_{0} \colon \rho_{i j \cdot K}(X_{i}, X_{j}; X_{K}) = 0$, where $X_{k} \coloneqq \Phi^{-1}(U_{k})$ for all $k \in V$, and $\Phi$ denotes the univariate standard normal cdf. Here, the quantile function $\Phi^{-1}$ is applied to $U$ in order to transform the uniform univariate copula margins to standard normal margins. The conditional independence test is based on the asymptotic normality
\begin{equation*}
	\sqrt{d - \abs{K} - 3} \ \widehat{z}_{n} \xrightarrow[n \to \infty]{\mathcal{L}} \Norm(0, 1), \qquad \widehat{z}_{n} \coloneqq \frac{1}{2} \log{\left( \frac{1 + \widehat{\rho}_{i j \cdot K}(\boldsymbol{X}_{i}^{n}, \boldsymbol{X}_{j}^{n}; \boldsymbol{X}_{K}^{n})}{1 - \widehat{\rho}_{i j \cdot K}(\boldsymbol{X}_{i}^{n}, \boldsymbol{X}_{j}^{n}; \boldsymbol{X}_{K}^{n})} \right)},
\end{equation*}
of the Fisher's $z$-transformed partial-correlation estimator $\widehat{\rho}_{i j \cdot K}$ under $\text{H}_{0}$, see again \citet[Section $4.3$]{Anderson:2003}. Here, $\xrightarrow[]{\mathcal{L}}$ denotes convergence in distribution, $\Norm(0, 1)$ is the univariate standard normal distribution, and $\boldsymbol{X}_{k}^{n} \coloneqq \left( \Phi^{-1}(U_{k}^{1}), \ldots, \Phi^{-1}(U_{k}^{n}) \right)$ for all $k \in V$. \citet{Kalisch.Buehlmann:2007} have proven uniform convergence of the PC algorithm under joint normality and a mild sparsity assumption for the underlying DAG, cf.\ also \citet{Harris.Drton:2012}. An implementation of the PC algorithm with above partial correlation test is available in the R package \texttt{pcalg} \citep{Kalisch.Maechler.Colombo:2012}. The \texttt{pcalg} package also provides an interface for self-implemented conditional independence tests.

\subsection{Testing conditional independence using vine copulas and the Rosenblatt transform}

Above test for zero partial correlation was derived under the assumption of joint normality. We will now introduce a copula-based alternative test for conditional independence that is also applicable to non-Gaussian continuous data. Assume $K \neq \emptyset$. Otherwise, the problem reduces to testing ordinary (unconditional) stochastic independence. Let $F_{i, j \vert K}(\, \cdot \, , \cdot \, \vert \, \boldsymbol{v}_{K})$ denote the conditional cdf of $U_{i}$ and $U_{j}$ given $\boldsymbol{U}_{K} = \boldsymbol{v}_{K}$, and let $C_{i, j \vert K}(\, \cdot \, , \cdot \, \vert \, \boldsymbol{v}_{K})$ be the corresponding conditional copula. Moreover, let $C_{\dperp}$ denote the independence copula on $[0, 1]^{2}$. The conditional independence $\cind{U_{i}}{U_{j}}{\boldsymbol{U}_{K}}$ holds if and only if
\begin{equation*}
	F_{i, j \vert K}(v_{i}, v_{j} \, \vert \, \boldsymbol{v}_{K}) = C_{i, j \vert K} \bigl( F_{i \vert K}(v_{i} \, \vert \, \boldsymbol{v}_{K}), F_{j \vert K}(v_{j} \, \vert \, \boldsymbol{v}_{K}) \, \big\vert \, \boldsymbol{v}_{K} \bigr) = F_{i \vert K}(v_{i} \, \vert \, \boldsymbol{v}_{K}) \, F_{j \vert K}(v_{j} \, \vert \, \boldsymbol{v}_{K})
\end{equation*}
for all $v_{i}, v_{j} \in [0, 1]$ and $P_{K}$-almost all $\boldsymbol{v}_{K} \in [0, 1]^{\abs{K}}$, where $\boldsymbol{U}_{K} \sim P_{K}$. Hence, the null hypothesis of the conditional independence test can be stated as $H_{0} \colon C_{i, j \vert K}(\, \cdot \, , \cdot \, \vert \, \boldsymbol{v}_{K}) = C_{\dperp}(\, \cdot \, , \cdot \,)$ for $P_{K}$-almost all $\boldsymbol{v}_{K} \in [0, 1]^{\abs{K}}$. Using the simplifying assumption that $C_{i, j \vert K}(\, \cdot \, , \cdot \, \vert \, \boldsymbol{v}_{K})$ depends on $\boldsymbol{v}_{K}$ only through $F_{i \vert K}(\, \cdot \, , \cdot \, \vert \, \boldsymbol{v}_{K})$ and $F_{j \vert K}(\, \cdot \, , \cdot \, \vert \, \boldsymbol{v}_{K})$ discussed in Section \ref{sec:vinestat}, we drop $\boldsymbol{v}_{K}$ from $C_{i, j \vert K}(\, \cdot \, , \cdot \, \vert \, \boldsymbol{v}_{K})$ and approximate $H_{0}$ by the more accessible null hypothesis $H_{0}^{\ast} \colon C_{i, j \vert K}(\, \cdot \, , \cdot \,) = C_{\dperp}(\, \cdot \, , \cdot \,)$. The new null hypothesis $H_{0}^{\ast}$ can be tested using any test for ordinary (unconditional) stochastic independence of two continuous random variables applied to the transformed observations $W_{i \vert K}^{1}, \ldots, W_{i \vert K}^{n}$ and $W_{j \vert K}^{1}, \ldots, W_{j \vert K}^{n}$, where
\begin{equation}\label{eq:vartrafo}
	W_{i \vert K}^{k} \coloneqq F_{i \vert K} \bigl( U_{i}^{k} \, \big\vert \, \boldsymbol{U}_{K}^{k} \bigr) \quad \text{and} \quad W_{j \vert K}^{k} \coloneqq F_{j \vert K} \bigl( U_{j}^{k} \, \big\vert \, \boldsymbol{U}_{K}^{k} \bigr)
\end{equation}
for all $k \in \lbrace 1, \ldots, n \rbrace$. \citet{Song:2009} called Equation \eqref{eq:vartrafo} the \emph{Rosenblatt transform} after \citet{Rosenblatt:1952}, while \citet{Bergsma:2011} called it the \emph{partial copula transform}. Given a realisation $\boldsymbol{u}$ of $\left( \boldsymbol{U}^{1}, \ldots, \boldsymbol{U}^{n} \right)$, the difficulty of this approach lies in the computation of the transformed realisations $\boldsymbol{w}_{i \vert K}$ and $\boldsymbol{w}_{j \vert K}$, where $w_{i \vert K}^{k} \coloneqq F_{i \vert K} \bigl( u_{i}^{k} \, \big\vert \, \boldsymbol{u}_{K}^{k} \bigr)$ and $w_{j \vert K}^{k} \coloneqq F_{j \vert K} \bigl( u_{j}^{k} \, \big\vert \, \boldsymbol{u}_{K}^{k} \bigr)$ for all $k \in \lbrace 1, \ldots, n \rbrace$. Note that the conditional cdfs $F_{i \vert K}(\, \cdot \, \vert \, \boldsymbol{v}_{K})$ and $F_{j \vert K}(\, \cdot \, \vert \, \boldsymbol{v}_{K})$ are typically unknown and need to be estimated in the course of the testing procedure. \citet{Bergsma:2011} suggested the use of non-parametric kernel estimators for this task. By contrast, we propose a parametric estimation method that is based on vine copula models.

\subsubsection*{Estimating conditional cdfs using vine copula models}

Taking another look at vine copula models as described in Section \ref{sec:vinestat}, we observe that transformed realisations like $\boldsymbol{w}_{i \vert K}$ and $\boldsymbol{w}_{j \vert K}$ naturally emerge in the log-likelihood function. In fact, given any distinct $i, j \in V$ and $K \subseteq V \setminus \lbrace i, j \rbrace$, it is always possible to construct a regular vine $\mathcal{V} = (T_{1}, \ldots, T_{p})$, $p \coloneqq 1 + \abs{K}$, in which tree $T_{1}$ has vertex set $V_{1} = \lbrace i \rbrace \cup \lbrace j \rbrace \cup K$ and tree $T_{p}$ is of the form $i, l \vert K_{-l} \stackrel{i, j \vert K}{\dash} j, m \vert K_{-m}$ for some $l, m \in K$. The corresponding log-likelihood function $l \bigl(\boldsymbol{\theta}; \boldsymbol{u}_{\lbrace i \rbrace \cup \lbrace j \rbrace \cup K} \bigr)$, $\boldsymbol{\theta} \in \Theta$, contains the pair-copula pdf $c_{i, j \vert K}$ with arguments $F_{i \vert K} \bigl( u_{i}^{k} \, \big\vert \, \boldsymbol{u}_{K}^{k}; \boldsymbol{\theta} \bigr)$ and $F_{j \vert K} \bigl( u_{j}^{k} \, \big\vert \, \boldsymbol{u}_{K}^{k}; \boldsymbol{\theta} \bigr)$ for all $k \in \lbrace 1, \ldots, n \rbrace$. Thus, by computing an ML estimate $\widehat{\boldsymbol{\theta}}$ of $\boldsymbol{\theta}$ and subsequently evaluating $l$ at $\widehat{\boldsymbol{\theta}}$, we obtain estimates $\widehat{w}_{i \vert K}^{k} \coloneqq F_{i \vert K} \bigl( u_{i}^{k} \, \big\vert \, \boldsymbol{u}_{K}^{k}; \widehat{\boldsymbol{\theta}} \bigr)$ and $\widehat{w}_{j \vert K}^{k} \coloneqq F_{j \vert K} \bigl( u_{j}^{k} \, \big\vert \, \boldsymbol{u}_{K}^{k}; \widehat{\boldsymbol{\theta}} \bigr)$ of $w_{i \vert K}^{k}$ and $w_{j \vert K}^{k}$, respectively, as a welcome side effect.

\bigskip

We call a vertex $v$ in Tree $T_{q}$, $q \in \lbrace 1, \ldots, p - 1 \rbrace$, an \emph{inner vertex} if $\abs{\ad(v)} \ge 2$. In order to construct such a vine $\mathcal{V}$, we have to follow one simple rule:

\medskip

\begin{tabular*}{\textwidth}{@{\extracolsep{\fill}}llp{0.90\textwidth}}
	& \textbf{R} & Neither $i$ nor $j$ may be part of an inner vertex in the trees $T_{1}, \ldots, T_{p - 1}$ of $\mathcal{V}$.
\end{tabular*}

\medskip

Following \textbf{R}, it is even possible to restrict the class of R-vines to C- or D-vines. The only inner vertices of a C-vine are the root vertices of the trees $T_{1}, \ldots, T_{p - 1}$. Thus, in a C-vine obeying \textbf{R}, $i$ and $j$ do not appear in the root vertices of the respective trees. Similarly, in a D-vine obeying \textbf{R}, $i$ and $j$ only appear in the boundary vertices of trees $T_{1}, \ldots, T_{p - 1}$. Figures \ref{fig:rvine} and \ref{fig:cdvine} give an example of a C-, a D-, and an R-vine, respectively, having the same edge label in tree $T_{p}$.

\bigskip

\begin{figure}[htb]
	\centering
		\begin{subfigure}[c]{.55\textwidth}
			\centering \includegraphics[width=\textwidth]{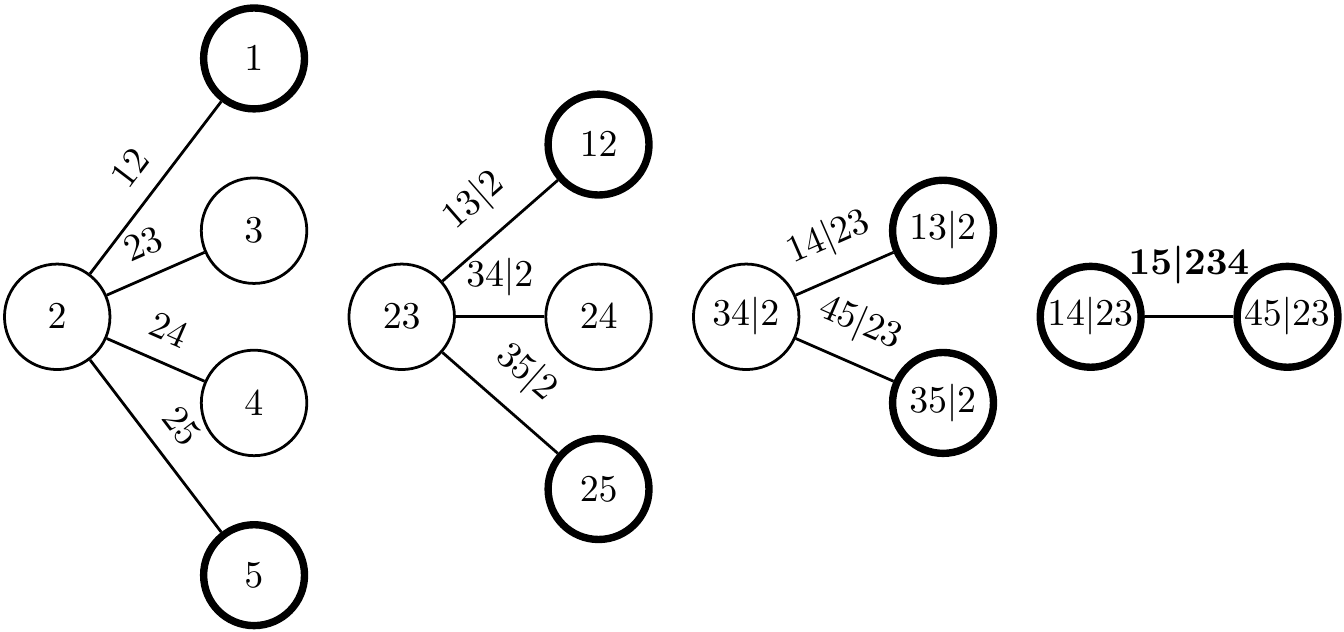}
		\end{subfigure}\qquad
		\begin{subfigure}[c]{.35\textwidth}
			\centering \includegraphics[width=\textwidth]{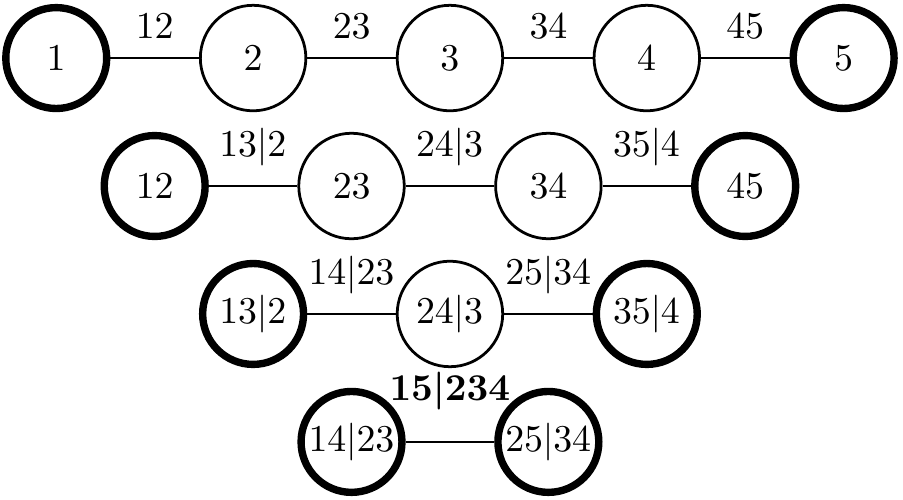}
		\end{subfigure}
	\caption{A C- (left) and a D-vine (right) on five vertices having the same edge label $15 \vert 234$ in tree $T_{4}$. A corresponding R-vine is given in Figure \ref{fig:rvine}. The three vines were constructed according to rule \textbf{R} with $i = 1$ and $j = 5$. Boundaries of nodes including either $1$ or $5$ appear in bold.}
	\label{fig:cdvine}
\end{figure}

The tree structure of $\mathcal{V}$ can be estimated from $\boldsymbol{u}_{\lbrace i \rbrace \cup \lbrace j \rbrace \cup K}$ by adapting the greedy search strategies described in Section \ref{sec:vinestat} to the new constraint \textbf{R}. An optimal C-vine obeying \textbf{R} is found by restricting the sets of possible root vertices for trees $T_{1}, \ldots, T_{p - 1}$ to vertices containing neither $i$ nor $j$, respectively. In order to find an optimal D-vine obeying \textbf{R}, the unconstrained TSP usually solved has to be replaced by a constrained TSP with fixed source vertex $i$ and destination vertex $j$. Finally, an optimal R-vine obeying \textbf{R} is found by first estimating a smaller R-vine $\mathcal{V}_{K}$ with first tree vertices $K$. Having found $\mathcal{V}_{K}$, vertex $i$ is then connected to a vertex $l \in K$ in tree $T_{1}$ such that the new edge $i \edge l$ has optimal edge weight amongst all possible edges $i \edge m$ for $m \in K$. The same is done for vertex $j$. Note that this way, $j$ cannot be connected to $i$. The newly formed structure is then sequentially transformed into $\mathcal{V}$ by analogously extending the remaining trees $T_{2}, \ldots, T_{p}$, such that the proximity condition and \textbf{R} are always satisfied and the corresponding edge weights are optimised. Copula selection and ML estimation in the resulting vine copula model is then performed as usual, see Section \ref{sec:vinestat}.

\subsubsection*{Vine-copula-based conditional independence tests}

Summing up, we test the conditional independence $\cind{U_{i}}{U_{j}}{\boldsymbol{U}_{K}}$ in three steps. In the first step, we construct a vine $\mathcal{V}$ on the vertices $\lbrace i \rbrace \cup \lbrace j \rbrace \cup K$ by applying a modified version of one of the structure estimation algorithms described in Section \ref{sec:vinestat} to $\boldsymbol{u}_{\lbrace i \rbrace \cup \lbrace j \rbrace \cup K}$. In the second step, we select corresponding pair-copula families, perform ML estimation in the resulting model, and evaluate the log-likelihood function $l$ at the estimated parameter vector $\widehat{\boldsymbol{\theta}}$ to obtain transformed realisations $\widehat{\boldsymbol{w}}_{i \vert K} \coloneqq \bigl( \widehat{w}_{i \vert K}^{k} \bigr)_{1 \le k \le n}$ and $\widehat{\boldsymbol{w}}_{j \vert K} \coloneqq \bigl( \widehat{w}_{j \vert K}^{k} \bigr)_{1 \le k \le n}$, respectively. In the last step, we apply a test for ordinary stochastic independence of two continuous random variables to $\widehat{\boldsymbol{w}}_{i \vert K}$ and $\widehat{\boldsymbol{w}}_{j \vert K}$. Note that in the first iteration step of Algorithm \ref{alg:pcskeleton}, only unconditional independences, that is $K = \emptyset$, are tested, and thus the independence test of choice is directly applied to $\boldsymbol{u}$.

\bigskip

We will examine the performance of our novel testing procedure in a simulation study in Section \ref{sec:simstudy} using three different tests for ordinary stochastic independence. Recycling notation, consider the null hypothesis $H_{0} \colon U_{i} \ind U_{j}$ vs.\ $H_{1} \colon U_{i} \nind U_{j}$. The first test used is a test for zero Kendall's $\tau$ with null hypothesis $H_{0}^{\ast} \colon \tau(U_{i}, U_{j}) = 0$ vs.\ $H_{1}^{\ast} \colon \tau(U_{i}, U_{j}) \neq 0$. Under $H_{0}$, the Kendall's $\tau$ estimator $\widehat{\tau}_{n}$ exhibits the asymptotic normality
\begin{equation*}
	\sqrt{\frac{9 n (n - 1)}{2 (2 n + 5)}} \ \widehat{\tau}_{n} \! \left( \boldsymbol{U}_{i}, \boldsymbol{U}_{j} \right) \xrightarrow[n \to \infty]{\mathcal{L}} \Norm(0, 1),
\end{equation*}
where $\boldsymbol{U}_{i} \coloneqq \bigl( U_{i}^{1}, \ldots, U_{i}^{n} \bigr)$ and $\boldsymbol{U}_{j} \coloneqq \bigl( U_{j}^{1}, \ldots, U_{j}^{n} \bigr)$, see \citet[Section $8.1$]{Hollander.Myles:1999}. In general, $\tau(U_{i}, U_{j}) = 0$ does not imply $U_{i} \ind U_{j}$. However, for many popular copula families like the Clayton, the Gaussian, and the Gumbel copula families, $H_{0}$ and $H_{0}^{\ast}$ are equivalent. The family of Student's t copulas serves as a counterexample. We then consider $H_{0}^{\ast}$ an approximation for $H_{0}$. The other two independence tests used in Section \ref{sec:simstudy} are of Cramér-von Mises type. More precisely, independence test number two is the test for zero Hoeffding's $D$ proposed by \citet{Hoeffding:1948}. P-values of the sample test statistic $\widehat{D}_{n}$ are computed using the asymptotically equivalent sample test statistic $\widehat{B}_{n}$ by \citet{Blum.Kiefer.Rosenblatt:1961}, see also \citet[Section $8.6$]{Hollander.Myles:1999}. Independence test number three is the test by \citet{Genest.Remillard:2004} based on the empirical copula process.

\section{Simulation study}\label{sec:simstudy}

We conducted an extensive simulation study to examine the small sample performance of the PC algorithm in finding the true Markov structure underlying a PCBN. To this end, we drew samples from various PCBNs based on the conditional independence properties represented by the DAG $\mathcal{D} = (V, E)$ in Figure \ref{fig:dagessgraph}. These PCBNs emerged from various choices of pair-copula families for $C_{12}$, $C_{13}$, $C_{24}$, and $C_{34 \vert 2}$, cf.\ Section \ref{sec:pcbn}. More precisely, we chose from the Clayton, Gumbel, Gaussian, and Student's t pair-copula families. These copula families exhibit considerable differences in their dependence structures and tail behaviours, see the simulation study in \citet{Bauer.Czado.Klein:2012} for an overview. We considered four PCBNs with all four pair copulas $C_{12}$, $C_{13}$, $C_{24}$, and $C_{34 \vert 2}$ coming from the same copula family, respectively. Additionally, we considered $24$ PCBNs with each pair copula $C_{12}$, $C_{13}$, $C_{24}$, and $C_{34 \vert 2}$ coming from a different copula family. Our choices of pair-copula families are given in Table \ref{tab:copulas}. For each choice of pair-copula families we then considered $16$ different parameter configurations arising from a selection of two different parameter values for each pair copula. The parameter values for each pair copula were chosen to correspond to values of Kendall's $\tau$ of $0.25$ and $0.75$, that is one low and one high rank-correlation specification. These Kendall's $\tau$ configurations are summarised in Table \ref{tab:tau}.

\bigskip

\begin{table}[htb]
	\centering
	\begin{tabular*}{\textwidth}{@{\extracolsep{\fill}}lcccccccccccccc}
		\toprule
		Copula & $1$ & $2$ & $3$ & $4$ & $5$ & $6$ & $7$ & $8$ & $9$ & $10$ & $11$ & $12$ & $13$ & $14$\\
		\midrule
		$C_{12}$ & C & G & N & t & C & C & C & C & C & C & G & G & G & G\\
		$C_{13}$ & C & G & N & t & G & G & N & N & t & t & C & C & N & N\\
		$C_{24}$ & C & G & N & t & N & t & G & t & G & N & N & t & C & t\\
		$C_{34 \vert 2}$ & C & G & N & t & t & N & t & G & N & G & t & N & t & C\\
		\midrule
		Copula & $15$ & $16$ & $17$ & $18$ & $19$ & $20$ & $21$ & $22$ & $23$ & $24$ & $25$ & $26$ & $27$ & $28$\\
		\midrule
		$C_{12}$ & G & G & N & N & N & N & N & N & t & t & t & t & t & t\\
		$C_{13}$ & t & t & C & C & G & G & t & t & C & C & G & G & N & N\\
		$C_{24}$ & C & N & G & t & C & t & C & G & G & N & C & N & C & G\\
		$C_{34 \vert 2}$ & N & C & t & G & t & C & G & C & N & G & N & C & G & C\\
		\bottomrule
	\end{tabular*}
	\caption{Selected pair-copula families for $C_{12}$, $C_{13}$, $C_{24}$, $C_{34 \vert 2}$. Copulas were chosen from the Clayton (C), Gumbel (G), Gaussian (N), and Student's t (t) pair-copula families. See Tables \ref{tab:tau} and \ref{tab:param} for further details on the pair-copula families used.} 
	\label{tab:copulas}
\end{table}

\begin{table}[htb]
	\centering
	\small
	\begin{tabular*}{\textwidth}{@{\extracolsep{\fill}}l@{\!}c@{\!}c@{\!}c@{\!}c@{\!}c@{\!}c@{\!}c@{\!}c@{\!}c@{\!}c@{\!}c@{\!}c@{\!}c@{\!}c@{\!}c@{\!}c}
		\toprule
		Copula & $1$ & $2$ & $3$ & $4$ & $5$ & $6$ & $7$ & $8$ & $9$ & $10$ & $11$ & $12$ & $13$ & $14$ & $15$ & $16$\\
		\midrule
		$C_{12}$ & $0.25$ & $0.75$ & $0.25$ & $0.25$ & $0.25$ & $0.75$ & $0.75$ & $0.75$ & $0.25$ & $0.25$ & $0.25$ & $0.75$ & $0.75$ & $0.75$ & $0.25$ & $0.75$\\
		$C_{13}$ & $0.25$ & $0.25$ & $0.75$ & $0.25$ & $0.25$ & $0.75$ & $0.25$ & $0.25$ & $0.75$ & $0.75$ & $0.25$ & $0.75$ & $0.75$ & $0.25$ & $0.75$ & $0.75$\\
		$C_{24}$ & $0.25$ & $0.25$ & $0.25$ & $0.75$ & $0.25$ & $0.25$ & $0.75$ & $0.25$ & $0.75$ & $0.25$ & $0.75$ & $0.75$ & $0.25$ & $0.75$ & $0.75$ & $0.75$\\
		$C_{34 \vert 2}$ & $0.25$ & $0.25$ & $0.25$ & $0.25$ & $0.75$ & $0.25$ & $0.25$ & $0.75$ & $0.25$ & $0.75$ & $0.75$ & $0.25$ & $0.75$ & $0.75$ & $0.75$ & $0.75$\\
		\bottomrule
	\end{tabular*}
	\caption{Selected values of Kendall's $\tau$ for each choice of pair-copula families for $C_{12}$, $C_{13}$, $C_{24}$, $C_{34 \vert 2}$. See Tables \ref{tab:copulas} and \ref{tab:param} for further details on the pair-copula families used.} 
	\label{tab:tau}
\end{table}

Our selection of copula parameters is based on the bijective relationship between the parameters of the Clayton, Gumbel, and Gaussian pair-copula families and the corresponding Kendall's $\tau$. For the Student's t copula, such a bijective relationship exists only between the correlation parameter and Kendall's $\tau$, which is why we set the degrees-of-freedom parameter of each Student's t copula to $\nu = 5$ in order to allow for heavy-tailed dependence. Table \ref{tab:param} summarises the parameters $\theta$, the corresponding Kendall's correlation coefficients $\tau(\theta)$, and the respective tail-dependence coefficients $\lambda_{\text{L}}(\theta) = \lim\limits_{u \to 0} \frac{C_{\theta}(u, u)}{u}$ and $\lambda_{\text{U}}(\theta) = \lim\limits_{u \to 1} \frac{1 - 2 u + C_{\theta}(u, u)}{1 - u}$ for each pair copula $C_{\theta}$, $\theta \in \Theta$, used in the simulation study.

\bigskip

\begin{table}[htb]
	\centering
	\begin{tabular*}{\textwidth}{@{\extracolsep{\fill}}lcccccccc}
		\toprule
		Copula & \multicolumn{2}{@{}c}{Clayton} & \multicolumn{2}{@{}c}{Gumbel} & \multicolumn{2}{@{}c}{Gauss} & \multicolumn{2}{@{}c}{Student}\\
		\midrule
		Parameter(s) & $0.67$ & $6.00$ & $1.33$ & $4.00$ & $0.38$ & $0.92$ & $0.38$, $5$ & $0.92$, $5$\\
		Kendall's $\tau$ & $0.25$ & $0.75$ & $0.25$ & $0.75$ & $0.25$ & $0.75$ & $0.25$ & $0.75$\\
		Lower TDC $\lambda_{\text{L}}$ & $0.35$ & $0.89$ & $0.00$ & $0.00$ & $0.00$ & $0.00$ & $0.15$ & $0.15$\\
		Upper TDC $\lambda_{\text{U}}$ & $0.00$ & $0.00$ & $0.32$ & $0.81$ & $0.00$ & $0.00$ & $0.64$ & $0.64$\\
		\bottomrule
	\end{tabular*}
	\caption{Parameters, Kendall's correlation coefficients, and tail-dependence coefficients (TDCs) of the pair copulas used in the simulation study.}
	\label{tab:param}
\end{table}

Summing up, we have $28$ different PCBNs with $16$ different parameter configurations each, that is $448$ simulation scenarios. In each of the $448$ simulation scenarios we performed $N = 100$ simulation runs, and in each simulation run we generated $n = 1,\!000$ i.i.d.\ observations. The sampling procedure used was described in Section \ref{sec:pcbnstat}.

\bigskip

For each of the $44,\!800$ runs we applied the PC algorithm with the ten different conditional independence tests described in Section \ref{sec:pc}. Those were the widely used test for zero partial correlation (COR) and our novel vine-copula-based tests using either only C-vines (C), or only D-vines (D), or more generally R-vines (R), respectively, together with one of the Kendall's $\tau$ (K), Hoeffding's $D$ (H), or Genest and Rémillard (GR) tests for ordinary stochastic independence. Since zero partial correlation is generally a weaker property than conditional independence, we consider COR only an approximate conditional independence test serving as a benchmark. In a Gaussian framework, however, zero partial correlation is equivalent to conditional independence. This equivalence holds in particular in the scenarios featuring only Gaussian pair copulas, in which case the respective joint copula families are also Gaussian. The corresponding correlation matrices were derived in \citet{Bauer.Czado.Klein:2012}. Each test was performed at the $5 \%$ significance~level.

\subsubsection*{Results}

Let $\mathcal{G}_{f, p, r, t}$ denote the CG obtained from applying the PC algorithm with conditional-in\-de\-pen\-dence test $t \in \lbrace \text{COR}, \text{C-GR}, \text{C-H}, \text{C-K}, \text{D-GR}, \text{D-H}, \text{D-K}, \text{R-GR}, \text{R-H}, \text{R-K} \rbrace$ to the data simulated in run $r \in \lbrace 1, \ldots, 100 \rbrace$ of pair-copula scenario $f \in \lbrace 1, \ldots, 28 \rbrace$ (see Table \ref{tab:copulas}) and parameter configuration $p \in \lbrace 1, \ldots, 16 \rbrace$ (see Tables \ref{tab:tau} and \ref{tab:param}). We compared each CG $\mathcal{G}_{f, p, r, t}$ to the true essential graph $\mathcal{D}^{e}$ in Figure \ref{fig:dagessgraph}, and set $\pi_{f, p, r, t} \coloneqq 1$ if $\mathcal{G}_{f, p, r, t}$ equalled $\mathcal{D}^{e}$ and $\pi_{f, p, r, t} \coloneqq 0$ otherwise. For each pair-copula scenario $f$ and each conditional independence test $t$, we then computed the relative frequency of recovering the correct structure over all parameter configurations $p$ and all runs $r$, which we will denote by $\pi_{f, t} \coloneqq \frac{1}{1600} \sum_{p = 1}^{16} \sum_{r = 1}^{100} \pi_{f, p, r, t}$. Moreover, we determined the \emph{structural Hamming distance} (SHD) \citep{Tsamardinos.Brown.Aliferis:2006} $\delta_{f, p, r, t}$ between each CG $\mathcal{G}_{f, p, r, t}$ and $\mathcal{D}^{e}$. In short, $\delta_{f, p, r, t}$ counts the number of edges that need to be added to, removed from, directed in, or flipped in $\mathcal{G}_{f, p, r, t}$ in order to obtain $\mathcal{D}^{e}$. Hence, $\delta_{f, p, r, t}$ takes a value between zero and $\binom{\abs{V}}{2} = 6$. We again took the average over all parameter configurations $p$ and all runs $r$, yielding the mean SHD $\delta_{f, t} \coloneqq \frac{1}{1600} \sum_{p = 1}^{16} \sum_{r = 1}^{100} \delta_{f, p, r, t}$ for each pair-copula scenario $f$ and each conditional independence test $t$. The results are given in Figures \ref{fig:simstudyrec} and \ref{fig:simstudyshd}, respectively.

\bigskip

\begin{figure}[!htb]
	\centering
		\begin{subfigure}[t]{.49\textwidth}
			\centering \includegraphics[width=.99\textwidth]{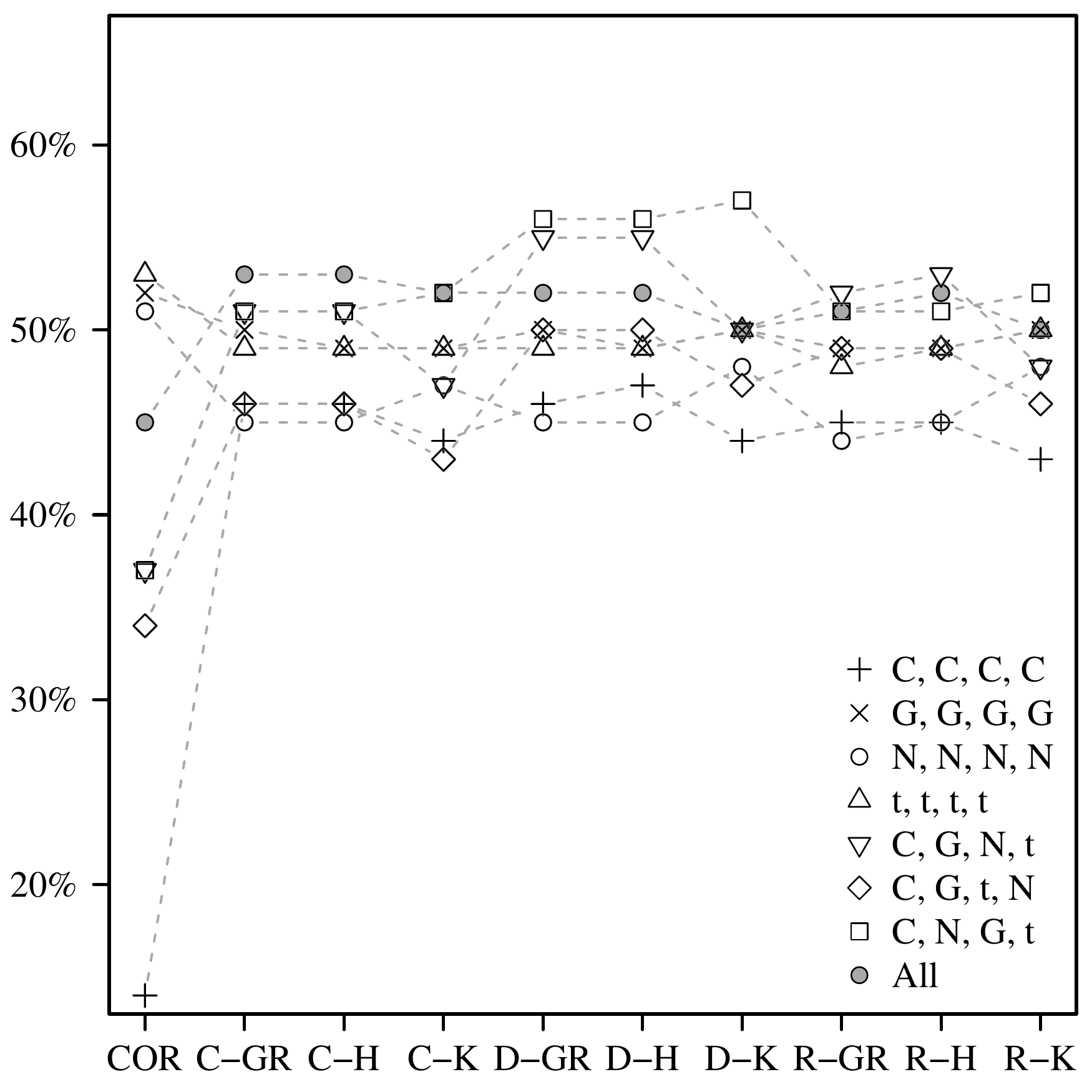}
		\end{subfigure}
		\begin{subfigure}[t]{.49\textwidth}
			\centering \includegraphics[width=.99\textwidth]{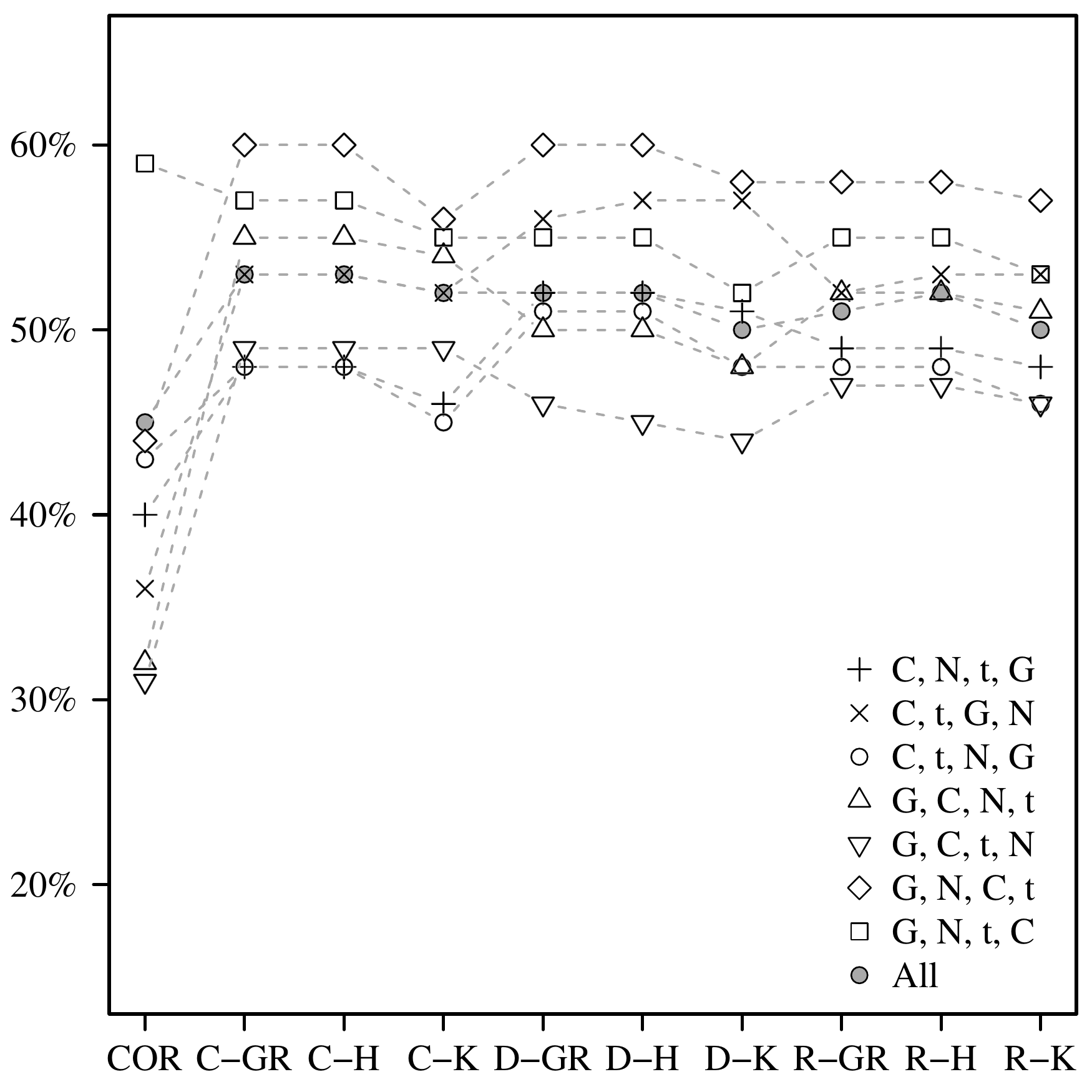}
		\end{subfigure}\\[1ex]
			\begin{subfigure}[t]{.49\textwidth}
			\centering \includegraphics[width=.99\textwidth]{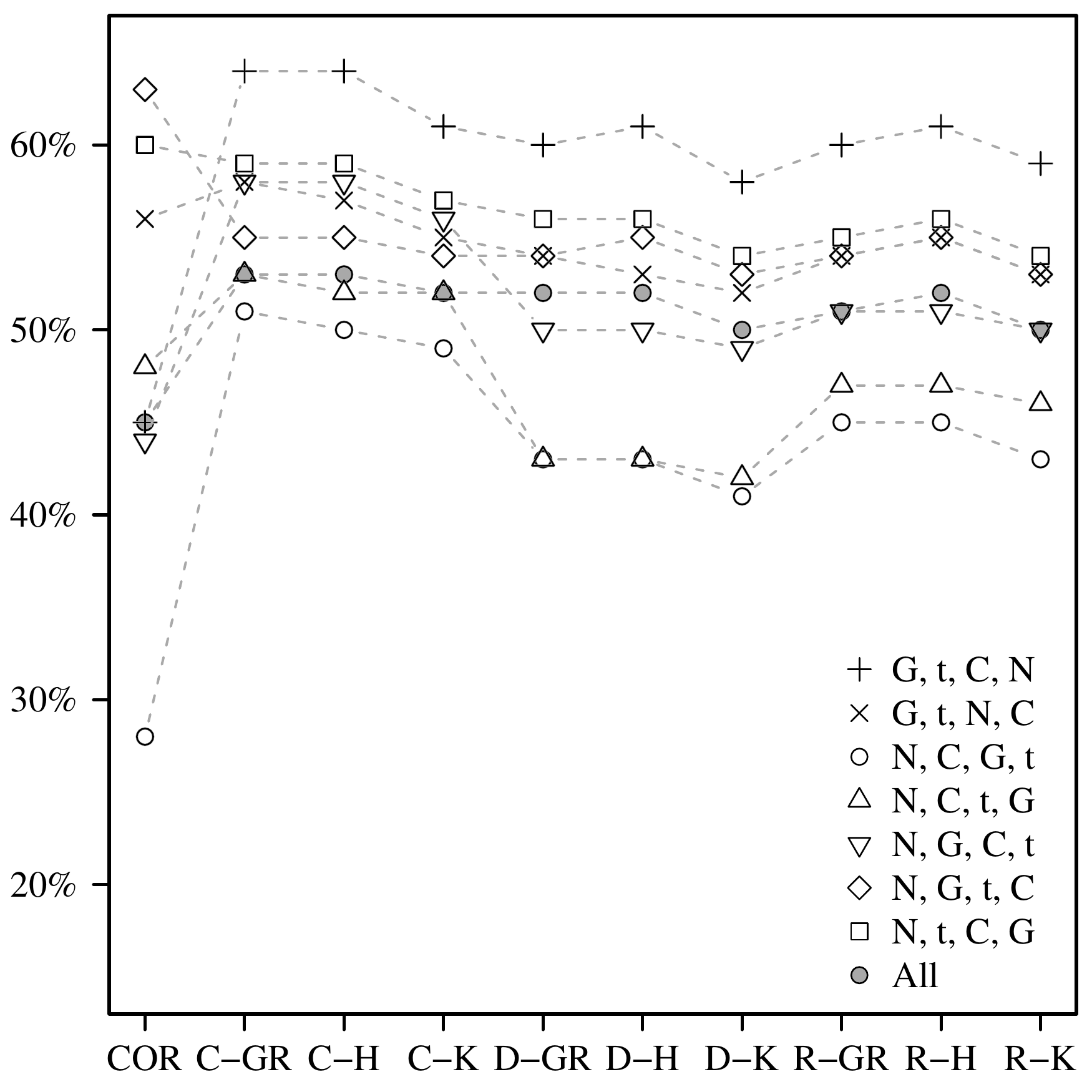}
		\end{subfigure}
		\begin{subfigure}[t]{.49\textwidth}
			\centering \includegraphics[width=.99\textwidth]{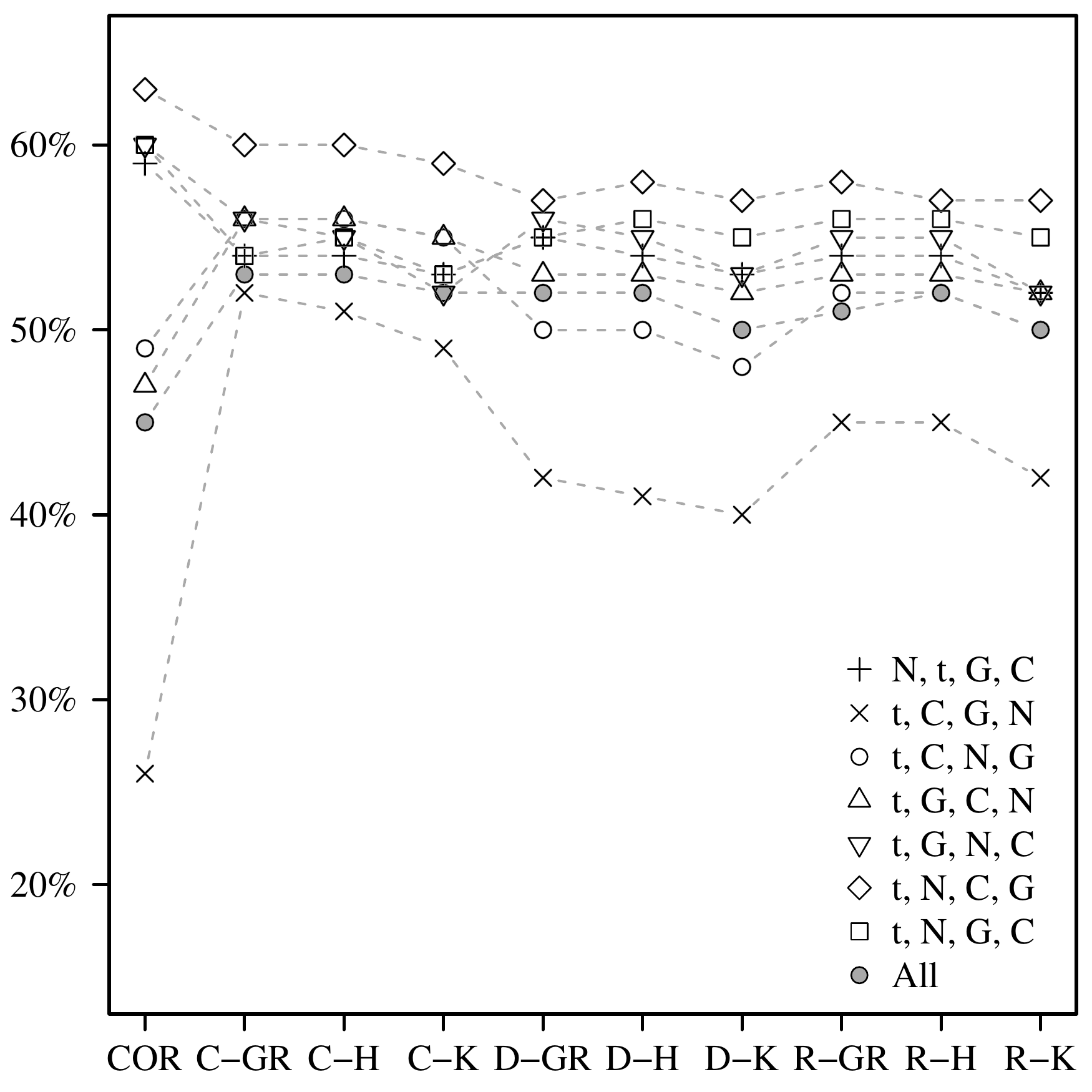}
		\end{subfigure}
	\caption{Percentage $\pi_{f, t}$ of runs in which the PC algorithm returned the correct Markov structure for each choice $f$ of pair-copula families for $C_{12}$, $C_{13}$, $C_{24}$, $C_{34 \vert 2}$ (legends) and each conditional independence test $t$ (horizontal axes) ($1600$ runs each). Copulas were chosen from the Clayton (C), Gumbel (G), Gaussian (N), and Student's t (t) pair-copula families. The percentage of correct recoveries out of all $28$ copula scenarios is given in solid grey.}
	\label{fig:simstudyrec}
\end{figure}

\begin{figure}[!htb]
	\centering
		\begin{subfigure}[t]{.49\textwidth}
			\centering \includegraphics[width=.99\textwidth]{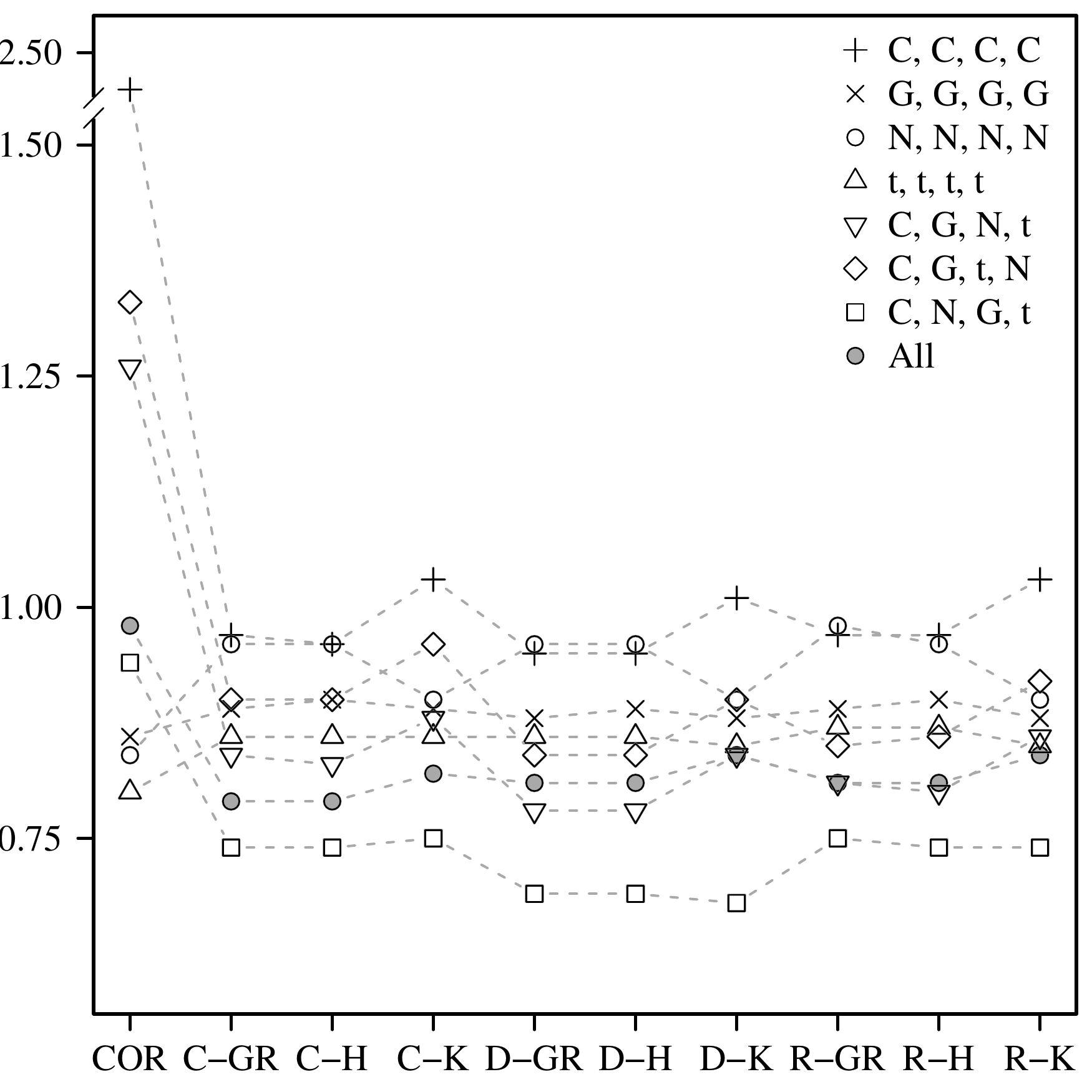}
		\end{subfigure}
		\begin{subfigure}[t]{.49\textwidth}
			\centering \includegraphics[width=.99\textwidth]{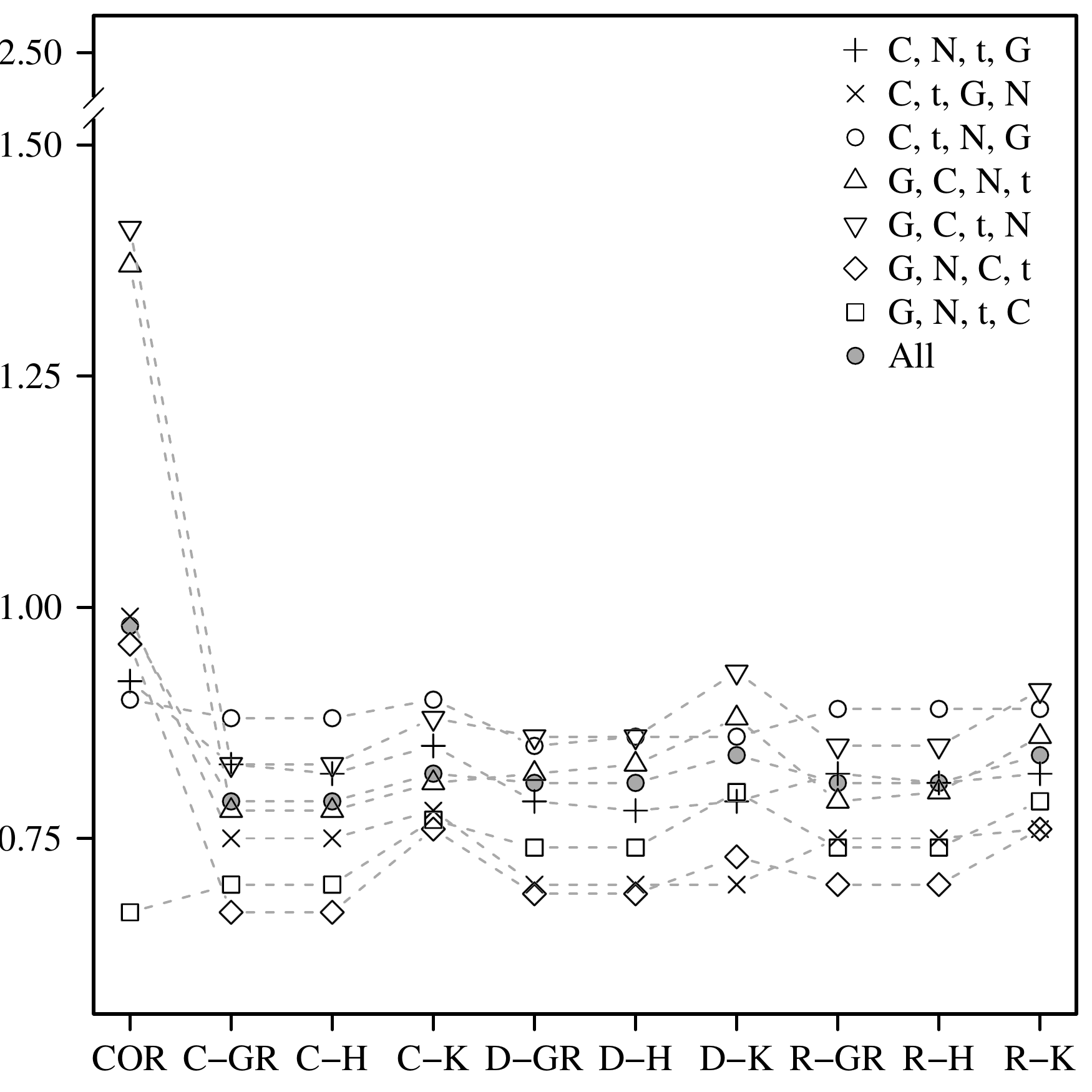}
		\end{subfigure}\\[1ex]
			\begin{subfigure}[t]{.49\textwidth}
			\centering \includegraphics[width=.99\textwidth]{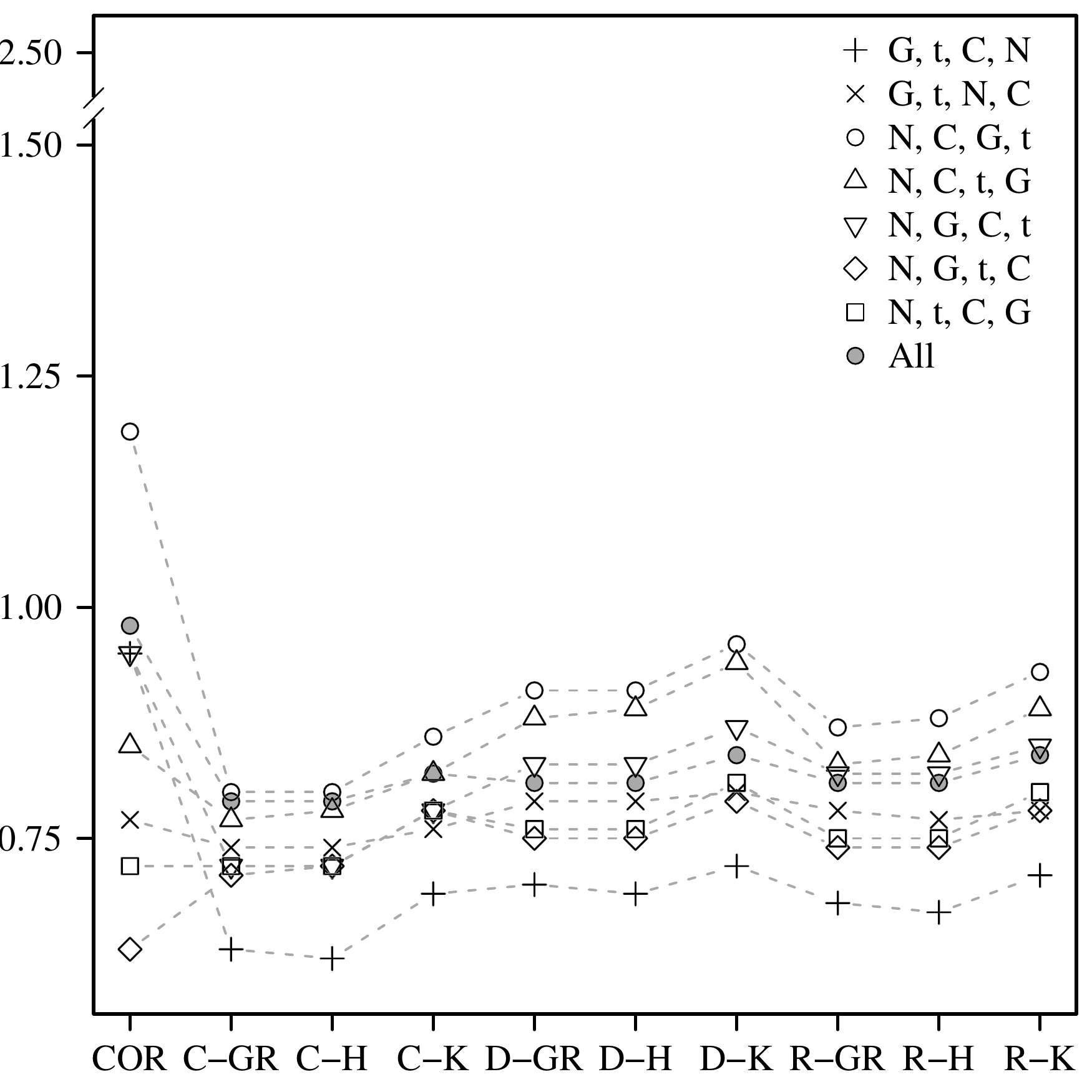}
		\end{subfigure}
		\begin{subfigure}[t]{.49\textwidth}
			\centering \includegraphics[width=.99\textwidth]{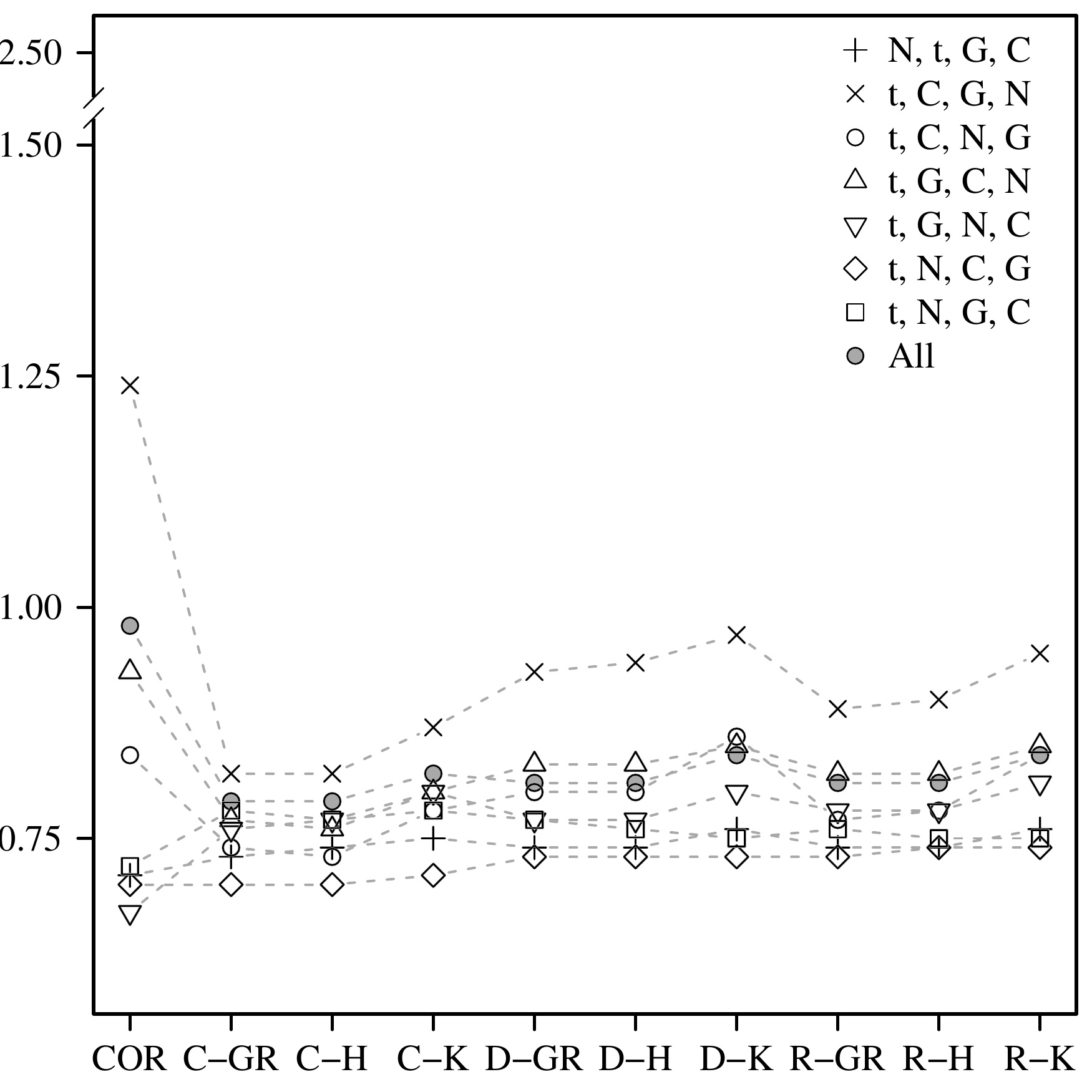}
		\end{subfigure}
	\caption{Average structural Hamming distance (SHD) $\delta_{f, t}$ between the true essential graph $\mathcal{D}^{e}$ and the chain graph $\mathcal{G}_{f, p, r, t}$ returned by the PC algorithm for each choice $f$ of pair-copula families for $C_{12}$, $C_{13}$, $C_{24}$, $C_{34 \vert 2}$ (legends) and each conditional independence test $t$ (horizontal axes) ($1600$ runs each). Copulas were chosen from the Clayton (C), Gumbel (G), Gaussian (N), and Student's t (t) pair-copula families. The average SHD over all $28$ copula scenarios is given in solid grey.}
	\label{fig:simstudyshd}
\end{figure}

Let us first consider Figure \ref{fig:simstudyrec}. The relative frequencies $\pi_{f, \text{COR}}$ range between $14 \%$ and $63 \%$, whereas for the vine-copula-based tests, $\pi_{f, t}$ ranges between $40 \%$ and $64 \%$. COR was outperformed by at least one vine-copula-based test in $18$, and by all vine-copula-based tests in $15$ out of the $28$ copula scenarios. The lowest frequency of $14 \%$ was obtained when applying the PC algorithm with COR to the data sets generated in copula scenario $1$ (numbering as in Table \ref{tab:copulas}), which features only Clayton, that is non-elliptical, copulas. By contrast, COR showed a solid performance in the elliptical-copulas-only scenarios $3$ and $4$, which is not surprising given that COR is based on the partial correlation. In $9$ out of the $28$ copula scenarios, $\pi_{f, \text{COR}}$ is lower than $40 \%$, which is the minimum frequency obtained for the vine-copula-based tests. Also, in these $9$ scenarios, the difference in relative frequencies between COR and the vine-copula-based tests ranges between $9$ and $33$ percentage points. The highest frequency of $64 \%$ was obtained in copula scenario $15$ both for the PC algorithm with C-GR and C-H, respectively. Taking means over all $28$ copula scenarios, we obtain the overall relative frequencies $\pi_{t} \coloneqq \frac{1}{28} \sum_{f = 1}^{28} \pi_{f, t}$ for all tests $t$. These overall frequencies range between $50 \%$ and $53 \%$ for the vine-copula-based tests, while $\pi_{\text{COR}} = 45 \%$. The best performances were again achieved by C-GR and C-H. However, we recommend using the R-vine-based conditional independence tests in higher dimensions since these offer more general tree structures than their C- and D-vine counterparts. Moreover, we observe that choosing H instead of GR as test for unconditional stochastic independence has only little effect on the performance of the vine-copula-based tests. By contrast, relative frequencies were, on average, slightly worse when using K instead of GR and H, respectively. Since zero Kendall's $\tau$ is generally also not equivalent to stochastic independence, we recommend using GR and H. Note that in a given copula scenario $f$ and a given parameter scenario $p$, the relative frequencies $\pi_{f, p, t} \coloneqq \frac{1}{100} \sum_{r = 1}^{100} \pi_{f, p, r, t}$ can be a lot higher than the averages displayed in Figure \ref{fig:simstudyrec}. We observed frequencies $\pi_{f, p, t}$ of up to $98 \%$. To sum up, using a vine-copula-based conditional independence test instead of COR leads to more reliable structure estimates, in particular when the data exhibit non-Gaussian, asymmetric dependence.

\bigskip

Considering only the correctly recovered Markov structures may be a too crude performance measure. Hence, the mean SHDs $\delta_{f, t}$ in Figure \ref{fig:simstudyshd} illustrate how much the results of the PC algorithm differ from the true essential graph $\mathcal{D}^{e}$. For the vine-copula-based tests, $\delta_{f, t}$ ranges between $0.62$ and $1.03$. The respective overall means $\delta_{t} \coloneqq \frac{1}{28} \sum_{f = 1}^{28} \delta_{f, t}$ lie between $0.79$ and $0.84$. Thus, on average, the results of the PC algorithm differ by less than one edge from $\mathcal{D}^{e}$. That is, if the PC algorithm yields a CG that is not equivalent to $\mathcal{D}^{e}$, then, with a high probability, CG and $\mathcal{D}^{e}$ are not too different. The lowest values of $\delta_{t}$ were again obtained for C-GR and C-H. Similarly, $\delta_{f, \text{COR}}$ ranges between $0.63$ and $2.44$, and $\delta_{\text{COR}} = 0.98$, which again shows the superiority of the vine copula approach. The worst mean SHD of $2.44$ was obtained in copula scenario $1$. Overall, we can say that the PC algorithm with either of the $9$ vine-copula-based conditional independence tests provides a suitable procedure for structure estimation in PCBNs.

\bigskip
 
We repeated the simulation study both for a significance level $\alpha$ of $1 \%$ and for a sample size $n$ of $500$. For $\alpha = 1\%$, we obtained results similar to the ones described above for $\alpha = 5 \%$. The overall relative frequencies $\pi_{t}$ were slightly lower, ranging from $44 \%$ to $47 \%$ for the vine-copula-based tests, while $\pi_{\text{COR}}$ was $43 \%$. Also, the overall mean SHDs $\delta_{t}$ ranged between $0.86$ and $0.94$ for the vine-copula-based tests, while $\pi_{\text{COR}}$ was $0.99$. The reduction in sample size to $n = 500$, on the other hand, lead to a slightly stronger decrease in the overall relative frequencies $\pi_{t}$, which then ranged between $39 \%$ and $41 \%$ for the vine-copula-based tests, while $\pi_{\text{COR}}$ was $37 \%$. Similarly, the overall mean SHDs $\delta_{t}$ ranged between $1.07$ and $1.11$ for the vine-copula-based tests, while $\pi_{\text{COR}}$ was $1.17$. Yet, both for $\alpha = 1 \%$ and for $n = 500$, the CGs returned by the PC algorithm differed on average from $\mathcal{D}^{e}$ by only one edge. The performance of the PC algorithm can thus be deemed reliable and robust.

\section{Application: Stock market indices}\label{sec:finance}

As a real-world application, we applied PCBNs to a financial data set comprising ten major international stock market indices. More precisely, we modelled the joint distribution of a portfolio of daily log-returns of the Australian All Ordinaries (AUS), the Canadian S\&P/TSX Composite Index (CAN), the Swiss Market Index (CH), the German DAX (DEU), the French CAC $40$ (FRA), the Hong Kong Hang Seng Index (HK), the Japanese Nikkei $225$ (JPN), the Singapore Straits Times Index (SGP), the UK's FTSE $100$ (UK), and the US S\&P $500$ (USA) from $1$ April $2008$ to $29$ July $2011$ ($n = 733$ observations).

\subsubsection*{Univariate time series models}\enlargethispage{\baselineskip}

Using the inference functions for margins method outlined in Section \ref{sec:vinestat}, we modelled univariate marginal distributions without regard to the dependence structure between variables. We first removed serial correlation in the ten time series of log-returns by applying an AR($1$)-GARCH($1$,$1$) filter, which accounts for conditional heteroskedasticity present in the data, see \citet{Bollerslev:1986}. The log-return $r_{i,t}$ of stock index $i \in \lbrace \text{AUS}, \text{CAN}, \text{CH}, \text{DEU}, \text{FRA}, \text{HK}, \text{JPN}, \text{SGP}, \text{UK}, \text{USA} \rbrace$ at time $t$ can thus be written as
\begin{equation*}
	r_{i, t} = \mu_{i} + a_{i} \, r_{i, t - 1} + \varepsilon_{i, t}, \quad \varepsilon_{i, t} = \sigma_{i, t} \, z_{i, t}, \quad \sigma_{i, t}^{2} = \omega_{i} + \alpha_{i} \, \varepsilon_{i, t - 1}^{2} + \beta_{i} \, \sigma_{i, t - 1}^{2},
\end{equation*}
with parameters $\omega_{i} > 0$, $\alpha_{i}, \beta_{i} \ge 0$ such that $\alpha_{i} + \beta_{i} < 1$, $\abs{a_{i}} < 1$, and $\mu_{i} \in \Real$, where $\EW{[z_{t, i}]} = 0$ and $\Var{[z_{t, i}]} = 1$. The standardised residuals $z_{i, t}$ are assumed to follow a skewed Student's t distribution with $\nu_{i}$ degrees of freedom and skewness parameter $\gamma_{i}$, see \citet[Section $3.2$]{McNeil.Frey.Embrechts:2005}. The corresponding cdf will be denoted by $\text{t}_{\nu_{i}, \gamma_{i}}$. ML parameter estimates and corresponding standard errors derived from numerical evaluation of the Hessian of the AR($1$)-GARCH($1$,$1$) parameters are given in Appendix \ref{app:argarch}. We assessed model fit using the following statistical tests: the Ljung-Box test \citep{Ljung.Box:1978} with null hypothesis that there is no autocorrelation left in the residuals and squared residuals, the Langrange-multiplier ARCH test \citep{Engle:1982} with null hypothesis that the residuals exhibit no conditional heteroskedasticity, and the Kolmogorov-Smirnov test \citep[Section $6.2$]{Conover:1999} with null hypothesis that the residuals follow a skewed Student's t distribution. None of these null hypotheses could be rejected at the $5 \%$ significance level. We then transformed the standardised residuals to uniformly distributed observations $u_{i,t} \coloneqq \text{t}_{\nu_{i}, \gamma_{i}}{\left( \sqrt{\frac{\nu_{i}}{\nu_{i} - 2} + \frac{2 \, \nu_{i}^{2} \, \gamma_{i}^{2}}{(\nu_{i} - 2)^{2} \, (\nu_{i} - 4)}} \, z_{i, t} \right)}$, before modelling the joint dependence structure of the ten time series of log-returns by a PCBN.

\subsubsection*{Estimating the conditional independence structure with the PC algorithm}

We estimated the conditional independence structure of the ten time series of log-returns by applying the PC algorithm with either of the ten conditional independence tests COR, C-GR, C-H, C-K, D-GR, D-H, D-K, R-GR, R-H, and R-K described in Section \ref{sec:pc} (with notation as in Section \ref{sec:simstudy}) to the transformed observations $u_{i,t}$. All tests were performed at the $5 \%$ significance level. As a result, we obtained three different essential graphs $\mathcal{D}_{\text{COR}}^{e}$, $\mathcal{D}_{\text{GR}, \text{H}}^{e}$, and $\mathcal{D}_{\text{K}}^{e}$, of which the first was returned by the PC algorithm with COR, the second was returned by the PC algorithm with either of C-GR, C-H, D-GR, D-H, R-GR, and R-H, and the third was returned by the PC algorithm with either of C-K, D-K, and R-K, respectively. Obviously, a restriction of the class of R-vines to C- or D-vines had not influence on the resulting essential graph. We then oriented undirected edges in the obtained essential graphs, as described in Section \ref{sec:bn}, in order to obtain DAGs $\mathcal{D}_{\text{COR}}$, $\mathcal{D}_{\text{GR}, \text{H}}$, and $\mathcal{D}_{\text{K}}$ from the Markov-equivalence classes represented by $\mathcal{D}_{\text{COR}}^{e}$, $\mathcal{D}_{\text{GR}, \text{H}}^{e}$, and $\mathcal{D}_{\text{K}}^{e}$, respectively. More precisely, $\mathcal{D}_{\text{COR}}^{e}$ contained the two undirected edges $\text{AUS} \edge \text{HK}$ and $\text{CH} \edge \text{DEU}$, which we replaced by $\text{AUS} \rightarrow \text{HK}$ and $\text{CH} \rightarrow \text{DEU}$, respectively, based on the heuristic rule that $\mathcal{D}_{\text{GR}, \text{H}}^{e}$ and $\mathcal{D}_{\text{K}}^{e}$ already contained $\text{AUS} \rightarrow \text{HK}$ and $\text{CH} \rightarrow \text{DEU}$. Similarly, we oriented $\text{AUS} \edge \text{JPN}$ into $\text{AUS} \leftarrow \text{JPN}$ in $\mathcal{D}_{\text{GR}, \text{H}}$ and $\mathcal{D}_{\text{K}}$ since $\mathcal{D}_{\text{COR}}^{e}$ already contained $\text{AUS} \leftarrow \text{JPN}$. The DAGs $\mathcal{D}_{\text{COR}}$, $\mathcal{D}_{\text{GR}, \text{H}}$, and $\mathcal{D}_{\text{K}}$ are given in Figure \ref{fig:financedag}.

\bigskip

\begin{figure}[htb]
	\centering
		\begin{subfigure}[t]{.47\textwidth}
			\centering \includegraphics[width=\textwidth]{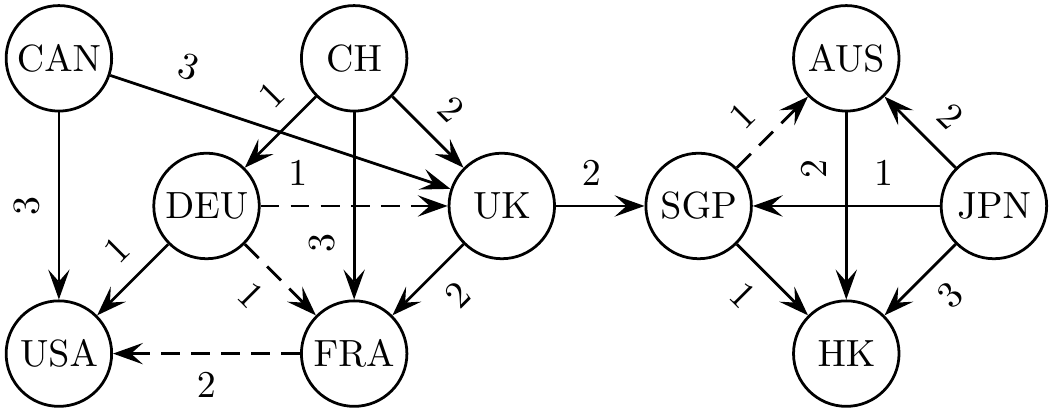}
		\end{subfigure}\qquad
		\begin{subfigure}[t]{.47\textwidth}
			\centering \includegraphics[width=\textwidth]{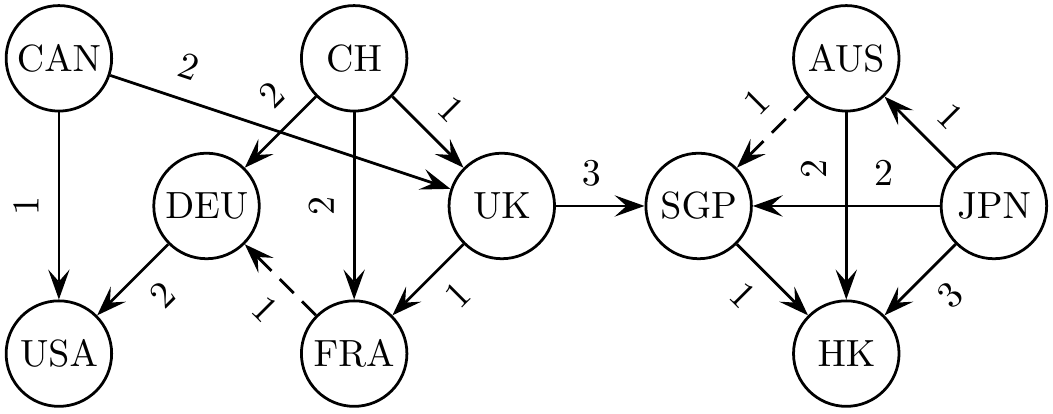}
		\end{subfigure}\\[15pt]
		\begin{subfigure}[t]{.47\textwidth}
			\centering \includegraphics[width=\textwidth]{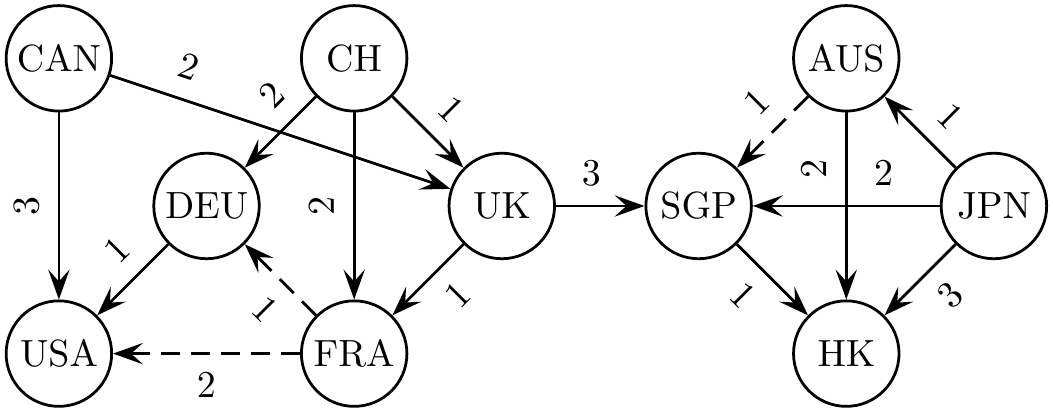}
		\end{subfigure}
	\caption{DAGs $\mathcal{D}_{\text{COR}}$ (top left), $\mathcal{D}_{\text{GR}, \text{H}}$ (top right), $\mathcal{D}_{\text{K}}$ (bottom) returned by the PC algorithm with different conditional independence tests when estimating the Markov structure of the ten time series AUS, CAN, CH, DEU, FRA, HK, JPN, SGP, UK, USA of daily log-returns. Solid edges appear in all three DAGs. Edge labels indicate parent orderings, that is, for instance, $\text{CAN} <_{\text{USA}} \text{DEU} <_{\text{USA}} \text{FRA}$ in $\mathcal{D}_{\text{COR}}$.}
	\label{fig:financedag}
\end{figure}

In all three DAGs in Figure \ref{fig:financedag}, the Asian-Pacific indices AUS, HK, JPN, and SGP are mutually adjacent, and so are the two North American indices CAN and USA. The same holds true for the European indices CH, DEU, FRA, and UK in DAG $\mathcal{D}_{\text{COR}}$, while DEU and UK are non-adjacent in $\mathcal{D}_{\text{GR}, \text{H}}$ and $\mathcal{D}_{\text{K}}$. A probability measure satisfying the Markov properties represented by either $\mathcal{D}_{\text{GR}, \text{H}}$ or $\mathcal{D}_{\text{K}}$, respectively, observes the conditional independence restriction $\cind{\text{DEU}}{\text{UK}}{\lbrace \text{CH}, \text{FRA} \rbrace}$. All further conditional independence restrictions represented by the DAGs in Figure \ref{fig:financedag} involve indices in at least two of the above given regions Asia-Pacific, Europe, and North America. We hence observe a strong geographical clustering of dependences. Moreover, all three DAGs in Figure \ref{fig:financedag} represent the conditional independence restriction $\cind{\lbrace \text{AUS}, \text{HK}, \text{JPN}, \text{SGP} \rbrace}{\lbrace \text{CAN}, \text{USA} \rbrace}{\lbrace \text{CH}, \text{DEU}, \text{FRA}, \text{UK} \rbrace}$, that is, $\cind{\text{Asia-Pacific}}{\text{North America}}{\text{Europe}}$. Note that Markov properties alone are not sufficient for deriving causal relations within the analysed data (see, for instance, the undirected edges in an essential graph), but they can be used as a starting point for further research in that direction.

\bigskip

A well-ordering for $\mathcal{D}_{\text{COR}}$ is given by $1 \mapsto \text{CAN}$, $2 \mapsto \text{CH}$, $3 \mapsto \text{DEU}$, $4 \mapsto \text{UK}$, $5 \mapsto \text{FRA}$, $6 \mapsto \text{USA}$, $7 \mapsto \text{JPN}$, $8 \mapsto \text{SGP}$, $9 \mapsto \text{AUS}$, $10 \mapsto \text{HK}$. Similarly, we obtain a well-ordering for $\mathcal{D}_{\text{GR}, \text{H}}$ and $\mathcal{D}_{\text{K}}$, respectively, by mapping $1 \mapsto \text{CAN}$, $2 \mapsto \text{CH}$, $3 \mapsto \text{UK}$, $4 \mapsto \text{FRA}$, $5 \mapsto \text{DEU}$, $6 \mapsto \text{USA}$, $7 \mapsto \text{JPN}$, $8 \mapsto \text{AUS}$, $9 \mapsto \text{SGP}$, $10 \mapsto \text{HK}$. We determined parent orderings for the three DAGs in Figure \ref{fig:financedag} in two steps. First, we applied the greedy-type procedure with Kendall's $\tau$ edge weights described in Section \ref{sec:pcbnstat}, and second, we permuted some of the orderings obtained in step one to reduce the number of integrals in the corresponding pair-copula decompositions and thus the computational complexity. More precisely, we changed $\text{JPN} <_{\text{AUS}} \text{SGP}$ and $\text{CAN} <_{\text{USA}} \text{DEU} <_{\text{USA}} \text{FRA}$ in DAG $\mathcal{D}_{\text{COR}}$ into $\text{SGP} <_{\text{AUS}} \text{JPN}$ and $\text{DEU} <_{\text{USA}} \text{FRA} <_{\text{USA}} \text{CAN}$, respectively, and $\text{CAN} <_{\text{USA}} \text{DEU} <_{\text{USA}} \text{FRA}$ in DAG $\mathcal{D}_{\text{K}}$ into $\text{DEU} <_{\text{USA}} \text{FRA} <_{\text{USA}} \text{CAN}$. The resulting parent orderings for $\mathcal{D}_{\text{COR}}$,  $\mathcal{D}_{\text{GR}, \text{H}}$, and $\mathcal{D}_{\text{K}}$, respectively, are displayed in Figure \ref{fig:financedag}.

\subsubsection*{Pair-copula selection and ML estimation}

Having fixed the parent orderings for the three PCBNs corresponding to $\mathcal{D}_{\text{COR}}$,  $\mathcal{D}_{\text{GR}, \text{H}}$, and $\mathcal{D}_{\text{K}}$, respectively, we next selected parametric copula families using the AIC as a selection criterion. We considered the Clayton, Frank, Gaussian, Gumbel, and Student's t copula families as well as reflected versions of the Clayton and Gumbel copula families in order to account for negative correlations. We then computed sequential ML estimates of the parameters of the so specified PCBNs. Selected pair-copula families, corresponding sequential ML estimates, bootstrapped standard errors, and estimates of Kendall's $\tau$ are given in Table \ref{tab:financeparam}. The respective maximised log-likelihoods and AIC values are summarised in Table \ref{tab:financeloglik}. Moreover, we compared model fit to the respective Gaussian PCBNs comprising only Gaussian pair copulas. Corresponding ML estimates, standard errors, and estimates of Kendall's $\tau$ are again found in Table \ref{tab:financeparam}, while maximised log-likelihoods and AIC values are given in Table \ref{tab:financeloglik}.

\bigskip

\begin{table}[!htb]
	\centering
	\footnotesize
	\begin{tabular*}{\linewidth}{@{\extracolsep{\fill}}l@{}c@{}c@{}r@{}c@{}c@{}r@{}c@{}c@{}r@{}c@{}}
		\toprule
		DAG & & \multicolumn{3}{@{}c}{$\mathcal{D}_{\text{COR}}$} & \multicolumn{3}{@{}c}{$\mathcal{D}_{\text{GR}, \text{H}}$} & \multicolumn{3}{@{}c}{$\mathcal{D}_{\text{K}}$}\\
		\midrule
		& & & Parameters & $\widehat{\tau}$ & & Parameters & $\widehat{\tau}$ & & Parameters & $\widehat{\tau}$\\
		\midrule
		JPN $\rightarrow$ AUS & $\text{n}\mathcal{G}$ & t & $0.56, 10.3$ $(0.03, 4.1)$ & $0.38$ & t & $0.73, 8.5$ $(0.02, 3.4)$ & $0.52$ & t & $0.73, 8.5$ $(0.02, 3.4)$ & $0.52$\\
		& $\mathcal{G}$ & N & $0.55$ $(0.03)$ & $0.37$ & N & $0.72$ $(0.02)$ & $0.52$ & N & $0.72$ $(0.02)$ & $0.52$\\
		SGP $\rightarrow$ AUS & $\text{n}\mathcal{G}$ & t & $0.64, 9.4$ $(0.02, 4.1)$ & $0.44$ &  &  &  &  &  & \\
		& $\mathcal{G}$ & N & $0.64$ $(0.02)$ & $0.44$ &  &  &  &  &  & \\
		\midrule
		CH $\rightarrow$ DEU & $\text{n}\mathcal{G}$ & t & $0.83, 5.4$ $(0.01, 1.8)$ & $0.63$ & F & $0.98$ $(0.23)$ & $0.11$ & F & $0.98$ $(0.23)$ & $0.11$\\
		& $\mathcal{G}$ & N & $0.82$ $(0.01)$ & $0.62$ & N & $0.14$ $(0.04)$ & $0.09$ & N & $0.14$ $(0.04)$ & $0.09$\\
		FRA $\rightarrow$ DEU & $\text{n}\mathcal{G}$ &  &  &  & t & $0.94, 3.8$ $(0.01, 1.2)$ & $0.78$ & t & $0.94, 3.8$ $(0.01, 1.2)$ & $0.78$\\
		& $\mathcal{G}$ &  &  &  & N & $0.93$ $(0.01)$ & $0.76$ & N & $0.93$ $(0.01)$ & $0.76$\\
		\midrule
		CH $\rightarrow$ FRA & $\text{n}\mathcal{G}$ & F & $1.55$ $(0.24)$ & $0.17$ & t & $0.44, 9.0$ $(0.03, 4.4)$ & $0.29$ & t & $0.44, 9.0$ $(0.03, 4.3)$ & $0.29$\\
		& $\mathcal{G}$ & N & $0.28$ $(0.04)$ & $0.18$ & N & $0.44$ $(0.03)$ & $0.29$ & N & $0.44$ $(0.03)$ & $0.29$\\
		DEU $\rightarrow$ FRA & $\text{n}\mathcal{G}$ & t & $0.94, 3.8$ $(0.01, 1.1)$ & $0.78$ &  &  &  &  &  & \\
		& $\mathcal{G}$ & N & $0.93$ $(0.01)$ & $0.76$ &  &  &  &  &  & \\
		UK $\rightarrow$ FRA & $\text{n}\mathcal{G}$ & t & $0.57, 6.9$ $(0.03, 3.3)$ & $0.38$ & t & $0.92, 7.3$ $(0.01, 2.8)$ & $0.74$ & t & $0.92, 7.3$ $(0.01, 2.8)$ & $0.74$\\
		& $\mathcal{G}$ & N & $0.59$ $(0.04)$ & $0.40$ & N & $0.92$ $(0.01)$ & $0.74$ & N & $0.92$ $(0.01)$ & $0.74$\\
		\midrule
		AUS $\rightarrow$ HK & $\text{n}\mathcal{G}$ & t & $0.35, 13.4$ $(0.03, 4.8)$ & $0.22$ & t & $0.35, 13.4$ $(0.03, 4.8)$ & $0.22$ & t & $0.35, 13.4$ $(0.03, 4.8)$ & $0.22$\\
		& $\mathcal{G}$ & N & $0.36$ $(0.04)$ & $0.24$ & N & $0.36$ $(0.04)$ & $0.24$ & N & $0.36$ $(0.04)$ & $0.24$\\
		JPN $\rightarrow$ HK & $\text{n}\mathcal{G}$ & t & $0.20, 20.0$ $(0.04, 2.8)$ & $0.13$ & N & $0.20$ $(0.04)$ & $0.13$ & N & $0.20$ $(0.04)$ & $0.13$\\
		& $\mathcal{G}$ & N & $0.20$ $(0.04)$ & $0.13$ & N & $0.20$ $(0.04)$ & $0.13$ & N & $0.20$ $(0.04)$ & $0.13$\\
		SGP $\rightarrow$ HK & $\text{n}\mathcal{G}$ & t & $0.78, 5.7$ $(0.01, 1.9)$ & $0.57$ & t & $0.78, 5.7$ $(0.01, 2.0)$ & $0.57$ & t & $0.78, 5.7$ $(0.01, 2.0)$ & $0.57$\\
		& $\mathcal{G}$ & N & $0.78$ $(0.02)$ & $0.57$ & N & $0.78$ $(0.02)$ & $0.57$ & N & $0.78$ $(0.02)$ & $0.57$\\
		\midrule
		AUS $\rightarrow$ SGP & $\text{n}\mathcal{G}$ &  &  &  & t & $0.64, 9.4$ $(0.02, 4.0)$ & $0.44$ & t & $0.64, 9.4$ $(0.02, 4.0)$ & $0.44$\\
		& $\mathcal{G}$ &  &  &  & N & $0.64$ $(0.02)$ & $0.44$ & N & $0.64$ $(0.02)$ & $0.44$\\
		JPN $\rightarrow$ SGP & $\text{n}\mathcal{G}$ & t & $0.60, 11.6$ $(0.02, 4.5)$ & $0.41$ & N & $0.27$ $(0.03)$ & $0.17$ & N & $0.27$ $(0.03)$ & $0.17$\\
		& $\mathcal{G}$ & N & $0.60$ $(0.02)$ & $0.41$ & N & $0.26$ $(0.03)$ & $0.17$ & N & $0.26$ $(0.04)$ & $0.17$\\
		UK $\rightarrow$ SGP & $\text{n}\mathcal{G}$ & N & $0.34$ $(0.03)$ & $0.22$ & N & $0.27$ $(0.03)$ & $0.18$ & N & $0.27$ $(0.03)$ & $0.18$\\
		& $\mathcal{G}$ & N & $0.34$ $(0.03)$ & $0.22$ & N & $0.27$ $(0.03)$ & $0.18$ & N & $0.27$ $(0.03)$ & $0.18$\\
		\midrule
		CAN $\rightarrow$ UK & $\text{n}\mathcal{G}$ & SG & $1.13$ $(0.03)$ & $0.12$ & N & $0.31$ $(0.03)$ & $0.20$ & N & $0.31$ $(0.03)$ & $0.20$\\
		& $\mathcal{G}$ & N & $0.21$ $(0.04)$ & $0.13$ & N & $0.31$ $(0.03)$ & $0.20$ & N & $0.31$ $(0.03)$ & $0.20$\\
		CH $\rightarrow$ UK & $\text{n}\mathcal{G}$ & t & $0.36, 9.6$ $(0.03, 4.5)$ & $0.23$ & t & $0.83, 8.6$ $(0.01, 3.7)$ & $0.62$ & t & $0.83, 8.6$ $(0.01, 3.7)$ & $0.62$\\
		& $\mathcal{G}$ & N & $0.39$ $(0.04)$ & $0.25$ & N & $0.83$ $(0.01)$ & $0.62$ & N & $0.83$ $(0.01)$ & $0.62$\\
		DEU $\rightarrow$ UK & $\text{n}\mathcal{G}$ & t & $0.88, 7.3$ $(0.01, 3.0)$ & $0.69$ &  &  &  &  &  & \\
		& $\mathcal{G}$ & N & $0.88$ $(0.01)$ & $0.68$ &  &  &  &  &  & \\
		\midrule
		CAN $\rightarrow$ USA & $\text{n}\mathcal{G}$ & N & $0.48$ $(0.03)$ & $0.32$ & t & $0.75, 6.7$ $(0.02, 3.0)$ & $0.54$ & N & $0.52$ $(0.03)$ & $0.35$\\
		& $\mathcal{G}$ & N & $0.48$ $(0.03)$ & $0.32$ & N & $0.75$ $(0.01)$ & $0.54$ & N & $0.52$ $(0.03)$ & $0.35$\\
		DEU $\rightarrow$ USA & $\text{n}\mathcal{G}$ & t & $0.71, 6.8$ $(0.02, 3.0)$ & $0.51$ & t & $0.47, 12.7$ $(0.03, 4.9)$ & $0.31$ & t & $0.71, 6.8$ $(0.02, 3.0)$ & $0.51$\\
		& $\mathcal{G}$ & N & $0.71$ $(0.02)$ & $0.50$ & N & $0.47$ $(0.03)$ & $0.31$ & N & $0.71$ $(0.02)$ & $0.50$\\
		FRA $\rightarrow$ USA & $\text{n}\mathcal{G}$ & t & $0.19, 9.2$ $(0.04, 4.5)$ & $0.12$ &  &  &  & t & $0.19, 9.2$ $(0.04, 4.4)$ & $0.12$\\
		& $\mathcal{G}$ & N & $0.21$ $(0.05)$ & $0.14$ &  &  &  & N & $0.21$ $(0.05)$ & $0.14$\\
 		\bottomrule
	\end{tabular*}
	\caption{Selected pair-copula families, sequential ML estimates, standard errors (parentheses), and estimates of Kendall's $\tau$ for the Gaussian ($\mathcal{G}$) and non-Gaussian ($\text{n}\mathcal{G}$) PCBNs corresponding to the DAGs in Figure \ref{fig:financedag}. Copulas include the Frank (F), Gaussian (N), Survival-Gumbel (SG), and Student's t (t) pair-copula families.} 
	\label{tab:financeparam}
\end{table}

\begin{table}[htb]
	\centering
	\begin{tabular*}{.75\textwidth}{@{\extracolsep{\fill}}lcrrr@{}}
		\toprule
		DAG & & LL & \# Parameters & AIC\\
		\midrule
		$\mathcal{D}_{\text{COR}}$ & $\text{n}\mathcal{G}$ & $3397.0$ & $30$ & $-6734.0$\\
		& $\mathcal{G}$ & $3264.6$ & $17$ & $-6495.3$\\
		\midrule
		$\mathcal{D}_{\text{GR}, \text{H}}$ & $\text{n}\mathcal{G}$ & $3412.8$ & $25$ & $-6775.6$\\
		& $\mathcal{G}$ & $3285.5$ & $15$ & $-6540.9$\\
		\midrule
		$\mathcal{D}_{\text{K}}$ & $\text{n}\mathcal{G}$ & $3401.9$ & $26$ & $-6751.8$\\
		& $\mathcal{G}$ & $3264.1$ & $16$ & $-6496.3$\\
		\bottomrule
	\end{tabular*}
	\caption{Maximised log-likelihoods, numbers of parameters, and AIC values for the Gaussian ($\mathcal{G}$) and non-Gaussian ($\text{n}\mathcal{G}$) PCBNs corresponding to the DAGs in Figure \ref{fig:financedag}. Sequential ML estimates of the corresponding parameters are given in Table \ref{tab:financeparam}.} 
	\label{tab:financeloglik}
\end{table}

According to the AIC, the best fit was obtained by the non-Gaussian PCBN with DAG $\mathcal{D}_{\text{GR}, \text{H}}$, followed by the non-Gaussian PCBNs associated to $\mathcal{D}_{\text{K}}$ and $\mathcal{D}_{\text{COR}}$, respectively. Applying the Vuong test with AIC correction \citep{Vuong:1989} to the non-Gaussian PCBNs at the $5 \%$ level, we cannot reject the null hypothesis that all three models  are equally close to the true model. A similar statement holds for the Gaussian PCBNs. However, using the Vuong test for model selection between a Gaussian and a non-Gaussian PCBN will always decide in favor of the non-Gaussian model, which again shows the latter models' superiority. \enlargethispage{\baselineskip}
This is, of course, to be expected since financial returns often exhibit heavy-tailed dependence, which is validated here by the low estimates of the degrees-of-freedom parameters of the Student's t copulas.

\section{Conclusion}\label{sec:conclusion} \enlargethispage{\baselineskip}

We have investigated a novel procedure for constructing non-Gaussian continuous Bayesian networks that uses bivariate copulas as building blocks. The resulting models can accommodate a great variety of distributional features to be modelled such as tail-dependence and non-linear, asymmetric dependence. We have provided an algorithm for deriving explicit representations of the corresponding log-likelihoods, as well as routines for random sampling and model selection.

\bigskip

Depending on the underlying DAG and the corresponding parent orderings, the evaluation of the log-likelihood of a PCBN may involve high-dimensional numerical integration and hence considerable computational effort. We have presented a greedy procedure for selecting the parent orderings of the vertices of the underlying DAG, which is based on the idea of modelling strongest dependences in the unconditional pair-copulas. In Section \ref{sec:finance}, we introduced an additional selection step, in which some of the parent sets were rearranged in order to reduce the number of integrals in the corresponding likelihood decompositions. It would be desirable to have theoretical results on the relationship between parent orderings and the number and complexity of integrals. \citet{Bauer.Czado.Klein:2012} suggested to replace some or all of the integrals by non-parametric kernel conditional cdf estimators. Another way of reducing computational complexity is to consider sequential instead of joint ML estimates.

\bigskip

We used vine copula models to derive a novel test for conditional independence of continuous random variables. The quality of the test, by design, greatly benefits from the ongoing research on vine copulas. In combination with the PC algorithm, we obtained a structure estimation procedure for non-Gaussian PCBNs, which proved to be reliable in the simulation study in Section \ref{sec:simstudy}. One may investigate the performance of other conditional independence tests like \citet{Zhang.Peters.Janzing.Schoelkopf:2011}, as well as of other structure estimation algorithms. Also, recall that by \citet{Meek:1995}, constraint-based estimation algorithms can be adapted to incorporate existing expert knowledge. The distributional flexibility of pair-copula Bayesian networks may become even more apparent in application areas other than finance.

\section*{Acknowledgements}

The authors are very grateful to Peter Hepperger for his help in implementing the algorithms of Section \ref{sec:pcbn} in \texttt{C++}. The computer programs were tested on a Linux cluster supported by the DFG (German Research Foundation). Alexander Bauer acknowledges the support of the TUM Graduate School's Faculty Graduate Center ISAM (International School of Applied Mathematics) at the Technische Universität München.

\appendix
		
\section{Estimated AR-GARCH parameters and standard errors}\label{app:argarch}

\begin{table}[htb]
	\centering
	\small
	\begin{tabular*}{\textwidth}{@{\extracolsep{\fill}}lrrrrrrr@{}}
		\toprule
		& $\mu$ $[\times 10^{3}]$ & $a$ & $\omega$ $[\times 10^{5}]$ & $\alpha$ & $\beta$ & $\nu$ & $\gamma$\\
		\midrule
		AUS & $0.26$ $(0.41)$ & $0.01$ $(0.04)$ & $0.18$ $(3.77)$ & $0.08$ $(0.02)$ & $0.91$ $(0.02)$ & $10.33$ $(3.46)$ & $0.90$ $(0.05)$\\ 
		CAN & $0.50$ $(0.38)$ & $-0.01$ $(0.04)$ & $0.17$ $(3.83)$ & $0.09$ $(0.02)$ & $0.90$ $(0.02)$ & $10.74$ $(3.73)$ & $0.80$ $(0.05)$\\ 
		CH & $0.08$ $(0.38)$ & $0.02$ $(0.04)$ & $0.38$ $(4.01)$ & $0.12$ $(0.03)$ & $0.86$ $(0.02)$ & $7.64$ $(2.04)$ & $0.92$ $(0.05)$\\ 
		DEU & $0.66$ $(0.49)$ & $-0.03$ $(0.04)$ & $0.31$ $(4.03)$ & $0.08$ $(0.02)$ & $0.91$ $(0.02)$ & $7.77$ $(2.39)$ & $0.93$ $(0.05)$\\ 
		FRA & $0.23$ $(0.54)$ & $-0.02$ $(0.04)$ & $0.59$ $(4.24)$ & $0.09$ $(0.02)$ & $0.89$ $(0.02)$ & $8.29$ $(2.44)$ & $0.94$ $(0.05)$\\ 
		HK & $0.36$ $(0.55)$ & $-0.02$ $(0.04)$ & $0.23$ $(4.07)$ & $0.07$ $(0.02)$ & $0.93$ $(0.01)$ & $8.23$ $(2.25)$ & $0.97$ $(0.05)$\\ 
		JPN & $0.27$ $(0.53)$ & $-0.06$ $(0.04)$ & $0.85$ $(4.39)$ & $0.12$ $(0.03)$ & $0.85$ $(0.02)$ & $15.33$ $(7.85)$ & $0.87$ $(0.05)$\\ 
		SGP & $0.57$ $(0.40)$ & $-0.01$ $(0.04)$ & $0.24$ $(3.98)$ & $0.09$ $(0.02)$ & $0.90$ $(0.02)$ & $4.92$ $(0.89)$ & $1.03$ $(0.05)$\\ 
		UK & $0.56$ $(0.44)$ & $-0.01$ $(0.04)$ & $0.37$ $(4.02)$ & $0.09$ $(0.02)$ & $0.89$ $(0.02)$ & $8.42$ $(2.60)$ & $0.92$ $(0.05)$\\ 
		USA & $0.76$ $(0.43)$ & $-0.07$ $(0.04)$ & $0.23$ $(4.03)$ & $0.10$ $(0.02)$ & $0.89$ $(0.02)$ & $7.22$ $(2.15)$ & $0.84$ $(0.04)$\\ 
		\bottomrule
	\end{tabular*}
	\caption{ML estimates and standard errors (in parentheses) of AR($1$)-GARCH($1$,$1$) parameters for the ten time series of daily log-returns analysed in Section \ref{sec:finance}.} 
	\label{tab:argarch}
\end{table}

\end{document}